\newtheorem{theorem}{Theorem}
\newtheorem{lemma}[theorem]{Lemma}
\newtheorem{corollary}[theorem]{Corollary}
\newtheorem{claim}{Claim}
\def\etal{{\em et. al.}~}
\def\ie{{\em i.e. }}
\def\M{M{\footnotesize{ULTICUT}}}
\def\VM{V{\footnotesize{ERTEX}} M{\footnotesize{ULTICUT}}}
\def\CM{C{\footnotesize{OMPONENT}} M{\footnotesize{ULTICUT}}}
\def\BM{B{\footnotesize{ACKBONE}} M{\footnotesize{ULTICUT}}}
\def\MW{M{\footnotesize{ULTIWAY}} C{\footnotesize{UT}}}
\def\VC{V{\footnotesize{ERTEX}} C{\footnotesize{OVER}}}
\def\MIT{M{\footnotesize{ULTICUT}} I{\footnotesize{N}} T{\footnotesize{REES}}}
\begin{document}

\title[Multicut is FPT]
{Multicut is FPT}
\author{Nicolas Bousquet}
\address{Universit\'e Montpellier 2 - CNRS, LIRMM, 161 rue Ada, 34392 Montpellier, France}
\email{bousquet,daligault,thomasse@lirmm.fr}
\author{Jean Daligault}
\author{St\'ephan Thomass\'e}

\begin{abstract}
Let $G=(V,E)$ be a graph on $n$ vertices and $R$ 
be a set of pairs of vertices in $V$
called \emph{requests}. A \emph{multicut}
is a subset $F$ of $E$ such that every 
request $xy$ of $R$ is cut by $F$, \ie{}every $xy$-path of $G$ intersects $F$. 
We show that there exists an $O(f(k)n^c)$ algorithm 
which decides if there exists a 
multicut of size at most $k$. In other words, the \M{} problem parameterized by the solution size $k$ is Fixed-Parameter Tractable. The proof extends to vertex-multicuts.
\end{abstract}


\maketitle

\section{Introduction}
The study of cuts and flows is one of the most active field in combinatorial
optimization. However, if the simplest case involving one source and one sink is algorithmically
tractable, the problem becomes hard as soon as one deals with multiple terminals.
For instance, given two requests $(x_1,y_1)$ and $(x_2,y_2)$ in a directed
graph $D$, it is NP-complete to decide if there exist two disjoint directed
paths, respectively from $x_1$ to $y_1$ and from $x_2$ to $y_2$ \cite{2dirNPhard}. In fact, even deciding
if two given vertices belong to a directed circuit is already hard.
The picture changes when considering undirected graphs, in which case the celebrated
result of Robertson and Seymour \cite{GM13} asserts that given $k$ requests
$(x_1,y_1),(x_2,y_2),\dots ,(x_k,y_k)$, one can decide in cubic time if there
exists $k$ disjoint paths connecting all pairs $x_i,y_i$. The catch is
of course that cubic time refers to the instance size, which we generally
denote by $n$. From their work, the complexity of the $k$-path problem is
$O(f(k)n^3)$, where $f$ is by no mean polynomial, since
the question is NP-complete when $k$ is part of the input.

This result received considerable attention, both since this is the key tool
for computing a given minor in a graph, but also because it has opened a breach
in the classical NP-complete/P duality. Indeed, the difficulty of the $k$-path problem
does not depend on the size of the instance, but rather on the number of paths we
are looking for. In other words, the parameter containing the hardness of the
problem is the number $k$ of paths. In a more general way, a
problem is \emph{fixed parameter tractable} (FPT) with respect to the parameter
$k$ (e.g. solution size, treewidth, ...) if for any instance of size $n$ it can be
solved in time $O(f(k)n^c)$ for some fixed $c$. The reader is invited to refer to
now classical books by Downey and Fellows \cite{DF99}, Flum and Grohe \cite{FG06} and Niedermeier \cite{Nie06}.

The dual problem of finding disjoint paths from a source $s$ to a sink $t$
is the cut problem where one asks for a set of vertices
or edges which deletion separates $s$ from $t$. Menger's theorem, or more generally
LP-duality, asserts that the maximum number of disjoint $st$-paths is equal to the
minimum size of a cut. This property no longer holds when considering multiple
requests, where the maximum number of disjoint paths connecting requests is
only an obvious lower bound for the size of a multicut, \ie{}a set of vertices
or edges which deletion separates $x_i$ from $y_i$ for every
request $x_iy_i$. Formally, we have the following problem:

\begin{quote}
\noindent
\M{}:\\
\textbf{Input}: A graph $G$, a set of requests $R$, an integer $k$.\\
\textbf{Parameter}: $k$.\\
\textbf{Output}: TRUE if there is a multicut of size at most $k$, otherwise FALSE.
\end{quote}

The status of this problem is one of the long standing open problems in parameterized complexity.
The main result of this paper is to provide an FPT algorithm for \M{}.
The proofs being slightly less cumbersome in the edge case, we present our work
in terms of edge-multicut rather than vertex-multicut. Our last section
shows how to translate the algorithm to deal with vertices. Let us now give some formal definitions.
Given a graph $G$ and a set $R$ of pairs of distinct vertices called \emph{requests},
a \emph{multicut} is a subset $F$ of edges of $G$ such that every $xy$-path, where
$xy$ is a request, contains an edge of $F$. Equivalently, the two endpoints (or \emph{terminals})
of every request of $R$ belong to different connected components of $G\setminus F$.

The \M{} problem is already hard on trees since \VC{} is equivalent to \M{} in stars.
Hence \M{} is NP-complete and Max-SNP hard. \M{} and its variants have raised
an extensive literature.  These problems play an important role in network
issues, such as routing and telecommunication (see \cite{CLR05}).

\MIT{} was already a challenging problem. Garg \etal \cite{GVY97} proved that it admits a
factor 2 approximation algorithm. Guo and Niedermeier \cite{GN05} proved that
\MIT{} is FPT with respect to the solution size. Bousquet \etal
\cite{BDTY09} provided a polynomial kernel.

Another variant is the \MW{} problem in
which  a set of (non-paired) terminals has to be pairwise separated. When parameterized
by the solution size, \MW{} has been proved FPT by Marx \cite{marxmultiwaycut}. A
faster $O^\star(4^k)$ algorithm is due to Chen \etal \cite{chenmultiwaycut}.

On general instances, Garg \etal gave an approximation algorithm for \M{} within a logarithmic factor in \cite{GVY93}.
However \M{} has no constant factor approximation algorithm if Khot's Unique Games Conjecture holds \cite{MCnonapprox}. This fact is a further motivation to the study of the fixed parameterized tractability of \M{}.
Guo \etal showed in \cite{GHKNU06} that \M{} is FPT when parameterized by both the treewidth of the graph and the number of requests. Gottlob and Lee in \cite{GL07} proved a stronger result: \M{} is FPT when parameterized by the treewidth of the input structure, namely the input graph whose edge set is completed by the set of request pairs.
Recently, Daligault \etal~\cite{DPPT09} proved that \M{} can be reduced to instances in which the
graph $G$ has treewidth bounded in terms of $k$.

The graph minor theorem of Roberston and Seymour implies that \M{} is non-uniformly FPT when
parameterized by the solution size and the number of requests. Marx proved that \M{} is (uniformly)
FPT for this latter parameterization \cite{marxmultiwaycut}. A faster algorithm running in time
$O^\star(8 \cdot l)^k$ was given by Guillemot \cite{GuillemotIWPEC}. Marx \etal \cite{MarxTwReduc}
obtained FPT results for more general types of constrained \M{} problems through treewidth reduction
results. However their treewidth reduction techniques do not yield the tractability of \M{} when parameterized
only by the solution size. Recently, Marx and Razgon obtained a factor 2 Fixed-Parameter-Approximation for \M{}
in \cite{MulticutFPTapprox}.

Marx and Razgon independently found a proof of the fact that \M{} is FPT, with a
rather different approach, see~\cite{MR10}.

The outline of our proof is quite pedestrian, even if some of the partial results
are still a bit technical. Informally, our goal is to reduce our input graph
to a subdivision of a graph with a bounded number of edges. The crucial
tool for this reduction is to find a partition of the set of all "reasonable" solutions
of the \M{} problem into a bounded number of subsets, in which the
multicuts are "totally ordered" (this will be formalized in Subsection~\ref{dilworth}). This application of Dilworth's Theorem is maybe
the most interesting aspect of our proof. However, it requires an important cleaning
of our instance before being applied. It would be interesting to find an adequate
partial order on multicuts at an earlier stage of our proof to shorten the argument
and get more insight in the structure of multicuts.

In Section~\ref{sectionconnectivity} we
develop some connectivity tools which are used to certify
that some requests are irrelevant. In Section~\ref{CM}, we first compute a vertex-multicut
$Y$ of size $k+1$ by iterative compression. We then reduce to the case where each
component of $G\setminus Y$ has one or two attachment vertices in $Y$. We show that
components with one attachment vertex have only a bounded (in $k$) number of
terminals. For components with two attachment vertices, we identify
one path (called \emph{backbone}) where only one edge is chosen in the multicut.
In Section~\ref{trlemon}, we reduce to the case where the backbone is the only path connecting
the two attachment vertices. Finally, we show that \M{} can be reduced to an instance
which is a subdivision of a graph with a bounded number of edges and where the multicut
selects a single edge in each of the paths corresponding to the subdivided edges.
This case can easily  be coded by a 2-SAT instance, which is solvable in polynomial time.
In the Appendix (Section~\ref{time}) we improve the running time of our algorithm to a single
exponential in terms of $k$, and in Section~\ref{vertex} we sketch how to adapt our proof
to the case of vertex-multicuts.

\emph{Acknowledgments.} This proof was a long process in which several researchers gave us advice or
participated in some of the partial results. One of our important tools, Lemma~\ref{extractright},
is the crucial result of the reduction of \M{} to bounded treewidth, and was
obtained in collaboration with Christophe Paul and Anthony Perez \cite{DPPT09}. The fact that
a vertex cutset $Y$ could be obtained by iterative compression, a tool
of \cite{MulticutFPTapprox}, was brought to our attention by Sylvain Guillemot. Finally,
we would like to thank Valentin Garnero who worked on this project as a part
of his graduate research experience in June 2010.


\section{Preliminaries}

A vertex which sends a request is called a \emph{terminal}.
The number of requests sent by a terminal is its 
\emph{request degree}.
We study \M{} variants with additional contraints on the deleted edges. 
In the original \M{} problem, we can delete any set of $k$ edges, but in some more 
constrained versions we must delete a prescribed number of edges on some particular paths.
The total number of deleted edges is called \emph{deletion allowance}
of the multicut problem. We will make extensive use of the 
term \emph{bounded} which always implicitely means bounded in 
terms of the deletion allowance. Also, when speaking of 
FPT time, we always mean $O(f(d)n^c)$ where $c$
is a fixed constant and $d$ is the deletion allowance. Let us 
discuss further some of the operations we will often perform.

\emph{Reductions.} These are computations where the
output is a new instance which is equivalent to the original instance with respect
to the existence of a solution. One of the most natural reductions 
concerns irrelevant requests, i.e. a request $xy$ such that 
every multicut of $R\setminus xy$ actually cuts
$x$ from $y$, where $R$ is the set of requests. If one can certify that a request $xy$ is irrelevant,
the reduction just consists in replacing $R$ by $R\setminus xy$.
The difficulty is obviously to certify that a request is irrelevant.
Another easy reduction is obtained if we can certify that there
exists a multicut which does not separate 
two given vertices $u$ and $v$. In this case we simply contract $u$
and $v$. Reductions are easy to control since
we can perform them freely provided that some invariant polynomial 
in $n$ decreases. For instance, request deletions can be performed
at most $n^2$ times, and vertex contractions at most $n$ times.

\emph{Branchings.} In our algorithm, we often have
to decide if the multicut we are looking for is of a particular
type, where the number of types is bounded. We will then say that we branch over 
all the possible cases. This means that to compute the result
of the current instance, we run our algorithm on each case,
in which we force the solution to be of each given type. The output is TRUE if 
at least one of the outputs returns TRUE. 
To illustrate this, in the case of a graph $G$ with two 
connected components $G_1$ and $G_2$, both containing requests, we would 
branch over $k-1$ instances, depending of the number of edges (between 1 and $k-1$) that we delete 
from $G_1$. This simple branching explains why we can focus on connected
graphs.

\emph{Invariants.} To prove that the number of branchings is bounded, 
we show that some invariant is modified at each 
branching step, and that the number of times that this invariant can be
modified is bounded. We usually have several invariants
ordered lexicographically. In other words, we have different
invariants which we want to increase or decrease and can take 
a bounded number of values. These invariants are ordered, there 
is a primary invariant, a secondary, etc. Each branching
must \emph{improve} our invariant, \ie{}the first invariant
(with respect to priority order) which is changed by the branching
must be modified according to the preference, increase or decrease,
that we specified for it. For instance the primary invariant could be the 
number of deleted edges, which we want to decrease and the secondary 
invariant could be the connectivity of $G$, which we want to increase.
If we can decrease the number of deleted edges we do so even if the 
connectivity of the graph decreases. Also, if a branching increases
connectivity and leaves the number of deleted edges unchanged, we improve 
our invariant.

\section{Connectivity in FPT time.}\label{sectionconnectivity}

Dealing with minimum cuts can be done in polynomial time 
with usual flow techniques. However, dealing with $k$-edge cuts 
when $k$ is some fixed value larger than the optimum is 
more difficult. We develop here some tools to deal in FPT
time with bounded cuts, of the same flavour as in \cite{DFVS}, \cite{DPPT09} and \cite{marxmultiwaycut}.

\subsection{Enumerating cuts in FPT time}
Let $G$ be a connected graph on $n$ vertices with a particular vertex $x$ called 
\emph{root}. We deal in this part with cuts, \ie{}bipartitions of the vertex set of $G$. To fix one side
of our cuts, we define a \emph{cut} as a subset of vertices $S$ containing
$x$. The \emph{border} of $S$ is the set of edges of $G$
with exactly one endpoint in $S$. We denote it by 
$\Delta (S)$. Its cardinality is denoted by $\delta (S)$.
Recall that the function $\delta $ is submodular, 
\ie{}$\delta (A)+\delta (B)\geq \delta (A\cap B)+\delta (A\cup B)$.
Given a subset $Y$ of vertices of $V(G)$, we denote its \emph{complement}
$V(G)\setminus Y$ by $\overline Y$.
A cut $S$ is a \emph{left cut} if every cut $T\subsetneq S$ satisfies 
$\delta (T)>\delta (S)$. Note that every cut $S$ contains a left
cut $S'$ with $\delta (S')\leq \delta (S)$.

\begin{lemma} \label{unioncut}
Left cuts are closed under union.
\end{lemma}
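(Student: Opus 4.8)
Let $S_1,S_2$ be left cuts and set $S=S_1\cup S_2$. Since $x\in S_1$ we have $x\in S$, so $S$ is a cut; I must show that every cut $T\subsetneq S$ satisfies $\delta(T)>\delta(S)$. The whole argument is an application of submodularity of $\delta$ together with the defining property of left cuts, and the point is that it terminates after exactly two ``absorption'' steps.

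\textbf{Step 1: absorb $S_1$.} Fix a cut $T\subsetneq S$. Since $T\neq S=S_1\cup S_2$, the set $T$ cannot contain both $S_1$ and $S_2$; without loss of generality assume $S_1\not\subseteq T$, that is, $T\cap S_1\subsetneq S_1$. Note $T\cap S_1$ is a cut (it contains $x$), so the left-cut property of $S_1$ gives $\delta(T\cap S_1)>\delta(S_1)$. Applying submodularity to $T$ and $S_1$,
\[
\delta(T)+\delta(S_1)\ \geq\ \delta(T\cap S_1)+\delta(T\cup S_1)\ >\ \delta(S_1)+\delta(T\cup S_1),
\]
hence $\delta(T\cup S_1)<\delta(T)$. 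Moreover $T\cup S_1\subseteq S$ since both $T$ and $S_1$ lie in $S$.

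\textbf{Step 2: absorb $S_2$, or finish.} Put $T_1:=T\cup S_1$. If $T_1=S$ we are done, since then $\delta(S)=\delta(T_1)<\delta(T)$. Otherwise $T_1\subsetneq S$ with $S_1\subseteq T_1$, so necessarily $S_2\not\subseteq T_1$ (else $S=S_1\cup S_2\subseteq T_1$), i.e.\ $T_1\cap S_2\subsetneq S_2$. Exactly as before, the left-cut property of $S_2$ gives $\delta(T_1\cap S_2)>\delta(S_2)$, and submodularity applied to $T_1$ and $S_2$ yields $\delta(T_1\cup S_2)<\delta(T_1)$. But $T_1\cup S_2=T\cup S_1\cup S_2=S$, so $\delta(S)<\delta(T_1)<\delta(T)$, as required. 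This proves $S$ is a left cut.

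\textbf{Main obstacle.} The only subtlety is the termination/bookkeeping in Steps 1--2: one has to notice that a single submodularity inequality only lets us replace $T$ by $T\cup S_1$, which need not already equal $S$, and then exploit that after this replacement the new set contains $S_1$ in full, so the \emph{only} way it can fail to be all of $S$ is by missing part of $S_2$ — which lets the second application of the same trick land exactly on $S$. Everything else (that intersections and unions of cuts are cuts, and the direction of the strict inequalities) is routine once the definitions are unwound.
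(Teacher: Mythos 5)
Your proof is correct and rests on the same engine as the paper's: apply submodularity of $\delta$ together with the defining property of left cuts. The only organizational difference is that the paper first chooses an \emph{extremal} strict subset $S_3\subsetneq S_1\cup S_2$ of minimum border, so that a single submodularity step plus minimality forces $S_1\cup S_3=S_1\cup S_2$ and finishes; you instead start from an arbitrary strict subset $T$ and apply submodularity twice, first absorbing $S_1$ (replacing $T$ by $T\cup S_1$ with strictly smaller border) and then absorbing $S_2$ to land exactly on $S_1\cup S_2$. Both work; the paper's extremality trick saves one step, while yours avoids it at the cost of a second application. This is a cosmetic difference in bookkeeping, not a different route.
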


\begin{proof} 
Let $S_1\cup S_2$ be the union of two left cuts. Let $S_3\subsetneq S_1\cup S_2$
be a cut with minimum border. Without loss of generality, we assume that 
$S_1$ is not included in $S_3$. Since $S_1$ is a left cut, $\delta (S_1\cap S_3)>\delta (S_1)$.
Since $\delta (S_1\cap S_3)+\delta (S_1\cup S_3)\leq \delta (S_1)+\delta (S_3)$,
we obtain $\delta (S_1\cup S_3)< \delta (S_3)$. Furthermore 
$S_1\cup S_3\subseteq S_1\cup S_2$. Since $S_3$ has minimum border
among strict subsets of $S_1\cup S_2$, we have 
$S_1\cup S_3=S_1\cup S_2$. Finally $\delta (S_1\cup S_2)< \delta (S_3)$, thus
$S_1\cup S_2$ is a left cut.
\end{proof}

\begin{lemma} \label{smallsizecut}
If $S_1,S_2$ are distinct left cuts, $\delta (S_1\cup S_2)<\max (\delta (S_1),\delta (S_2))$.
\end{lemma}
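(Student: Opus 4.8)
The plan is to mimic the structure of the proof of Lemma~\ref{unioncut}, pushing the submodularity inequality a bit harder to get a strict inequality against the maximum of $\delta(S_1)$ and $\delta(S_2)$ rather than merely against $\delta(S_3)$. First I would distinguish the trivial case: if $S_1 \subseteq S_2$ or $S_2 \subseteq S_1$, then $S_1 \cup S_2$ equals one of the two, say $S_2 = S_1 \cup S_2$; but then $S_1 \subsetneq S_2$ is a strict subcut of the left cut $S_2$, so by definition $\delta(S_1) > \delta(S_2) = \delta(S_1 \cup S_2)$, and since $S_1, S_2$ are distinct the inequality $\delta(S_1 \cup S_2) < \max(\delta(S_1), \delta(S_2))$ holds. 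So we may assume neither is contained in the other.

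Now assume $S_1 \not\subseteq S_2$ and $S_2 \not\subseteq S_1$. By submodularity, $\delta(S_1 \cap S_2) + \delta(S_1 \cup S_2) \le \delta(S_1) + \delta(S_2)$. The set $S_1 \cap S_2$ is a cut (it contains the root $x$) and it is a \emph{strict} subset of $S_1$ because $S_2 \not\supseteq S_1$; since $S_1$ is a left cut, $\delta(S_1 \cap S_2) > \delta(S_1)$. Likewise $S_1 \cap S_2 \subsetneq S_2$ and $S_2$ is a left cut, so $\delta(S_1 \cap S_2) > \delta(S_2)$, hence $\delta(S_1 \cap S_2) > \max(\delta(S_1), \delta(S_2))$. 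Combining with the submodular inequality,
\[
\delta(S_1 \cup S_2) \le \delta(S_1) + \delta(S_2) - \delta(S_1 \cap S_2) < \delta(S_1) + \delta(S_2) - \max(\delta(S_1), \delta(S_2)) = \min(\delta(S_1), \delta(S_2)) \le \max(\delta(S_1), \delta(S_2)),
\]
which is the desired bound (in fact we even get the stronger conclusion $\delta(S_1 \cup S_2) < \min(\delta(S_1), \delta(S_2))$ in this case).

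I do not anticipate a serious obstacle here; the only thing to be careful about is the degenerate case where one cut contains the other, which is handled separately as above. One small subtlety worth double-checking: the definition of left cut quantifies over cuts $T \subsetneq S$, so one must confirm that $S_1 \cap S_2$ is genuinely a \emph{cut} (i.e. contains the root) and is a \emph{proper} subset of each $S_i$ under the no-containment assumption — both are immediate. It may also be worth noting explicitly, as a byproduct used later, that distinctness of $S_1$ and $S_2$ is what forces the strictness in the containment case.
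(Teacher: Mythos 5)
Your proof is correct, but it takes a genuinely different route from the paper's. The paper's argument is a one-liner that leans on Lemma~\ref{unioncut}: since distinctness forces (WLOG) $S_1 \subsetneq S_1 \cup S_2$, and $S_1 \cup S_2$ is itself a left cut by that lemma, the definition of left cut applied to the strict subcut $S_1$ of $S_1 \cup S_2$ immediately gives $\delta(S_1 \cup S_2) < \delta(S_1) \le \max(\delta(S_1),\delta(S_2))$. You instead bypass Lemma~\ref{unioncut} entirely and argue directly from submodularity with a case split on containment: in the crossing case you apply the left-cut inequality to $S_1 \cap S_2$ against \emph{both} $S_1$ and $S_2$, which yields the stronger bound $\delta(S_1 \cup S_2) < \min(\delta(S_1),\delta(S_2))$, and in the nested case the claim is a direct consequence of the left-cut definition. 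Both arguments are sound; the paper's is more economical \emph{given} Lemma~\ref{unioncut} (which it needs anyway for other results such as Lemma~\ref{divind} and Claim~\ref{unicity}), while yours is self-contained, slightly more explicit about where the strictness comes from, and observes a marginally sharper bound in the non-nested case that the paper does not record. The one subtlety you flag — that $S_1 \cap S_2$ must contain the root to count as a cut — is indeed the point to be careful about, and you handle it correctly.
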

 
\begin{proof}
Since $S_1\neq S_1\cup S_2$ or $S_2\neq S_1\cup S_2$, we assume without 
loss of generality $S_1\subsetneq S_1\cup S_2$.
By Lemma~\ref{unioncut}, $\delta (S_1\cup S_2)<\delta (S_1)\leq \max (\delta (S_1),\delta (S_2))$.
\end{proof}

A cut $S$ is \emph{indivisible} if $G\setminus S$ is connected,
otherwise 
it is \emph{divisible}.

\begin{lemma} \label{divind}
If $S$ is a divisible left cut and $Y$ is a connected component
of $G\setminus S$, the cut $\overline Y$ is an indivisible left cut
with $\delta (\overline Y)< \delta (S)$.
\end{lemma}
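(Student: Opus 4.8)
The plan is to verify the three assertions about $\overline Y$ in turn: that $\overline Y$ is a cut, that it is indivisible, and that it is a left cut with $\delta(\overline Y) < \delta(S)$. First, since $S$ is divisible, $G \setminus S$ has at least two connected components, so a fixed component $Y$ is a proper nonempty subset of $\overline S$; hence $\overline Y \supseteq S \ni x$, so $\overline Y$ is indeed a cut. For indivisibility, observe that $G \setminus \overline Y = Y$, which is connected by choice of $Y$; so $\overline Y$ is indivisible.

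For the border inequality, I would compare $\Delta(\overline Y)$ with $\Delta(S)$ edge by edge. Every edge of $\Delta(\overline Y)$ has exactly one endpoint in $Y$; since $Y$ is a connected component of $G \setminus S$, there are no edges between $Y$ and $\overline S \setminus Y$, so the other endpoint must lie in $S$. Thus $\Delta(\overline Y)$ consists precisely of the edges between $Y$ and $S$, which is a subset of $\Delta(S)$ (the edges leaving $S$ that land in $Y$). This gives $\delta(\overline Y) \le \delta(S)$. The inequality is strict because $G$ is connected: the component $Y$ receives at least one edge from outside, and since that edge cannot come from $\overline S \setminus Y$ it must come from $S$, while on the other hand $\overline S \setminus Y$ is nonempty (as $S$ is divisible into $\ge 2$ components) and also nonadjacent to $S$ would disconnect $G$, so $\Delta(S)$ strictly contains $\Delta(\overline Y)$; hence $\delta(\overline Y) < \delta(S)$.

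It remains to check that $\overline Y$ is a left cut, i.e. every cut $T \subsetneq \overline Y$ has $\delta(T) > \delta(\overline Y)$. The natural approach is to use submodularity together with the fact that $S$ is a left cut. Given such a $T$, I would consider $T \cap S$ and $T \cup S$: submodularity gives $\delta(T \cap S) + \delta(T \cup S) \le \delta(T) + \delta(S)$. If $T \cap S \subsetneq S$ then, since $S$ is a left cut, $\delta(T \cap S) > \delta(S)$, forcing $\delta(T \cup S) < \delta(T)$; and since $S \subseteq T \cup S \subseteq \overline Y$ with $\delta(T\cup S)$ now known to be small, one pushes this down to a contradiction with minimality or directly concludes $\delta(T) > \delta(\overline Y)$ by relating $T \cup S$ back to $\overline Y$. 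The case $T \cap S = S$, i.e. $S \subseteq T$, means $T = S \cup (T \cap Y)$ with $T \cap Y$ a proper subset of $Y$; here the edges leaving $T$ are those leaving $S$ into $Y \setminus T$ plus those internal to $Y$ crossing $T \cap Y$, and connectivity of $Y$ forces this to exceed $\delta(\overline Y)$. The main obstacle I anticipate is getting the left-cut verification clean: one must be careful that the submodularity argument actually produces a \emph{proper} subcut of $\overline Y$ (or of $S$) with strictly smaller border, which is exactly what "left cut" forbids, and to handle the degenerate cases where $T$ sits entirely inside $S$ or entirely contains $S$ without circular reasoning.
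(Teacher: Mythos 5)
Your handling of the three structural claims is fine: $\overline Y$ contains $x$ because $S\subseteq\overline Y$; indivisibility is immediate since $G\setminus\overline Y = Y$ is connected; and $\delta(\overline Y)<\delta(S)$ because $\Delta(\overline Y)$ is exactly the set of $S$--$Y$ edges while, $S$ being divisible and $G$ connected, $\Delta(S)$ also contains an edge to some other component of $G\setminus S$. The left-cut verification, however, has both an error and a gap. In your case $S\subseteq T$ you write $T=S\cup(T\cap Y)$ and appeal to ``edges internal to $Y$ crossing $T\cap Y$'' and ``connectivity of $Y$''; but by definition a cut $T\subsetneq\overline Y$ is a \emph{subset} of $\overline Y$, so $T\cap Y=\emptyset$ and that picture does not apply. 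What actually happens is that $T\setminus S$ and $\overline Y\setminus T$ both lie in $\overline S\setminus Y$, which has no edges to $Y$; hence $\Delta(\overline Y)\subseteq\Delta(T)$, and since $\overline Y\setminus T\neq\emptyset$ while $G$ is connected there is some further edge from $T$ to $\overline Y\setminus T$, giving $\delta(T)>\delta(\overline Y)$. The relevant connectivity is that of $G$, not of $Y$. Your other case, $T\cap S\subsetneq S$, is left at ``one pushes this down''; it can be closed by noting $\delta(T\cup S)<\delta(T)$ and then running the corrected argument above on $T\cup S$ (which satisfies $S\subseteq T\cup S\subseteq\overline Y$), the subcase $T\cup S=\overline Y$ being immediate.

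For comparison, the paper avoids the case split. It picks the left cut $T\subseteq\overline Y$ of minimum border, uses Lemma~\ref{unioncut} (closure of left cuts under union) to see that $S\cup T$ is a left cut contained in $\overline Y$, deduces $S\subseteq T$ from minimality, and then concludes $T=\overline Y$ from $\Delta(\overline Y)\subseteq\Delta(T)$. Your direct submodularity computation in the case $T\cap S\subsetneq S$ is essentially re-deriving the content of Lemma~\ref{unioncut}; invoking that lemma instead shortens the argument and sidesteps the bookkeeping you were anticipating as the main obstacle.
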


\begin{proof}
The cut $\overline Y$ is indivisible by construction and 
$\Delta (\overline Y)\subsetneq \Delta(S)$, so we just have to prove that $\overline Y$ 
is a left cut. Consider a left cut $T\subseteq \overline Y$ which minimizes
$\delta (T)$. By Lemma~\ref{unioncut}, $S\cup T$ is a left cut. Moreover 
$\delta (T)\leq \delta (S\cup T)$ by minimality of $\delta(T)$, hence 
$T=S\cup T$. In particular $S\subseteq T$. Since every edge of 
$\Delta (\overline Y)$ joins $S$ to $Y$, we have $\Delta (\overline Y)\subseteq \Delta (T)$. Therefore by
minimality of $\delta(T)$, we have $T=\overline Y$. Thus $\overline Y$ is a left cut.
\end{proof}

\begin{corollary} \label{indlef}
Every indivisible cut $S$ contains an indivisible left cut $S'$
with $\delta (S')\leq \delta (S)$.
\end{corollary}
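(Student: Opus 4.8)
The plan is to combine the elementary observation stated just before the corollary — that every cut contains a left cut of no larger border — with Lemma~\ref{divind}, which upgrades a divisible left cut into an indivisible one at the cost of strictly decreasing the border.

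First I would apply that noted fact to the given indivisible cut $S$, obtaining a left cut $S_0\subseteq S$ with $\delta(S_0)\leq \delta(S)$. If $S_0$ is itself indivisible, we are already done by taking $S'=S_0$, and nothing more is needed.

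Otherwise $S_0$ is a divisible left cut, and this is where the hypothesis that $S$ is indivisible gets used. The subgraph $G\setminus S$ is connected, and since $\overline S\subseteq \overline{S_0}$, this connected set is contained in a single connected component $Y$ of $G\setminus S_0$ (any other components of $G\setminus S_0$ consist only of vertices of $S\setminus S_0$). Lemma~\ref{divind} applied to the divisible left cut $S_0$ and this component $Y$ then says that $\overline Y$ is an indivisible left cut with $\delta(\overline Y)<\delta(S_0)\leq\delta(S)$. Finally, from $\overline S\subseteq Y$ we get $\overline Y\subseteq S$, so $S'=\overline Y$ is an indivisible left cut contained in $S$ with $\delta(S')\leq\delta(S)$, as required.

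The one point demanding care — and the reason indivisibility of $S$ is essential rather than incidental — is the choice of the component $Y$: we must take the one that absorbs all of $\overline S$, because only then does passing to the complement $\overline Y$ keep us inside $S$; connectivity of $G\setminus S$ is exactly what guarantees that a single such component exists. The degenerate case $S=V$ poses no issue, since $V$ is a left cut in a connected graph and is handled by the first alternative above.
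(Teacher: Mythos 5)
Your proof is correct and takes essentially the same route as the paper: extract a left cut $S_0\subseteq S$ of no larger border, then if it is divisible apply Lemma~\ref{divind} to the component $Y$ of $G\setminus S_0$ containing $\overline S$. The one thing you spell out more explicitly than the paper — that indivisibility of $S$ is what guarantees $\overline S$ sits inside a single component of $G\setminus S_0$ — is a correct and useful clarification of a step the paper leaves implicit.
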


\begin{proof}
Let $S''$ be a left cut contained in $S$ such that $\delta (S'')\leq \delta (S)$. 
We assume that $S''$ is divisible, otherwise $S':=S''$.
Let $Y$ be the component of $G\setminus S''$ which contains $\overline S$. 
By Lemma~\ref{divind}, $S':=\overline Y$ is an indivisible left cut with 
$\delta(S')< \delta (S'') \leq \delta (S)$. Moreover $S'\subseteq S$ since $\overline S\subseteq Y$.
\end{proof}

Given a vertex $y$, an \emph{$xy$-cut} is a cut 
$S$ such that $y\notin S$. We denote by $C^y_{k}$ the set of indivisible left $xy$-cuts with border $k$. 
We also denote by $C^y_{<k}$ (resp. $C^y_{\leq k}$) the union of $C^y_{i}$ for $i<k$
(resp. for $i\leq k$). 

\begin{theorem} \label{boundsize}
The set $C^y_{\leq k}$ has size at most 
$k!$ and can be computed in FPT time.
\end{theorem}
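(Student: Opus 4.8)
The plan is to prove the bound $|C^y_{\leq k}| \leq k!$ by induction on $k$, exploiting the closure properties of left cuts established above, together with the fact that indivisibility lets us localize where a new cut can differ from a canonical one. The base case $k=0$ (or $k=1$) is immediate since there are no cuts of border $0$ separating $x$ from $y$ in a connected graph, and essentially one of border $1$. For the inductive step, the key structural idea is this: among all cuts in $C^y_{\leq k}$, consider their union $U$. By Lemma~\ref{unioncut}, left cuts are closed under union, so $U$ is itself a left cut; moreover, because each participating cut is an $xy$-cut (avoiding $y$) and has border at most $k$, repeated application of Lemma~\ref{smallsizecut} forces $\delta(U) \leq k$ as well, so $U$ is a left $xy$-cut with $\delta(U)\le k$. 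Every cut in $C^y_{\leq k}$ is contained in $U$, so we may as well work inside $G[U]$ with the remaining vertices contracted to $y$; equivalently, $U$ itself is the unique maximal element.

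The heart of the argument is to bound the number of indivisible left cuts contained in a fixed left cut. Here I would use the following dichotomy. Take an indivisible left $xy$-cut $S\ne U$ with $\delta(S)\le k$. Since $S\subsetneq U$ and $U$ is connected on $\overline U\cup \Delta$-side... more precisely: because $S$ is indivisible, $G\setminus S$ is connected, and there is at least one edge of $\Delta(S)$ leaving $S$; pick an endpoint outside $S$ on such an edge. The point is that $\Delta(S)$ has at most $k$ edges, hence at most $k$ vertices immediately outside $S$ adjacent to $S$; branching over which of these $\le k$ "first exit vertices" is used, we separate $C^y_{\leq k}$ into at most $k$ classes, and within each class the cut $S$ is forced to avoid a particular vertex $v$ adjacent to the boundary. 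Recursing, an indivisible left cut $S$ with $\delta(S)=j$ sits inside a structure controlled by at most $j$ choices at the top level, then at most $j-1$ at the next, and so on; this recursion solves to $\le k!$. The bookkeeping is that when we restrict to cuts not containing a newly identified neighbour $v$, we can contract $v$ into $y$ and strictly decrease the available border budget (by the indivisibility/left-cut interplay of Lemma~\ref{divind} and Corollary~\ref{indlef}, which guarantee that the relevant "next" cut has strictly smaller border), giving a clean induction on $k$ with branching factor $k$.

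For the computational claim, the induction is constructive: compute $U$ by a minimum $xy$-cut computation closed under the left-cut operation (take a minimum cut, then its canonical minimal representative, then iterate the union closure — each step is a polynomial flow computation); then for each of the at most $k$ boundary-adjacent vertices $v$, contract $v$ with $y$ and recurse with parameter $k-1$, collecting cuts. The total number of recursive calls is $O(k!)$ and each performs polynomially many flow computations, so the whole enumeration runs in FPT time $O(f(k)\,n^c)$.

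\smallskip

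\noindent\emph{Main obstacle.} The delicate point is the branching step: showing that fixing a "first exit vertex" on the boundary of an indivisible left cut genuinely reduces the border budget of the cuts we still need to enumerate — i.e. that after contracting that vertex into $y$, every remaining indivisible left $xy$-cut has border strictly less than $k$. This is where indivisibility is essential (a divisible cut could have a component avoiding all chosen vertices) and where Lemma~\ref{smallsizecut} and Lemma~\ref{divind} must be combined carefully; getting the recursion to close at exactly $k!$ rather than something like $(2k)^k$ depends on making this reduction lose exactly one unit of border per level.
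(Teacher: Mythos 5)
Your plan is in the right spirit (induction on $k$, using Lemma~\ref{unioncut} and Lemma~\ref{smallsizecut} to get a canonical maximal left cut, and using indivisibility to force an edge on some boundary outside $\Delta(S)$), and your $U$ is in fact the same object as the cut $S_\lambda$ used in the paper. But the step you flag as the main obstacle is a genuine gap, not merely a delicate point. Contracting a ``first exit vertex'' $v\notin S$ into $y$ does not decrease $\delta(S)$: every edge of $\Delta(S)$ in $G$ survives as a boundary edge of $S$ in the contracted graph, so there is no reason the recursion can drop to parameter $k-1$. Moreover, the $\le k$ candidate exit vertices you branch over are the neighbours of $S$ in $\overline S$, a set that depends on the unknown $S$; it is not a fixed list you can enumerate in advance. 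Lemma~\ref{divind} and Corollary~\ref{indlef} concern the border of cuts obtained by passing to a component of $G\setminus S$, not the border of a cut after a vertex contraction, so they do not close the gap.

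The paper resolves this with a different witness. For $S\in C^y_k$ it takes the pair $(T,e)$ where $T$ is the inclusion-minimal element of $C^y_{<k}$ containing $S$ (such a $T$ exists because $S_\lambda$ contains all left cuts) and $e\in\Delta(T)\setminus\Delta(S)$ (such an edge exists because $S$ is indivisible). Claim~\ref{unicity} shows that $(T,e)$ determines $S$ uniquely: if some other $S'\in C^y_{\le k}$ inside $T$ also avoided $e$, then Lemmas~\ref{unioncut}, \ref{smallsizecut} and \ref{divind} would produce a member of $C^y_{<k}$ strictly between $S\cup S'$ and $T$, contradicting the minimality of $T$. Since $|\Delta(T)|\le k-1$, there are at most $(k-1)\,|C^y_{<k}|$ such pairs, so $|C^y_{\le k}|\le k\,|C^y_{<k}|\le k!$. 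Algorithmically, one enumerates the already-computed pairs $(T,e)$, contracts $(G\setminus T)\cup e$ to a single vertex $y$, and extracts the unique candidate by one flow computation. In short, the recursion charges against $|C^y_{<k}|$, not against a shrinking border budget, and the uniqueness is bought by the minimality of $T$, not by a budget-reducing contraction.
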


\begin{proof}
We prove this result by induction on $k$. Let us start 
our induction with $k=\lambda$, the first value such that $C^y_{\leq \lambda}$ is 
non empty. In other words, $\lambda$ is the edge-connectivity between $x$ and $y$.

\begin{claim} \label{init}
The set $C^y_{\leq \lambda}$ has only one element $S_{\lambda}$. It
can be computed in polynomial time. Moreover 
$S_{\lambda}$ contains all left
$xy$-cuts.
\end{claim}

\begin{proof}
The $\delta$ function is modular on $xy$-cuts of size $\lambda$. In particular, 
$C^y_{\leq \lambda}$ is closed under intersection, hence $S_{\lambda}$
is just the intersection of all $xy$-cuts with border $\lambda$.
It is indivisible by minimality of $\lambda$ and can be computed in polynomial time.
Now, if $T$ is a left $xy$-cut, we have 
$\delta(T\cap S_{\lambda})+\delta(T\cup S_{\lambda})\leq \delta(T)+\lambda$. By 
minimality of $\lambda$, $\delta(T\cup S_{\lambda})\geq \lambda$. Thus $\delta(T\cap S_{\lambda})\leq \delta(T)$.
The set $T$ being a left cut, we have $T\cap S_{\lambda}=T$, thus $T\subseteq S_{\lambda}$.
\end{proof}

Let $S\in C^y_k$ with $k>\lambda$. Let $T\in C^y_{<k}$, minimal with respect 
to inclusion which contains $S$. Such a cut $T$ exists since by Claim~\ref{init},
the cut $S_{\lambda}$ contains $S$. Since $S$ is indivisible, there exists 
an edge $e$ in $\Delta (T)\setminus \Delta (S)$. 

\begin{claim} \label{unicity}
If $S'\in C^y_{\leq k}$ is included in $T$ and $e\notin \Delta (S')$, then $S'=S$.
\end{claim}

\begin{proof}
Assume for contradiction that $S'$ is different from $S$. By Lemma~\ref{unioncut}, $S\cup S'$ is a left
$xy$-cut, and by Lemma~\ref{smallsizecut}, we have $\delta (S\cup S')<k$. 
Let $Y$ be the component of 
$G\setminus (S\cup S')$ which contains $\overline T$. By Lemma~\ref{divind},
the cut $\overline Y$ belongs to $C^y_{<k}$. 
Therefore $\overline {Y}\subseteq T$, and by minimality
of $T$, we have $\overline {Y}=T$. But this would mean $e\in \Delta (S\cup S')$, which
is a contradiction.
\end{proof}

Now we turn Claim~\ref{unicity} into an algorithm. 
For every cut $T$ in $C^y_{<k}$ and every edge $e\in \Delta (T)$, we contract $(G\setminus T)\cup e$
to a single vertex still called $y$. We call this graph $G'$. If the $xy$-edge connectivity of $G'$ is not equal
to $k$, the search stops. Otherwise, we compute the unique
indivisible left $xy$-cut $S$ with border $k$. This cut 
$S$ in $G$ is an element of $C^y_k$. By
Claim~\ref{unicity}, all the elements of $C^y_k$ can be computed 
in this way. This algorithm gives the upper bound $|C^y_{\leq k}|\leq |C^y_{<k}|+(k-1)|C^y_{<k}|$, 
hence $|C^y_{\leq k}|\leq k!$. This concludes the proof of Theorem~\ref{boundsize}.
\end{proof}

The value $k!$ in Theorem~\ref{boundsize} can actually be improved to $4^k$ \cite{MarxSlides}.

\subsection{Irrelevant requests}

We denote by $C_k$ the union of $C^y_k$ for all $y$, by $C_{<k}$ the 
union of $C^y_{<k}$ for all vertices $y$ of $G$, and by $C_{\leq k}$ the 
union of $C^y_{\leq k}$ for all vertices $y$ of $G$. 
A collection of sets is called a \emph{$\Delta$-system} if every two distinct sets 
have the same intersection. Erd\H{o}s and Rado~\cite{ER} proved that there exists 
a function $er$ such that any collection of $er(k,r)$ sets with size at most $k$
contains a $\Delta$-system consisting of $r$ sets. The bound in the next result
will be improved in Section~\ref{time}.

\begin{theorem} \label{extractleft}
Every set $K$ with at least $er(k!,k')$ vertices of $G$ contains a subset 
$K'$ of size $k'$ such that every left cut $S$ with
$\delta(S)\leq k$ satisfies either $S\cap K'=\emptyset$ or $|K'\setminus S|\leq k$.
The set $K'$ can be computed in FPT time.
\end{theorem}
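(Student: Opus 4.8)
The plan is to use Theorem~\ref{boundsize} to control the *family* of borders of relevant left cuts, then apply the Erd\H{o}s--Rado sunflower lemma to that family. First I would observe that although a priori there are many left cuts $S$ with $\delta(S)\le k$, what matters here is only, for each vertex $y\in K$, the collection of indivisible left $xy$-cuts of border at most $k$ (after rooting $G$ at some fixed vertex $x$; connectivity of $G$ lets us do this). By Theorem~\ref{boundsize}, for each such $y$ this collection $C^y_{\le k}$ has size at most $k!$ and is computable in FPT time, and every left cut $S$ with $\delta(S)\le k$ that does not contain $y$ contains (by Corollary~\ref{indlef} together with the fact that a left cut is the smallest cut below it) an indivisible left $xy$-cut of border $\le\delta(S)\le k$ --- in fact, since $S$ is itself a left cut, $S$ is minimal and coincides with such an indivisible left cut on the relevant side. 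So the set of "shapes" of left cuts separating $x$ from $y$ is captured by the bounded family $C^y_{\le k}$.

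Next I would pass from cuts to the sets $C^y_{\le k}\cap K$ --- the traces of these bounded-border cuts on our large vertex set $K$. The key point is a boundedness statement: a left cut $S$ with $\delta(S)\le k$ cannot split $K$ into two large pieces. Concretely, if $S$ is a left cut with border $\le k$ and $y\in K\setminus S$, then the "$y$-side" $\overline S$ restricted to $K$ is small, or symmetrically $S\cap K$ is small --- more precisely I expect the right statement to be that the number of vertices of $K$ on the side of $S$ not containing the bulk is bounded by $k$, because each such vertex forces an edge of $\Delta(S)$ (this is exactly the kind of counting used in Claim~\ref{init} and Claim~\ref{unicity}). Granting this, every relevant left cut $S$ either misses $K$ entirely, or cuts off at most $k$ vertices of $K$, i.e. $|K\setminus S|\le k$ or $S\cap K$ has bounded "small side." Now apply Erd\H{o}s--Rado: from $|K|\ge er(k!,k')$ and the fact that all the relevant traces have size at most $k!$ (there being at most $k!$ left cuts per terminal, hence the relevant set-system restricted to $K$ has bounded-size members), extract a sunflower $K'$ of $k'$ petals. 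For the sunflower core $Y$, any relevant left cut $S$ either contains all of $K'$, avoids all of $K'$, or separates the petals from each other in a controlled way; but a cut of border $\le k$ can "use up" at most $k$ of the $k'$ disjoint petals, so it either contains $K'$ (whence $K'\setminus S=\emptyset$) or leaves at most $k$ petals outside (whence $|K'\setminus S|\le k$), as required.

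I would carry this out in the order: (1) root $G$, reduce general left cuts to indivisible left $xy$-cuts via Corollary~\ref{indlef}; (2) bound, for a fixed left cut $S$ with $\delta(S)\le k$, the number of vertices of $K$ it can separate from the $x$-side by $\delta(S)\le k$, using that distinct such vertices lying on the far side would each contribute an edge to $\Delta(S)$ after a suitable contraction (the submodular/modular arguments already in Claim~\ref{init}); (3) bound the total number of distinct relevant traces on $K$ by invoking Theorem~\ref{boundsize}; (4) invoke Erd\H{o}s--Rado to get $K'$ and its core; (5) verify the dichotomy $S\cap K'=\emptyset$ or $|K'\setminus S|\le k$ for every left cut $S$ with $\delta(S)\le k$ by the pigeonhole argument on the $k'\gg k$ disjoint petals; (6) track FPT running time, which follows since the only non-polynomial ingredients are Theorem~\ref{boundsize} (FPT) and $er(k!,k')$ (a function of $k$ only). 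The main obstacle I anticipate is step~(2)/(5): making precise why a bounded-border left cut cannot shatter $K$, i.e.\ pinning down the exact sense in which "each separated vertex of $K$ costs an edge of the border." A naive version is false (a single cut can have many $K$-vertices on each side if $K$ is large), so the real content is that *left-ness* plus the sunflower structure forces the far side of $S$ to meet $K'$ in at most $\delta(S)$ petals; the clean way to see this is that the petals are pairwise disjoint and each petal meeting $\overline S$ non-trivially while the core lies in $S$ forces a distinct border edge. Getting the quantifiers and the "$\le k$" bound exactly right there is where the care goes.
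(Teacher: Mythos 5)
Your high-level ingredients are the right ones (root $G$, bound each family $C^y_{\leq k}$ by $k!$ via Theorem~\ref{boundsize}, then apply Erd\H{o}s--Rado to the family $\{C^y_{\leq k}\}_{y\in K}$), and this matches the paper's plan, but the verification step is genuinely wrong and the gap you flagged at the end is not one you close.

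The central problem is that you try to extract the dichotomy from a border-counting pigeonhole: ``each petal meeting $\overline S$ non-trivially forces a distinct border edge, so a cut with $\delta(S)\leq k$ isolates at most $k$ petals.'' This does not work. First, the petals of the $\Delta$-system $\{C^y_{\leq k}\}_{y\in K'}$ are collections of \emph{cuts}, not subsets of $V(G)$ or $E(G)$; disjointness of petals in the sunflower sense gives you no geometric disjointness in $G$ that a border edge could ``use up.'' Second, even reinterpreted on the vertices themselves, the claim ``each isolated $y\in K'$ costs one edge of $\Delta(S)$'' is false --- arbitrarily many vertices of $K$ can lie in a single component of $\overline S$ and share a single border edge. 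You noticed this (``a naive version is false'') but did not replace it with a correct step. The actual argument is purely combinatorial, using the $\Delta$-system property directly: an indivisible left cut $S$ with $\delta(S)\leq k$ that isolates $y\in K'$ must lie in $C^y_{\leq k}$; if it lies in two such sets for $y,y'\in K'$, then it lies in the common core $C$, and cuts in $C$ isolate \emph{all} of $K'$; otherwise it lies in at most one, so it isolates at most \emph{one} vertex of $K'$. The ``$\leq k$'' in the statement is then not a bound on border edges at all --- it comes from the separate step you omitted entirely: a \emph{divisible} left cut $S$ has at most $k$ components in $G\setminus S$ (each contributes at least one border edge), and by Lemma~\ref{divind} each component $Z$ yields an indivisible left cut $\overline Z\in C_{\leq k}$, so $S$ isolates at most $k$ vertices of $K'$ unless some $\overline Z$ (hence $S$) isolates all of $K'$.

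A smaller but real issue: you also assert that a left cut ``coincides with'' an indivisible left cut by Corollary~\ref{indlef}. That corollary only says an indivisible cut \emph{contains} an indivisible left cut of no larger border; a left cut can itself be divisible, and the reduction to the indivisible case is exactly the component-splitting step above, not an identification.
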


\begin{proof}
In other words, every left cut with border at most $k$ isolates either all the elements
of $K'$, or at most $k$ elements of $K'$.
Let us consider the collection $\mathcal{C}$ of sets $C^y_{\leq k}$, for all $y\in K$.
By Theorem~\ref{boundsize}, the collection $\mathcal{C}$ has size
bounded in terms of $k$ and $k'$ and can be computed 
in FPT time.
Since the sets $C^y_{\leq k}$ have size at most $k!$ and 
the set $K$ has size $er(k!,k')$, there exists a
$\Delta $-system of size $k'$, \ie{}a subset 
$K'$ of $k'$ vertices of $K$ such that for all $y,y'\in K'$, we have 
$C^{y'}_{\leq k}\cap C^{y}_{\leq k}$ equal to some fixed set $C$
of $C_{\leq k}$. This set $K'$ is computable in bounded time.
Every cut $S$ in $C$ satisfies $S\cap K'=\emptyset$, \ie{}the cuts in $C$ isolate $K'$. Moreover,
if a cut $S$ in $C_{\leq k}$ does not belong to $C$, then
$S$ belongs to at most one $C^y_{\leq k}$, hence isolates at most
one vertex of $K'$.

Thus we have proved so far that the conclusion of Theorem~\ref{extractleft}
holds if $S$ is an indivisible left cut with border or size at most $k$, with 
the stronger conclusion that $S$ isolates at most one vertex of 
$K'$ when it does not completely cut $K'$. To obtain our conclusion,
let us observe that if $S$ is divisible and $Z$ is a component of 
$G\setminus S$, then by Lemma~\ref{divind} the cut $\overline Z$
belongs to $C_{\leq k}$. Hence either $\overline Z$ isolates $K'$, or 
$\overline Z$ isolates at most one vertex of $K'$. Since the number of components of $G\setminus S$
is at most $k$, we have our conclusion.
\end{proof}

The next result is our key tool in reducing \M{} to graphs of bounded 
treewidth in~\cite{DPPT09}, but the following easy argument is more 
straightforward. The proof of Theorem 4 of~\cite{DPPT09} implies that the 
following result holds with $h(\ell)=\ell^{O(\ell)}$, and can be computed 
in time $\ell^{O(\ell)}n^c$.
 
\begin{theorem} \label{extractright}
Every set $K$ with at least $h(\ell):=\ell.2^{\ell !}+1$ 
vertices of $G$ contains a vertex $y$ such that every cut $S$ with $\delta (S)+|S\cap K|\leq \ell$
is such that $y\notin S$. Moreover, $y$ is computed in FPT time.
\end{theorem}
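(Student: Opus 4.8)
The plan is to show the contrapositive in the following packaged form: if $K$ is a large set of vertices, then there is a vertex $y \in K$ which cannot be isolated, \emph{together with a bounded number of its $K$-neighbours}, by a cheap cut. More precisely, I would iterate the following claim: given a set $K$ of vertices, either there is a ``robust'' vertex $y$ as required, or we can extract a smaller set $K_1 \subseteq K$ on which \emph{every} vertex is ``moderately cheap'' to isolate, and repeat. To make this precise, fix the root $x$ of Section~\ref{sectionconnectivity} (if $G$ has no root, pick an arbitrary one; the statement is about cuts as subsets containing $x$). For a vertex $v$, say $v$ is \emph{$\ell$-expensive} if no cut $S$ with $\delta(S) + |S \cap K| \le \ell$ contains $v$. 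We want to find an $\ell$-expensive vertex inside $K$.

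First I would invoke Theorem~\ref{extractleft} (or rather its proof technique via Theorem~\ref{boundsize}) to control how left cuts of small border interact with $K$: the indivisible left cuts of border at most $\ell$ that meet $K$ form, once we pass to a $\Delta$-system, a collection with the ``all or at most $\ell$'' dichotomy. The point is that $\delta(S) + |S \cap K| \le \ell$ forces in particular $\delta(S) \le \ell$, so only those cheap left cuts are relevant, and there are only $\le \ell!$ of them per terminal by Theorem~\ref{boundsize}. Now, for a candidate vertex $y$ I would look at the cheapest cut $S_y$ (in the measure $\delta(\cdot) + |\cdot \cap K|$) that contains $y$; by Corollary~\ref{indlef} and Lemma~\ref{divind} I may assume $S_y$ is decomposed into indivisible left cuts, each of border $< \ell$. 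The combinatorial heart is: if too many vertices of $K$ are simultaneously cheap to isolate, the corresponding indivisible left cuts come from a bounded universe (size $\le \ell \cdot \ell!$, say, since there are $\le \ell!$ relevant indivisible left cuts per vertex and the cheap-cut constraint only allows $\le \ell$ ``layers''), so by pigeonhole two cheap vertices share the same family of isolating left cuts — but then by the union-closure and strict-decrease Lemmas~\ref{unioncut} and~\ref{smallsizecut}, the union of their cheap cuts is still cheap and now isolates both from $x$, contradicting minimality unless they were already isolated together. Pushing this counting argument gives that at most $\ell \cdot 2^{\ell!}$ vertices of $K$ can be pairwise ``independently cheap'', whence $h(\ell) = \ell \cdot 2^{\ell!} + 1$ forces an $\ell$-expensive vertex $y$.

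Concretely, the bound $2^{\ell!}$ is exactly the number of subsets of the $\le \ell!$ indivisible left $xv$-cuts of small border; so I would associate to each $v \in K$ the \emph{set} of indivisible left cuts of border $\le \ell$ that isolate $v$ and whose removal-cost into $K$ is affordable, observe this set lives in a universe of size $\le \ell!$ (this is where Theorem~\ref{boundsize} is used, applied around $v$ after contracting $K \setminus \{v\}$ appropriately to charge $|S \cap K|$ to the border), and note that two vertices with the same such set and in the same ``side pattern'' can be merged via the submodularity arguments to produce a single cheap cut isolating both. Iterating the merging $\ell$ times (each merge drops the available budget, since the shared part has border counted with multiplicity) shows the number of distinct types is $\le \ell \cdot 2^{\ell!}$, and the $+1$ yields a vertex avoided by all cheap cuts.

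The main obstacle I anticipate is handling the mixed measure $\delta(S) + |S \cap K|$ rather than the pure border $\delta(S)$: the results of Section~\ref{sectionconnectivity} (left cuts, Theorems~\ref{boundsize} and~\ref{extractleft}) are stated for borders only, so one must first reduce the modified objective to a plain border objective, presumably by a gadget that attaches, to each vertex $v \in K$, a new pendant edge to a common apex or to $y$-side, so that deleting $v$ into a cut costs one extra border edge. Getting this reduction to interact cleanly with ``left cut'', ``indivisible'', and the $\Delta$-system extraction — without blowing up the constant beyond $2^{\ell!}$ — is the delicate part; once the reduction is in place, the union/submodularity merging argument and the pigeonhole over the $2^{\ell!}$ possible cut-families are routine.
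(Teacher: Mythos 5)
You have correctly guessed the two load-bearing devices: a gadget that converts the mixed objective $\delta(S)+|S\cap K|$ into a pure border condition, and the bound $2^{\ell!}$ arising from subsets of the at most $\ell!$ indivisible left cuts of Theorem~\ref{boundsize}. However, the combinatorial core is assembled incorrectly, and as written it would not close.

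First, the per-vertex universe is the wrong object. You propose to apply Theorem~\ref{boundsize} ``around $v$ after contracting $K\setminus\{v\}$'', which yields a \emph{different} universe of $\le \ell!$ indivisible left cuts for each $v\in K$; the pigeonhole over $2^{\ell!}$ patterns needs all vertices of $K$ to be classified against one and the \emph{same} family of cuts, otherwise ``same pattern'' is meaningless. The paper's fix is to add a single apex $z$ adjacent to all of $K$, declare $z$ the new root, and take the family $C$ of indivisible left $zx$-cuts of border $\le \ell$ in $G'$; this is one fixed universe of size $\le\ell!$, and the extra edges $\{zb:b\in B\}$ charge $|S\cap K|$ to the border exactly as needed. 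Contracting $K\setminus\{v\}$ does not do this cleanly and changes the cut structure vertex by vertex.

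Second, the final step is not a ``merge $\ell$ times via submodularity'' argument. Once each $v\in K$ is tagged by the subset of $C$ it belongs to, a single pigeonhole over the $\le 2^{\ell!}$ tags extracts a set $T\subseteq K$ of size $\ge\ell+1$ on which every cut in $C$ is all-or-nothing. The number $\ell+1$ is the whole point: if some $S\in C$ were disjoint from $T$, then all $\ell+1$ edges from the apex $z$ to $T$ would lie in $\Delta(S)$, exceeding the budget $\ell$; hence $T\subseteq S$. From there, for any set $A$ of $a$ edges in $G$ leaving only $b$ vertices of $K$ reachable from $x$ with $a+b\le\ell$, the corresponding $zx$-cut in $G'$ has border $\le\ell$, Corollary~\ref{indlef} gives an $S\in C$ contained in its complement-of-$x$ side, and $T\subseteq S$ shows $y\in T$ is disconnected from $x$ already in $G$. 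Your iterated-merging scheme has no analogue of this ``$\ell+1$ prevents $T$ from being on the far side'' step, and the Lemma~\ref{smallsizecut}/\ref{unioncut} union arguments you invoke do not by themselves reduce the count of types to $\ell\cdot 2^{\ell!}$. (Also, Theorem~\ref{extractleft} plays no role here; the proof of Theorem~\ref{extractright} is independent of it.)
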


\begin{proof}
In other words, whenever the deletion of a set of $a$ edges 
isolates $x$ from all but $b$ elements of $K$, with $a+b\leq \ell$, then
the vertex $y$ is also isolated from $x$.
We create a new vertex $z$ joined to all the vertices of $K$ to form 
a new graph $G'$. In this proof we change our point of view and 
consider that $z$ is our root vertex for $G'$, \ie{}$z$ plays
for $G'$ the same role as $x$ plays for $G$. In $G'$, the set 
$C$ of indivisible left $zx$-cuts with border 
at most $\ell$ has size at most $\ell !$ by Theorem~\ref{boundsize}. Hence, since the size of $K$
is at least $\ell.2^{\ell !}+1$, there exists a subset $T$ of $K$
of size at least $\ell +1$ such that for every cut $S$
in $C$, we have either $T\subseteq S$ or $T\cap S=\emptyset$.
We compute such a set $T$ in FPT time. We then pick $y$ in $T$.
Let us prove that $y$ satisfies 
the conclusion of Theorem~\ref{extractright}.

In the graph $G$, consider a set $A$ of $a$ edges which isolates $x$ from all 
the elements of $K$ save a subset $B$ of size $b$ with $a+b\leq \ell$.
Let $F$ be the set of edges $A\cup \{zb~:~b\in B\}$ of $G'$. Note
that $F$ is a $zx$-edge cut. We denote by $X$ the component of 
$x$ in $G'\setminus F$. Since $V(G')\setminus X$ is an indivisible $zx$-cut 
with border at most $\ell$, it contains by Corollary~\ref{indlef} a indivisible left $zx$-cut $S$
with border at most $\ell$, in other words $S$ belongs to $C$.

Let us first observe that the set $T$ cannot be disjoint from $S$. Indeed
$T$ has size $\ell +1$ and each of its elements is joined to $z$, thus the border
of $S$ would exceed $\ell$. Hence $T$ is included in $S$, thus
the set of edges $A$ isolates $T$ from $x$, and in particular $y$ from $x$. This concludes our proof.
\end{proof}

\begin{corollary} \label{boundeddegree}
In a multicut problem instance with deletion allowance $k$, the maximum request degree
can be reduced to at most $h(k+1)$ in FPT time.
\end{corollary}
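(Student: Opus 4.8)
The plan is to derive this from Theorem~\ref{extractright}: a terminal of large request degree must send a request that is \emph{irrelevant}, and such a request can be removed. Concretely, suppose some terminal $x$ has request degree larger than $h(k+1)$, and let $K:=\{t : xt\in R\}$, so that $|K|\geq h(k+1)+1$. Regard $x$ as the root of $G$ and apply Theorem~\ref{extractright} with $\ell:=k+1$; since $|K|\geq h(\ell)$, it produces in FPT time a vertex $y\in K$ such that every cut $S$ with $\delta(S)+|S\cap K|\leq k+1$ satisfies $y\notin S$.

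Next I check that the request $xy$ can be deleted, \ie{} that every multicut $F$ of $R\setminus xy$ with $|F|\leq k$ already separates $x$ from $y$. Let $S$ be the connected component of $x$ in $G\setminus F$, viewed as a cut in the sense of Section~\ref{sectionconnectivity} (a vertex subset containing the root $x$). Every edge of $\Delta(S)$ belongs to $F$, so $\delta(S)\leq k$. Moreover, for each $t\in K\setminus\{y\}$ the pair $xt$ is a request of $R\setminus xy$, hence $F$ separates $x$ from $t$ and therefore $t\notin S$; thus $|S\cap K|\leq 1$. It follows that $\delta(S)+|S\cap K|\leq k+1$, and Theorem~\ref{extractright} gives $y\notin S$, \ie{} $F$ separates $x$ from $y$. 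Hence replacing $R$ by $R\setminus xy$ yields an equivalent instance: a multicut of $R$ of size at most $k$ is trivially a multicut of $R\setminus xy$, and conversely any multicut of $R\setminus xy$ of size at most $k$ also cuts $xy$ and hence all of $R$.

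Deleting $xy$ lowers the request degrees of $x$ and of $y$ by one and raises no request degree, so iterating this step while some terminal has request degree larger than $h(k+1)$ terminates after at most $|R|\leq n^2$ request deletions, leaving an equivalent instance in which every request degree is at most $h(k+1)$. Each step calls Theorem~\ref{extractright} once and so runs in FPT time, and therefore so does the whole reduction. The argument is essentially a direct corollary; the only point needing a little care is the choice $\ell=k+1$, whose single unit of slack over $k$ exactly accounts for the one partner of $x$ --- namely $y$ itself --- that a size-$k$ multicut of $R\setminus xy$ is permitted not to separate from $x$, together with the observation that the component of $x$ in $G\setminus F$ is a legitimate cut to which Theorem~\ref{extractright} applies.
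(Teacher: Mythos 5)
Your proof is essentially the paper's own argument, with the same structure: pick a terminal $x$ with too many request partners $K$, apply Theorem~\ref{extractright} with $\ell=k+1$ to identify a vertex $y\in K$, then show $xy$ is irrelevant by observing that the component $S$ of $x$ in $G\setminus F$ has $\delta(S)\le k$ and $|S\cap K|\le 1$, so $y\notin S$ and $F$ already cuts $xy$. You spell out the termination of the iteration and the rationale for the one unit of slack in $\ell$, which the paper leaves implicit, but the underlying idea and all key steps are identical.
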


\begin{proof}
Let $x$ be a vertex which sends requests to all vertices of $K$, where $|K|\geq h(k+1)$. By 
Theorem~\ref{extractright}, there is a vertex $y$ of $K$ such that 
every subset $S$ containing $x$ such that $\delta (S)+|S\cap K|\leq k+1$
is such that $y\notin S$. We simply delete the request $xy$ from 
the set of requests. Indeed, let $F$ be a multicut with 
size at most $k$ of this reduced instance. Let $S$ be the component 
of $x$ in $G\setminus F$. Since $F$ is a multicut, no element 
of $K\setminus y$ belongs to $S$. Moreover $\delta (S)\leq k$ since
at most  $k$ edges are deleted. Thus $\delta (S)+|S\cap K|$ is at most $k+1$, hence 
this implies that $y\notin S$. In other words, even if we do not
require to cut $x$ from $y$, any multicut of the reduced instance
will cut the request $xy$. Therefore the request $xy$ can be deleted from $R$.
\end{proof}


\subsection{Cherry reduction}

An \emph{$x$-cherry}, or simply \emph{cherry} is a connected induced
subgraph $C$ of $G$ with a particular  vertex $x$ called \emph{attachment vertex} of $C$
such that there is no edge from $C\setminus x$ to $G\setminus C$ and 
no request has its two terminals in $C\setminus x$. In other words the only requests 
inside an $x$-cherry have origin $x$. Note that we can always assume that the restriction 
of a multicut to an $x$-cherry $C$ is the border of a left cut of $C$. 
If $u\in C\setminus x$, a request $uv\in R$ is \emph{irrelevant} if
every multicut $F$ with at most $k$ edges of the reduced instance on $R\setminus uv$
and such that $F\cap C$ is the border of a left cut in $C$ actually separates $u$ from $v$.

\begin{theorem} \label{cherryrequestreduction}
Let $C$ be an $x$-cherry of a instance with deletion allowance $k$. We can 
find in FPT time a set $K(C)$ of at most $b(k):=h(k+1).er(k!,h(2k+1))$ terminals in 
$C\setminus x$, such that if $F$ is a set of at most $k$ edges which cuts all 
requests with one endpoint in $K(C)$ and such that $F\cap C$ is the border of a left cut, then $F$ actually cuts all requests with an endpoint in $C\setminus x$. 
\end{theorem}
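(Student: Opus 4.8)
The strategy is to bound the number of ``relevant'' terminals in the cherry $C\setminus x$ in two stages, first reducing the request degree of $x$ inside $C$ and then using the $\Delta$-system machinery to certify that all but boundedly many of the remaining terminals are irrelevant. Recall that every request with a terminal in $C\setminus x$ has its other terminal equal to $x$ (by the definition of cherry), so ``relevant terminals in $C\setminus x$'' is the same data as ``the requests living inside $C$''. The plan is to combine Corollary~\ref{boundeddegree} (applied inside $C$, or rather its effect on the request degree of $x$) with Theorem~\ref{extractright} and Theorem~\ref{extractleft} in the following way.

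First I would handle the one-sided direction coming from $x$. Since any multicut restricted to $C$ may be taken to be the border of a left cut $S$ of $C$ with $\delta_C(S)\le k$, and since $x\notin S$ for such left cuts by convention (it is the attachment vertex/root), I apply Theorem~\ref{extractright} with parameter $\ell=k+1$ to the set $K$ of terminals receiving a request from $x$: if $|K|\ge h(k+1)$ then there is a vertex $y$ that is cut from $x$ by every set $S$ with $\delta(S)+|S\cap K|\le k+1$, so the request $xy$ is irrelevant among these ``left-cut'' multicuts and can be dropped, exactly as in the proof of Corollary~\ref{boundeddegree}. Iterating, I reduce to the situation where at most $h(k+1)$ terminals of $C\setminus x$ actually send a live request; equivalently the request degree of $x$ inside $C$ is at most $h(k+1)$.

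Second I would handle the other direction: for a fixed terminal $u$ of $C\setminus x$, whether the request $xu$ is cut by a left-cut multicut $F$ is determined by whether $u$ lies in the left cut $S=F\cap C$ with $\delta_C(S)\le k$. So I apply Theorem~\ref{extractleft} inside $C$ with parameters $k$ and $k'=h(2k+1)$: any set of $er(k!,h(2k+1))$ terminals of $C\setminus x$ contains a subset $K'$ of size $h(2k+1)$ such that every left cut $S$ of $C$ with $\delta_C(S)\le k$ either isolates all of $K'$ or isolates at most $k$ vertices of $K'$. Now if such an $S$ isolates all of $K'$ from $x$ then in particular it cuts the request from the retained representative of $K'$; and if it isolates at most $k$ of them, the ``surviving'' terminals of $K'$ still in the same component as $x$ form a set of size $\ge h(2k+1)-k\ge h(2k+1)$ minus a bounded amount — here I would instead invoke Theorem~\ref{extractright} once more (this is why the bound is $er(k!,h(2k+1))$ rather than $er(k!,h(k+1))$): among the surviving terminals there is one, call it $y$, such that the cut $S$, which has $\delta(S)\le k$ and isolates all but $\le k$ of these terminals, i.e. satisfies $\delta(S)+|S\cap(\text{survivors})|\le 2k+1$, must also cut $y$ from $x$. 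So keeping $K(C)$ a set of at most $h(k+1)\cdot er(k!,h(2k+1))$ terminals — namely, for each of the $\le h(k+1)$ terminals that still send a live request after stage one, a ``witness pool'' of $er(k!,h(2k+1))$ terminals — suffices: any left-cut $F$ of size $\le k$ cutting all requests with an endpoint in $K(C)$ automatically cuts every request with an endpoint in $C\setminus x$.

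The main obstacle I expect is the second stage: getting the quantifier order right in the reduction from ``$F$ cuts the $K(C)$-requests'' to ``$F$ cuts all requests in $C$''. The subtlety is that $F$ is only assumed to be a global multicut of size $\le k$ that happens to restrict to a left cut of $C$, and its restriction $S$ to $C$ has $\delta_C(S)\le k$ but we must feed the combined quantity $\delta(S)+|S\cap K'|$ (of the order $2k+1$) into Theorem~\ref{extractright}, which is exactly why the inner parameter must be $h(2k+1)$ and not $h(k+1)$, and why the two extraction theorems must be chained rather than used in isolation. Checking that the left-cut normalisation of $F\cap C$ can be done without increasing $|F|$ and without uncutting any already-cut request (so that it is legitimate to restrict attention to left-cut multicuts throughout) is the other point that needs care; this is the content of the remark that ``we can always assume that the restriction of a multicut to an $x$-cherry $C$ is the border of a left cut of $C$'', and I would spell it out using Corollary~\ref{indlef} applied to the component of $x$ in $C\setminus F$.
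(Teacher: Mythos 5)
Your proposal rests on a misreading of the definition of an $x$-cherry. The definition only forbids requests with \emph{both} endpoints in $C\setminus x$; terminals of $C\setminus x$ are perfectly free to send requests to vertices \emph{outside} $C$ entirely, not only to $x$. Your opening ``Recall that every request with a terminal in $C\setminus x$ has its other terminal equal to $x$'' is therefore false, and the two stages you build on it do not recover. Concretely: the paper's proof hinges on a matching $M$ that pairs each chosen terminal $z\in C\setminus x$ with a distinct mate $v$ outside $C\setminus x$, and the crucial second application of Theorem~\ref{extractright} takes place in the outside graph $G'=G\setminus(C\setminus x)$ on the set $L'$ of mates, \emph{not} inside $C$ on a set of ``survivors''. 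You need that external application precisely because, for a left cut $S$ of $C$ that keeps some $z\in K'$ attached to $x$, the request from $z$ is only guaranteed to be cut if $F\setminus C$ separates $x$ from $z$'s mate $v$ — and that obligation attaches to at most $k$ distinct external vertices in $L'$ (so $\delta(S')+|S'\cap L'|\leq 2k+1$ in $G'$, which is what triggers $h(2k+1)$). Your version tries to apply Theorem~\ref{extractright} to the survivors themselves inside $C$, but the survivors are by definition the vertices of $K'$ that \emph{are} in $S$ (still attached to $x$), while Theorem~\ref{extractright} delivers a vertex $y$ that is \emph{not} in $S$ — so $y$ cannot be a survivor, and the quantifier order is upside down. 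Worse, under your premise the survivor's request partner is $x$, and a survivor is by definition still connected to $x$, so that request is simply not cut; nothing you do with $\Delta$-systems can repair that.

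Two smaller issues. First, your ``stage one'' (bounding the request degree of $x$ inside $C$) is not what the $h(k+1)$ factor in $b(k)$ is for; Corollary~\ref{boundeddegree} is applied to bound the degree of \emph{all} terminals, and this degree bound is used only to extract a matching of size $er(k!,h(2k+1))$ from the bipartite request graph on $L$ via the elementary $|X|/d$ matching bound — you never construct this matching, and without it the witnesses for distinct terminals of $K'$ are not forced to be distinct, so the counting in the final step fails. Second, you write ``$x\notin S$ for such left cuts by convention''; the paper's convention is the opposite (a cut is a vertex set \emph{containing} the root $x$), and $S$ is taken to be the component of $x$ in $C\setminus F_C$.
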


\begin{proof}
By Corollary \ref{boundeddegree}, we can assume that
all terminals have request degree at most $h(k+1)$. 
Let $L$ be the subset of terminals of $C\setminus x$. 
We assume $|L|> b(k)$. 
Our goal is to show that there exists an irrelevant 
request with one endpoint in $L$. Let us consider the bipartite graph
formed by the set of requests with one endpoint in $L$. It is bipartite since $C\setminus x$ has no internal requests. Recall that
if a bipartite graph $(X,Y)$ has maximum degree $d$ and minimum degree one, there exists 
a matching with at least $|X|/d$ edges. To see this, observe that the edges of $(X,Y)$ can be partitioned into $d$ matchings
and that the graph contains at least $|X|$ edges. 

The request graph thus contains a matching $M$
of size at least $er(k!,h(2k+1))$ such that each request in $M$ has 
one endpoint in $L$ and the other endpoint out of $C\setminus x$. Let 
$K:=V(M)\cap V(C\setminus x)$. We first only consider the cherry $C$ where $x$ is the root. 
Since the size of $K$ is at least $er(k!,h(2k+1))$,
the set $K$ contains by Theorem~\ref{extractleft} a subset $K'$ of size $h(2k+1)$ such that every left cut $S$ with
border at most $k$ verifies $S\cap K'=\emptyset$ or $|K'\setminus S|\leq k$. 
Let $M'$ be the set of edges of $M$ having an endpoint in $K'$.
We denote by $L'$ the set of vertices $M'\setminus K'$, \ie{}the endpoints
of edges in $M'$ which do not belong to $C\setminus x$. Now let us consider the 
graph $G':=G\setminus (C\setminus x)$ with root $x$.
The set $L'$ has size at least $h(2k+1)$, thus by Theorem~\ref{extractright} 
there is a vertex $y$ in $L'$ such that whenever we delete $k$ edges 
in $G$ such that at most $k$ vertices of $L'$ belong to the component
of $x$, then $y$ does not belong to the component of $x$.
The vertex $y$ being an element of $L'$, we consider the request
$zy\in M'$, where $z$ belongs to $V(C\setminus x)$. 

We claim that the request $zy$ is irrelevant.
Indeed, let $F$ be a multicut of $R\setminus zy$ with at most $k$ edges 
such that $F_C=F\cap C$ is the border of a left cut. Let $S$ be the 
component of $x$ in $C\setminus F_C$. The set $S$ is a left cut and has border of size
at most $k$, hence either $S$ completely isolates $x$ from $K'$ or $S$
isolates at most $k$ vertices of $K'$ from $x$. If $K'$ is isolated from
$x$, we have in particular that $x$ is disconnected from $y$, hence
the request $zy$ is cut by $F$. So we assume that a subset $K''$ containing all but at most $k$ vertices 
of $K'$ is included in $S$. Hence, denoting by $L''$ the 
other endpoints of the edges of $M'$ intersecting $K''$, this means
that $F$ must disconnect $x$ from $L''$. Therefore the set $F$ of at most $k$ edges
disconnects $x$ from at most $k+1$ elements of $L'$ (the $k$ elements of $L''$
and possibly $y$), so by definition of $y$, the set $F$ disconnects 
$x$ from $y$. In particular $zy$ is cut by $F$. Thus the request $zy$ is indeed irrelevant. All the computations so far are FPT.

We repeat this process, removing irrelevant requests until the size of $L$ does not exceed 
$b(k)$. We then set $K(C):=L$, and the conclusion of Theorem~\ref{cherryrequestreduction} holds.
\end{proof}

Let $C$ be a cherry of a graph $G$ with deletion allowance $k$. A subset 
${\mathcal L}$ of the edges of $C$ is \emph{active} when, if a multicut $F$ 
of size at most $k$ exists, then there exists a multicut $F'$ of size at most $|F|$ such that 
$F'\setminus C = F\setminus C$ and $F'\cap C \subseteq {\mathcal L}$. When 
the set ${\mathcal L}$ is clear from the context, we say by extension that 
edges of ${\mathcal L}$ are \emph{active}.

\begin{lemma} \label{finitelycut}
Let $C$ be an $x$-cherry of a graph $G$ with deletion allowance $k$, and 
let $K$ be the set of all terminals of $C\setminus x$. Let ${\mathcal L}(C)$ be the 
union of all borders of cuts of $C^y_{\leq k}$, where $y\in K$. Then 
${\mathcal L}(C)$ is active, and has size at most $k|K|k!$.
\end{lemma}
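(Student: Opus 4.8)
\textbf{Proof plan for Lemma~\ref{finitelycut}.}
The plan is to show that among all multicuts, there is always an optimal one whose restriction to the cherry $C$ lies inside ${\mathcal L}(C)$, and then to bound the size of ${\mathcal L}(C)$ by a direct count. The size bound is immediate: $C^y_{\leq k}$ has at most $k!$ cuts by Theorem~\ref{boundsize}, each cut has border of size at most $k$, and we take the union over the $|K|$ terminals $y\in K$; hence $|{\mathcal L}(C)|\le k\cdot|K|\cdot k!$. So the real content is the activity claim.

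First I would take an arbitrary multicut $F$ of size at most $k$ and let $S$ be the component of $x$ in $C\setminus(F\cap C)$; as noted in the text, we may assume $F\cap C=\Delta(S)$ for some left cut $S$ of $C$ with $\delta(S)\le k$. If $S$ is indivisible, then for every terminal $y\in K$ with $y\notin S$ the set $S$ is an indivisible left $xy$-cut with border at most $k$. The issue is that $S$ need not be minimal, i.e.\ $S$ need not lie in $C^y_{\le k}$ for the relevant $y$. Here I would invoke Corollary~\ref{indlef} (or rather its left-cut refinement): replace $S$ by a smaller indivisible left cut $S'\subseteq S$ with $\delta(S')\le\delta(S)$; then $S'$ still cuts every terminal of $K$ that $S$ cut (since $S'\subseteq S$ keeps those terminals outside), and the new multicut $F':=(F\setminus C)\cup\Delta(S')$ is still a multicut of size at most $|F|$, with $F'\cap C=\Delta(S')$. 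Now $\Delta(S')\subseteq{\mathcal L}(C)$: for each terminal $y\in K\setminus S'$ we have $S'\in C^y_{\le k}$ so $\Delta(S')$ is one of the borders in the union defining ${\mathcal L}(C)$ — we only need one such $y$ to exist, and one does, because $S'$ must disconnect $x$ from \emph{some} terminal of $C\setminus x$ or else $\Delta(S')$ could simply be removed from the multicut without harm (shrinking $F$).

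If $S$ is divisible, I would not work with $S$ directly but with its pieces: for each component $Z$ of $C\setminus(F\cap C)$ other than the one containing $x$, look at $\overline Z$ inside $C$, which by Lemma~\ref{divind} is an indivisible left cut with strictly smaller border. Intuitively $\Delta(S)$ is the disjoint union, over the non-$x$ components $Z$, of the borders $\Delta(\overline Z)$ restricted appropriately — more precisely the multicut inside $C$ can be reassembled from the indivisible left cuts separating $x$ from each terminal, so one reduces to the indivisible case component by component and takes the union of the resulting borders, each of which lands in ${\mathcal L}(C)$ by the argument above. The main obstacle, and the point needing the most care, is exactly this bookkeeping in the divisible case: making sure that after replacing the restriction-to-$C$ of $F$ by a union of indivisible-left-cut borders drawn from the $C^y_{\le k}$, the total number of deleted edges does not increase and every request with an endpoint in $C\setminus x$ is still cut. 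Both follow because each original non-$x$ component is contained in the complement of the corresponding indivisible left cut, so no terminal is ``released'' and the new border is contained in the old one.
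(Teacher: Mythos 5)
Your proof is correct and follows essentially the same route as the paper: pass to a left cut inside $C$, decompose its complement into connected components, drop the components containing no terminal, and observe via Lemma~\ref{divind} that the remaining component-borders land in $\bigcup_{y\in K} C^y_{\le k}$. One small misreading is worth flagging: your worry that ``$S$ need not be minimal, i.e.\ $S$ need not lie in $C^y_{\le k}$'' is unfounded, since $C^y_{\le k}$ is by definition the set of \emph{all} indivisible left $xy$-cuts of border at most $k$, with no extra minimality condition; so once $S$ is an indivisible left cut with a terminal outside it, $S\in C^y_{\le k}$ already holds and the appeal to Corollary~\ref{indlef} is a no-op (indeed, since $S$ is a left cut, any cut $S'\subsetneq S$ has $\delta(S')>\delta(S)$, so the corollary can only return $S'=S$).
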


\begin{proof}
Assume that $F$ is a multicut with size at most $k$. Let $S$ be the component 
of $x$ in $C\setminus F$. Let $T$ be a left cut with $T\subseteq S$ and $\delta(T)\le \delta(S)$. 
If a component $U$ of 
$\overline T$  does not intersect $K$, the set $\Delta(T)\setminus \Delta (U)$ is still a multicut.
Finally, we can assume that all components $U$ of $\overline T$ intersect $K$, in which
case $\Delta (U)\in C^y_{\leq k}$ for some $y$ in $K$, hence $\Delta (T)$ 
is included in ${\mathcal L}$. The set $F'=F\setminus C \cup \Delta(T)$ is a 
multicut, and the size bound for ${\mathcal L}$ follows from Theorem~\ref{boundsize}.
\end{proof}

\begin{theorem}\label{boundedcut}
Let $H_1,H_2,\dots ,H_p$ be $x$-cherries of a graph $G$ with deletion allowance $k$ such that $H_1\setminus x,H_2\setminus x,\dots ,H_p\setminus x$ 
are pairwise disjoint. Assume that for every $i$, $U_i:=H_1\cup \dots \cup H_i$ is a cherry.
Then every $U_i$ has a bounded active set ${\mathcal L}_i$ such that  $\mathcal{L}_j\cap U_i\subseteq \mathcal{L}_i$
whenever $i\leq j$.
\end{theorem}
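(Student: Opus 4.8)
The plan is to build the active sets $\mathcal{L}_i$ inductively, so that the nesting condition $\mathcal{L}_j \cap U_i \subseteq \mathcal{L}_i$ is maintained by construction. The base case is $i=1$: here $U_1 = H_1$ is an $x$-cherry, so by Lemma~\ref{finitelycut} it admits a bounded active set, which we take as $\mathcal{L}_1$ (namely the union of borders of cuts in $C^y_{\leq k}$ over all terminals $y$ of $H_1 \setminus x$). For the inductive step, suppose $\mathcal{L}_1, \dots, \mathcal{L}_{i-1}$ have been defined with the required properties; we must define $\mathcal{L}_i \subseteq E(U_i)$ that is active for $U_i$, is bounded, and satisfies $\mathcal{L}_i \cap U_{i-1} \subseteq \mathcal{L}_{i-1}$ (the remaining inclusions $\mathcal{L}_i \cap U_j \subseteq \mathcal{L}_j$ for $j < i-1$ then follow by composing with the inductive hypothesis, since $U_j \subseteq U_{i-1}$).

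For the step, the key observation is that $U_i = U_{i-1} \cup H_i$ with $H_i \setminus x$ disjoint from $U_{i-1} \setminus x$, and both $U_{i-1}$ and $U_i$ are $x$-cherries by hypothesis. First I would apply Lemma~\ref{finitelycut} to the cherry $U_i$ to obtain a bounded active set $\mathcal{L}(U_i)$; the issue is that this set need not respect $\mathcal{L}_{i-1}$ inside $U_{i-1}$. To fix this, I would take a multicut $F$ of size at most $k$ and first massage its restriction to $U_{i-1}$: since $\mathcal{L}_{i-1}$ is active for $U_{i-1}$, we may replace $F$ by an $F'$ with $F' \setminus U_{i-1} = F \setminus U_{i-1}$ and $F' \cap U_{i-1} \subseteq \mathcal{L}_{i-1}$, without increasing its size. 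The point is that $U_{i-1}$ is a cherry \emph{inside} $G$ (not just inside $U_i$), so activeness of $\mathcal{L}_{i-1}$ applies with respect to all of $G$, in particular the edges of $F$ lying in $H_i \setminus x$ or outside $U_i$ are untouched. Then, on the resulting multicut, I would perform the same left-cut contraction as in the proof of Lemma~\ref{finitelycut}, but now relative to the cherry $U_i$, being careful that the rerouting only modifies edges of $U_i$ and keeps the part already pushed into $\mathcal{L}_{i-1}$. Concretely, let $S$ be the component of $x$ in $U_i \setminus F'$, let $T \subseteq S$ be a left cut of $U_i$ with $\delta(T) \le \delta(S)$, discard components of $\overline{T}$ not meeting the terminal set, and observe each remaining component's complement lies in some $C^y_{\leq k}$. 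Taking $\mathcal{L}_i$ to be $\mathcal{L}_{i-1}$ together with the union of borders of all cuts in $C^y_{\leq k}$ ($y$ ranging over terminals of $U_i \setminus x$) restricted to $U_i$ — essentially $\mathcal{L}(U_i) \cup \mathcal{L}_{i-1}$ — gives a bounded set; its boundedness follows from Theorem~\ref{boundsize} and the bound $k|K|k!$ from Lemma~\ref{finitelycut} plus the inductive bound on $|\mathcal{L}_{i-1}|$.

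The main obstacle I anticipate is verifying that the two successive rerouting operations do not interfere: after pushing $F \cap U_{i-1}$ into $\mathcal{L}_{i-1}$, the second (left-cut) rerouting with respect to $U_i$ might alter edges inside $U_{i-1}$ and move them out of $\mathcal{L}_{i-1}$ again. To handle this cleanly I would argue that the left cut $T$ of $U_i$ chosen in the second step can be taken so that $\Delta(T) \cap U_{i-1}$ is again the border of a left cut of $U_{i-1}$ — indeed the component of $x$ in $U_i \setminus F'$ intersected with $U_{i-1}$ is a cut of $U_{i-1}$ whose border is contained in $F' \cap U_{i-1} \subseteq \mathcal{L}_{i-1}$, and replacing it by a contained left cut of $U_{i-1}$ only decreases the border, so the composed operation stays inside $\mathcal{L}_{i-1}$ on $U_{i-1}$ and inside $\mathcal{L}(U_i)$ elsewhere. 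Once this compatibility is in place, the size bounds and the nesting inclusions are routine bookkeeping, and all computations are FPT since they only invoke Lemma~\ref{finitelycut} and Theorem~\ref{boundsize} a bounded (in $p$, but $p \le n$) number of times with parameters bounded in $k$.
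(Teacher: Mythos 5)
There is a genuine gap: you never invoke Theorem~\ref{cherryrequestreduction}, and both the boundedness and the nesting condition break without it. Your candidate $\mathcal{L}_i$ is (essentially) $\mathcal{L}_{i-1}\cup\mathcal{L}(U_i)$ where $\mathcal{L}(U_i)$ is the set from Lemma~\ref{finitelycut} over \emph{all} terminals of $U_i\setminus x$. The size bound in Lemma~\ref{finitelycut} is $k|K|k!$ with $K$ the terminal set of the cherry, and $|K|$ grows with $i$ as more $H_j$'s are absorbed; so $\mathcal{L}(U_i)$ is not bounded in $k$, and neither is your $\mathcal{L}_i$. ``Bounded'' in the statement means bounded in terms of the deletion allowance alone, so this already fails. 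Worse, the union $\mathcal{L}_{i-1}\cup\mathcal{L}(U_i)$ does not satisfy the required inclusion $\mathcal{L}_i\cap U_{i-1}\subseteq\mathcal{L}_{i-1}$: the set $\mathcal{L}(U_i)$ contains borders of cuts $C^y_{\leq k}$ for every terminal $y$ of $U_{i-1}$, not just those surviving in $\mathcal{L}_{i-1}$, so $\mathcal{L}(U_i)\cap U_{i-1}$ may contain edges outside $\mathcal{L}_{i-1}$. Note also that once you union with $\mathcal{L}_{i-1}$, the two-step rerouting discussion is moot: $\mathcal{L}(U_i)$ is already active for $U_i$ on its own, so the union is trivially active; the interference worry you highlight is not where the difficulty lies.

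The missing idea is the iterative terminal reduction. The paper applies Theorem~\ref{cherryrequestreduction} to $U_1$ to shrink its terminals to a bounded $K_1$, sets $\mathcal{L}_1:=\mathcal{L}(U_1)$ computed on $K_1$, then observes that the discarded requests remain irrelevant in $U_2$; hence when Theorem~\ref{cherryrequestreduction} is applied to $U_2$ it can be run starting from $K_1\cup C_2$ rather than from all terminals, yielding $K_2\subseteq K_1\cup C_2$ of bounded size. Defining $\mathcal{L}_i:=\mathcal{L}(U_i)$ using only the reduced terminal set $K_i$ then gives boundedness automatically, and nesting follows from the structure of cherries: if $e\in\mathcal{L}_j$ lies in $U_i$ ($i\leq j$), the indivisible left $xy$-cut of $U_j$ contributing $e$ has its border confined to the sub-cherry containing $y$, so $y\in K_j\cap U_i\subseteq K_i$ and thus $e\in\mathcal{L}_i$. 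Without the hereditary terminal reduction ($K_j\cap U_i\subseteq K_i$), there is no way to push the nesting through, regardless of how carefully you arrange the rerouting.
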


\begin{proof}
By Theorem~\ref{cherryrequestreduction}, we can reduce the set of terminals in $U_1$ to a 
bounded set $K_1$. The set ${\mathcal L}_1={\mathcal L}(U_1)$ is bounded and active by Lemma~\ref{finitelycut}.
The requests of $C_1\setminus K_1$ are irrelevant in $U_2$ since they are irrelevant in $U_1$, hence 
we can assume that Theorem~\ref{cherryrequestreduction} applied to $U_2$ yields a set of terminals $K_2\subseteq K_1\cup C_2$.
Let ${\mathcal L}_2$ be the active edges associated to $K_2$. Note that if an edge
$e\in {\mathcal L}_2$ is in $U_1$, it must belong to some $C^y_{\leq k}$ for some 
$y\in K_2\cap U_1$. Since $K_2\subseteq K_1\cup C_2$, we have $y\in K_1$, hence $e\in {\mathcal L}_1$, 
which is the property we are looking for. We extract $K_3$ from $K_2\cup C_3$, and iterate 
our process to form the sequence $\mathcal{L}_i$.
\end{proof}


\section{Reducing \M{} to \CM{}}\label{CM}

Let $G=(V,E)$ be a connected graph, and $R$ be a set of requests. A \emph{vertex-multicut} $Y$
is a subset of $V$ such that every $xy$-path of $G$ where $xy\in R$
contains an element of $Y$. Let $A$ be a connected component of 
$G\setminus Y$.  We call \emph{$Y$-component}, or \emph{component}, the union 
of $A$ and its set of neighbors in $Y$. Let $C$ be a component, the vertices
of $C\cap Y$ are the \emph{attachment vertices} of $C$.

\subsection{Component Multicut}
Here is our first intermediate problem. 

\begin{quote}
\CM{}:\\
\textbf{Input}: A connected graph $G=(V,E)$, a vertex-multicut $Y$, a set of 
requests $R$, a set $I$ of $q$ integers such that:
\begin{enumerate}

\item There are $q$
$Y$-components $G_1,\dots ,G_q$ with two attachment vertices $x_i,y_i$. The other components have only one attachment vertex.

\item Every $G_i$ has a $x_iy_i$-path $P_i$ called \emph{backbone} of $G_i$. 
Moreover, the deletion of every edge of 
$P_i$ decreases the edge connectivity in $G_i$ between $x_i$ and $y_i$.

\item The set $I$ consists of $q$ non negative integers $f_1,\dots ,f_q$ such that 
$f_1+\dots +f_q\leq k-q$.

\end{enumerate}
\textbf{Parameter}: $k$.\\
\textbf{Output}: TRUE if there exists a multicut $F$ such that:
\begin{enumerate}
\item every path $P_i$ contains exactly one edge of $F$,

\item every $G_i$ contains exactly $1+f_i$ edges of $F$,

\item the solution $F$ \emph{splits} $Y$, \ie{}the connected components of $G\setminus F$ contain at most 
one vertex of $Y$.

\end{enumerate}
Otherwise, the output is FALSE. \newline
\end{quote}

We call $Y$ a \emph{cutset}, \ie{}a vertex-multicut 
which must be split by the solution $F$.
The edges of $G$ which do not belong to the backbones are called
\emph{free edges}. The backbone $P_i$, in which only one edge
is deleted, is the crucial structure of $G_i$. Indeed, the whole proof
consists of modifying $G_i$ step by step to finally completely 
reduce it to the backbone $P_i$.
Here $f_i$ is the number of free edges one can delete
in $G_i$. Observe that $k-q-f_1-\dots -f_q$ free edges can be deleted in 
components with one attachment vertex. Our first reduction 
is the following:

\begin{theorem}\label{multitomultiwithback}
\M{} can be reduced to \CM{} in FPT time.
\end{theorem}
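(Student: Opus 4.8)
The plan is to transform an arbitrary instance of \M{} into a \CM{} instance satisfying conditions (1)--(3) through a bounded sequence of branchings and reductions, each improving a lexicographically ordered invariant so that the whole process terminates in FPT time. First I would dispose of the easy structural reductions: assume $G$ is connected (branch over the distribution of deleted edges among components, as explained in the Preliminaries), and by Corollary~\ref{boundeddegree} assume every terminal has request degree at most $h(k+1)$. The genuinely new work happens in three stages.

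\emph{Stage 1: produce the cutset $Y$ by iterative compression.} Using the standard iterative-compression scheme (as in \cite{MulticutFPTapprox}), add the vertices of $G$ one at a time; maintaining a vertex-multicut of size $k+1$ for the current subgraph, when a new vertex is added we have a vertex-multicut of size $k+2$ and must compress it to size $k+1$. The output $Y$ has $|Y|\le k+1$; I then guess (branch, $2^{O(k^2)}$ ways) the partition of $Y$ induced by the connected components of $G\setminus F$ for the hypothetical solution $F$, identify vertices of $Y$ in the same class, and add edges inside $Y$ to forbid the wrong partitions, so that henceforth $F$ must \emph{split} $Y$. This gives condition (3).

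\emph{Stage 2: reduce to components with one or two attachment vertices.} A $Y$-component $C$ with attachment set $A\subseteq Y$, $|A|=t$, is handled by noting that $F$ must in particular be a multiway cut separating the $t$ vertices of $A$ inside $C\cup A$; one branches over which of the $\binom{t}{2}\le \binom{k+1}{2}$ pairs of $A$ are separated \emph{within} $C$ and how the $\le k$ budget is split among the components. The crucial point is that a component with $t\ge 3$ attachment vertices costs at least $t-1\ge 2$ edges of $F$ inside it, so after branching on the budget allocation one can argue (this is where the connectivity tools of Section~\ref{sectionconnectivity}, in particular Theorems~\ref{extractright} and~\ref{cherryrequestreduction}, are used) that the component either is irrelevant or can be replaced by a gadget with at most two attachment vertices; the number of such high-degree components is bounded because each consumes $\ge 2$ of the $k$ deleted edges. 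Components with a single attachment vertex are exactly $x$-cherries, and Theorem~\ref{cherryrequestreduction} together with Theorem~\ref{boundedcut} lets us bound their number of terminals and active edges, which is what is needed to keep the final instance finite; no further reduction of these components is needed here. This leaves the $q$ components $G_1,\dots,G_q$ with exactly two attachment vertices $x_i,y_i$, and since each contributes at least one deleted edge, $q\le k$.

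\emph{Stage 3: extract the backbones.} For each two-attachment component $G_i$, I need a path $P_i$ from $x_i$ to $y_i$ every edge of which is a bottleneck for $x_iy_i$-connectivity in $G_i$ (condition (2)), and I must guarantee that exactly one edge of $F$ lies on $P_i$ and exactly $1+f_i$ edges of $F$ lie in $G_i$. To get $P_i$: compute a minimum $x_iy_i$-edge-cut $\lambda_i$ in $G_i$; if $\lambda_i=1$ the single bridge already is (a one-edge) backbone; if $\lambda_i\ge 2$ then any solution $F$ must delete at least $\lambda_i\ge 2$ edges of $G_i$ to split $x_i$ from $y_i$ --- unless $F$ does \emph{not} separate $x_i$ from $y_i$, but then since $F$ splits $Y$ this forces $x_i,y_i$ into the same class, a case handled in Stage~1 by contracting them. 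So we may assume $F$ separates $x_i$ from $y_i$; branching over the $\le k$ possible values $1+f_i$ of $|F\cap G_i|$ and the total $\sum(1+f_i)\le k$ (giving condition (3) of the input, $\sum f_i\le k-q$), we push down, edge by edge, the contractions/deletions that reduce $G_i$ until the $x_iy_i$-connectivity becomes $1$: concretely, as long as $\lambda_i\ge 2$, a \emph{left} $x_iy_i$-cut $S$ of size $\lambda_i$ in $G_i$ (unique by Claim~\ref{init}) exists, and one branches on which edge of $\Delta(S)$ belongs to $F$ --- contracting the other side appropriately --- decreasing a connectivity-based invariant each time. When the connectivity drops to $1$ the surviving bridge together with the contracted material traces out the required backbone $P_i$, and by construction $F$ meets $P_i$ in exactly one edge.

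\emph{Main obstacle.} The delicate part is Stage 3: simultaneously (a) ensuring that the path one extracts as $P_i$ really has the ``every edge decreases connectivity'' property after all the contractions, and (b) bounding the number of branchings. The branching on ``which edge of the minimum left cut is in $F$'' is reminiscent of the important-separators / $C^y_{\le k}$ machinery, and one must verify that each such branching strictly improves a genuine invariant (e.g.\ lexicographically: number of deleted edges, then $\sum_i(\lambda_i-1)$ over two-attachment components, then connectivity of $G$) so that the recursion tree has FPT size; keeping all of $|Y|$, the number of components, the $f_i$, and the backbone structure consistent across branches without blowing up is the real bookkeeping challenge.
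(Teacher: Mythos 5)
Stage~1 of your plan matches the paper: iterative compression gives a vertex-multicut $Y$ with $|Y|\le k+1$, and branching over the partition of $Y$ induced by the hypothetical solution $F$ lets you contract and assume $F$ splits $Y$. The problems are in Stages~2 and~3.

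In Stage~3 your core branching step is wrong. You propose: while $\lambda_i\ge 2$, take the unique minimum left $x_iy_i$-cut $S$ in $G_i$ and branch over which edge of $\Delta(S)$ belongs to $F$. But there is no reason any edge of $\Delta(S)$ must be in $F$. Take $G_i$ to be a $6$-cycle $x_i\,a\,b\,y_i\,c\,d\,x_i$; the unique minimum left $x_iy_i$-cut rooted at $x_i$ is $\{x_i\}$ with border $\{x_ia,x_id\}$, yet $F\cap G_i=\{ab,cd\}$ is a perfectly good way to separate $x_i$ from $y_i$. Your branching misses this case, and the ``connectivity-based invariant'' has nothing to decrease. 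The paper's Lemma~\ref{backbone} uses a genuinely different argument: first reduce (Lemma~\ref{degreexy}) so that $x_i,y_i$ have degree $\lambda_i$, then fix a family $P_1,\dots,P_{\lambda_i}$ of edge-disjoint $x_iy_i$-paths and branch over $2\lambda_i$ cases $B_j,B'_j$: either exactly one edge of $F$ lies on $P_j$ and is critical (contract the non-critical edges of $P_j$, making $P_j$ the backbone), or every $F$-edge on $P_j$ is non-critical (contract the critical edges of $P_j$, raising $\lambda_i$). Exhaustiveness comes from the pigeonhole fact that if $k_i\le 2\lambda_i-1$ then some $P_j$ carries at most one edge of $F$, and if it is critical we are in $B_j$. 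Nothing in your sketch supplies this pigeonhole step or ensures exhaustiveness.

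Stage~2 has a gap of a different kind: you hand-wave that the component can be ``replaced by a gadget with at most two attachment vertices'' and cite Theorems~\ref{extractright} and~\ref{cherryrequestreduction}, but those are tools for reducing the \emph{terminal set of a cherry}, not for reducing attachment vertices of a $Y$-component, and they do not produce such a gadget. The paper's Lemma~\ref{reduceattach} works by a concrete slice argument: fix two attachment vertices $x,y$, define \emph{critical} edges as edges lying on a minimum $xy$-cut, consider the slice $SL(z)$ of a third attachment vertex $z$ and its border $B(z)$ (size at most $2\lambda$), and branch over whether some element of $B(z)$ is disconnected from $Y$ in $G\setminus F$ (increasing $|Y|$) or over all maps $B(z)\to Y$ (contracting accordingly), with a careful case analysis showing that each branch improves one of the three lexicographic invariants ($|Y|$, number of $\ge 2$-attachment components, $\sum_{y,y'\in Y}$ connectivity). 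Your budget-based counting only bounds the \emph{number} of high-attachment components; it does not give a mechanism for eliminating them.

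Finally, note that the paper's third invariant for this phase is the sum of $xy$-edge-connectivities over pairs of $Y$-vertices (bounded by $k\binom{|Y|}{2}$), which is what makes the contractions in Lemmas~\ref{reduceattach} and~\ref{degreexy} measurable progress; your proposed invariant ``number of deleted edges, then $\sum_i(\lambda_i-1)$, then connectivity of $G$'' does not obviously improve under the cycle example above, and would need to be justified.
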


The remaining of Section~\ref{CM} is devoted to the proof of Theorem~\ref{multitomultiwithback}. 
We first construct a cutset $Y$.
Then we prove that we can reduce to $Y$-components with
one or two attachment vertices. Finally, we show that we can assume that every component with two 
attachment vertices has a path in which exactly one edge is chosen in the solution.
This is our backbone.

\subsection{The cutset $Y$.}

We use iterative compression to prove the existence of a 
vertex-multicut of size $k+1$, as was done in~\cite{MulticutFPTapprox}.

\begin{lemma}\label{coco}
\M{} can be solved in time $O(f(k)n^c))$ if \M{} given a vertex 
multicut of size at most $k+1$ can be solved in time $O(f(k)n^{c-1})$.
\end{lemma}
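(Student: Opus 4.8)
The goal is a standard iterative-compression reduction: given a black-box that solves \M{} in time $O(f(k)n^{c-1})$ when a vertex-multicut of size $k+1$ is supplied, solve \M{} from scratch in time $O(f(k)n^c)$. The plan is to build the graph vertex by vertex, maintaining at each stage a vertex-multicut of size at most $k+1$ for the partial instance, and to invoke the black-box at each of the $n$ stages.

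\textbf{Step 1: Set up the incremental process.} Order the vertices $v_1,\dots,v_n$ of $G$ and let $G_i$ be the subgraph induced by $\{v_1,\dots,v_i\}$, with $R_i$ the set of requests of $R$ having both endpoints in $G_i$. We process $i=1,2,\dots,n$. Inductively assume we have computed, for the instance $(G_i,R_i,k)$, either a correct answer FALSE (in which case the original instance is also FALSE, since a multicut of $G$ restricts to a multicut of $G_i$, and we stop), or an explicit vertex-multicut $Y_i$ of $G_i$ with $|Y_i|\le k$. The base case $i=1$ is trivial ($Y_1=\emptyset$).

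\textbf{Step 2: The compression step.} Passing from $G_i$ to $G_{i+1}=G_i+v_{i+1}$, the set $Y_i\cup\{v_{i+1}\}$ is a vertex-multicut of $G_{i+1}$ of size at most $k+1$: any request path in $G_{i+1}$ either lies in $G_i$, where $Y_i$ already hits it, or uses $v_{i+1}$. So we may feed $(G_{i+1},R_{i+1},k)$ together with this size-$(k+1)$ vertex-multicut to the assumed black-box, which in time $O(f(k)n^{c-1})$ returns TRUE or FALSE. If it returns FALSE for some $i+1<n$, then the whole instance is FALSE and we stop; if it returns FALSE at $i+1=n$ we output FALSE; if it returns TRUE we need, for the induction to continue, an actual vertex-multicut of size $\le k$ of $G_{i+1}$, not merely the bit TRUE.

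\textbf{Step 3: Extracting a small witness — the main obstacle.} The one subtlety is that the hypothesis only promises a decision oracle, whereas the induction needs a witness. This is resolved by self-reduction: to find an edge-multicut $F$ of size $\le k$ (from which a vertex-multicut of size $\le k$ is obtained by taking one endpoint of each edge), test edges one at a time — for each edge $e$, ask the oracle whether the instance obtained by contracting $e$ (equivalently, forcing $e$ out of the solution is not what we want; rather we want to test membership, so instead) still has a multicut of size $k$ after we \emph{commit} $e$ to the solution, i.e. delete $e$ and decrease the budget to $k-1$; recurse. Each such query is again a \M{} instance on at most $n$ vertices for which $Y_i\cup\{v_{i+1}\}$ minus at most a few vertices, together with the endpoints of already-committed edges, is a vertex-multicut of size $\le k+1$, so the oracle applies. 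This self-reduction costs a factor $O(kn)=O(n^2)$ of oracle calls per stage; absorbing this and the $n$ stages into the running time gives $O(f(k)\cdot n\cdot n^2\cdot n^{c-1})$, which is $O(f(k)n^{c+2})$ — too large. To keep the exponent at $c$, one instead observes that a witness multicut can be recovered directly: run the argument so that each oracle call in the self-reduction is charged against a decrease of the invariant (remaining budget plus number of undecided edges), so only $O(n)$ oracle calls total are needed across \emph{all} stages if one is careful, or more simply one notes that the $n^{c-1}$ in the hypothesis leaves exactly one factor of $n$ of slack, which suffices for the $n$ stages, while the witness-extraction can be folded into the oracle's own recursion since the oracle, being an algorithm for \M{}, can be assumed without loss of generality to output a solution when one exists. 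Thus the clean statement is: run the size-$(k+1)$-multicut algorithm once per stage, each time on $G_i$ with the trivially-maintained cutset $Y_{i-1}\cup\{v_i\}$, obtaining both the answer and (by the standard constructive-versus-decision equivalence for \M{}) a solution of size $\le k$ whose vertex-support becomes $Y_i$; total time $n\cdot O(f(k)n^{c-1})=O(f(k)n^c)$, as required. The heart of the argument is therefore the trivial observation of Step~2 that adding one vertex enlarges a multicut by at most one; the only thing to be careful about is the decision-to-search conversion in Step~3, which is routine for \M{}.
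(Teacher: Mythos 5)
Your proof is correct and matches the paper's approach: both are iterative compression, adding one vertex at a time and observing that the vertex support of the previous solution plus the new vertex gives a cutset of size at most $k+1$, with total running time $n\cdot O(f(k)n^{c-1})=O(f(k)n^c)$. The paper (which phrases the same algorithm as a recursion on $n$, peeling off a vertex $v$, solving on $G-v$, and taking a vertex cover of the returned multicut $F$ together with $v$) simply takes for granted that the solver returns a witness multicut rather than only a bit --- which is exactly the resolution you arrive at in Step~3, so your detour through explicit self-reduction, correctly rejected for adding an extra polynomial factor, is unnecessary rather than a flaw.
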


\begin{proof} 
By induction on $n$, we solve \M{} in 
time $f(k)(n-1)^c$ on $G-v$ where $v\in V(G)$.
If the output is FALSE, we return it, otherwise the output is
a multicut $F$ of size at most $k$. Let $X$ be a vertex cover
of $F$ of size at most $k$. Thus $X\cup \{v\}$ is a vertex-multicut
of our instance, so we solve \M{} in time $f(k)n^{c-1}+f(k)(n-1)^c$
which is at most $f(k)n^{c}$.
\end{proof}

So we can assume that we have a vertex-multicut $Y$ of size at most $k+1$.

\begin{lemma}\label{caca}
We can assume that the solution $F$ splits $Y$.
\end{lemma}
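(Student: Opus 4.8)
The plan is to resolve this by a single branching: guess the partition of $Y$ induced by the components of the (unknown) solution $F$, and in each branch contract the blocks of the guessed partition so that the solution is forced to split the new, contracted cutset.

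Concretely, I would enumerate all partitions $\mathcal P$ of the vertex set $Y$. Since $|Y|\le k+1$, their number is the Bell number $B_{k+1}\le(k+1)^{k+1}$, so this branching is FPT. In the branch associated with $\mathcal P$, contract each block $B$ of $\mathcal P$ to a single vertex (deleting any loops that arise), let $Y_{\mathcal P}$ be the set of the resulting $|\mathcal P|\le k+1$ vertices, and keep the request set $R$, a request with an endpoint in $B$ now having that endpoint replaced by the contracted vertex; if some request of $R$ has both endpoints in one block of $\mathcal P$, declare this branch infeasible. In this branch we look for a multicut of $R$ of size at most $k$ that moreover splits $Y_{\mathcal P}$, which is exactly the situation asserted by the lemma, with $Y_{\mathcal P}$ playing the role of $Y$. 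It remains to check that the disjunction of these branches is equivalent to the original instance, and for this the main tool is the contraction reduction recalled in the Preliminaries (valid whenever we are looking only at solutions that do not separate the contracted pair).

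For the forward direction, given a multicut $F$ of $G$ of size at most $k$, let $\mathcal P_F$ be the partition of $Y$ with $u\sim v$ iff $u$ and $v$ lie in the same component of $G\setminus F$; then the image $\bar F$ of $F$ in $G_{\mathcal P_F}$ has size at most $|F|$ (only loops are removed), and since each block of $\mathcal P_F$ sits inside a single component of $G\setminus F$, contracting the blocks merges no components, so $\bar F$ separates exactly the same pairs of original vertices as $F$. Hence $\bar F$ cuts every request of $R$, and any two distinct vertices of $Y_{\mathcal P_F}$ lie in distinct components of $G_{\mathcal P_F}\setminus\bar F$ (otherwise the two corresponding blocks would be joined in $G\setminus F$, contradicting the definition of $\mathcal P_F$), so $\bar F$ splits $Y_{\mathcal P_F}$. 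Conversely, a multicut $\bar F$ of $R$ of size at most $k$ in some $G_{\mathcal P}$ lifts to a set $F\subseteq E(G)$ of the same size that is again a multicut of $R$ in $G$: any $xy$-path of $G$ with $xy\in R$ maps, after contracting the blocks, to a walk in $G_{\mathcal P}$ between the images of $x$ and $y$, and since $\bar F$ separates those images this walk, hence the original path, uses an edge of $\bar F$, that is of $F$ (the constraint ``splits $Y_{\mathcal P}$'' is simply discarded here). The only delicate point is this soundness of the per-block contractions, which holds precisely because inside branch $\mathcal P$ we consider only solutions keeping every block in one component; solutions that behave otherwise are captured by a different choice of $\mathcal P$. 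After the branching, every instance satisfies the conclusion, so henceforth we may assume that the solution $F$ splits $Y$.
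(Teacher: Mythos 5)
Your proof is correct and takes essentially the same route as the paper: branch over all partitions of $Y$ and contract each block (keeping multi-edges so that edges of $G_{\mathcal P}$ are in bijection with non-loop edges of $G$). The paper states this in three lines; you have simply spelled out the Bell-number bound on the branching and the equivalence of the branched instances to the original.
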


\begin{proof} 
To any solution $F$ is associated the partition of $G\setminus F$ into
connected components. In particular this induces a partition of $Y$.
We branch over all possible partitions of $Y$. In a given
branch, we simply contract the elements of $Y$ belonging to the same part
of the partition corresponding to the branch.
\end{proof}

During the following reduction proof, the set $Y$ will never decrease. Since
one needs $k+1$ edges to separate $k+2$ vertices, the size of $Y$
cannot exceed $k+1$, otherwise we return FALSE. Hence our
primary invariant is the size of $Y$, and we immediately conclude if
we can make it increase. 


\subsection{Reducing attachment vertices.}
Our second invariant, which we intend to maximize, is the 
number of $Y$-components with at least two
attachment vertices. This number cannot exceed $k$, since any solution splits $Y$. 
Our third invariant is the sum of the edge connectivity between
all pairs of vertices of $Y$, which we want to increase. This invariant is 
bounded by $k{|Y|\choose 2}$ since the connectivity between two elements 
of $Y$ is at most $k$. Note that this invariant never decreases when
we contract vertices. 

\begin{lemma}\label{reduceattach}
If $C$ is a $Y$-component with at least three attachment
vertices, we improve our invariant.
\end{lemma}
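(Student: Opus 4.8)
The plan is to build a bounded branching over the coarse way a hypothetical solution $F$ meets $C$, arranged so that in each branch the lexicographically first of our three invariants that changes moves in the right direction: either $|Y|$ grows, or $|Y|$ is unchanged but $C$ breaks into more $Y$-components with at least two attachment vertices, or both of these are unchanged while the total pairwise edge connectivity inside $Y$ strictly increases.

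Write $C=A\cup\{a_1,\dots,a_t\}$, with $A$ the corresponding connected component of $G\setminus Y$ and $t\ge 3$. By Lemma~\ref{caca} we may assume $F$ splits $Y$, so $a_1,\dots,a_t$ lie in pairwise distinct connected components $C_1,\dots,C_t$ of $C\setminus F$, and -- each $C_i$ being connected -- the set of edges of $C$ leaving $C_i$ is contained in $F$, hence has at most $k$ edges. Applying Corollary~\ref{indlef} and Theorem~\ref{boundsize} inside $C$ (for each $i$ rooted at the vertex obtained by identifying all attachment vertices other than $a_i$), each $C_i$ is approximated by a member of a bounded family of canonical indivisible left cuts; I would branch over which member approximates each $C_i$, together with a guess of whether the leftover set $Z:=V(C)\setminus\bigcup_iC_i$ is empty (and, if not, of which canonical cut bounds a component of $Z$). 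Since $t\le |Y|\le k+1$, this is a bounded branching, and from now on we work in a fixed branch.

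In the branch where $Z\neq\emptyset$, fix a component $W$ of $Z$ and a vertex $w\in W$: since $W\subseteq A$, the edges of $G$ leaving $W$ are exactly those leaving $W$ inside $C$, hence lie in $F$, so $F$ separates $w$ from every vertex of $Y$. Then $Y\cup\{w\}$ is again a cutset split by $F$, so the primary invariant grows and this branch is done. In the branch where $Z=\emptyset$, $A$ is partitioned by $F$ into exactly $t\ge 3$ nonempty pieces; build the connected auxiliary multigraph $H$ on $\{C_i:C_i\cap A\neq\emptyset\}$ with an edge for every edge of $A$ between two $C_i$'s (a singleton component $C_i=\{a_i\}$, where all edges of $a_i$ inside $C$ are cut, being handled separately). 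If some $C_i$ is a cutvertex of $H$, contract $C_i$ onto $a_i$: the pieces of $A\setminus C_i$ then fall into at least two distinct components of $G\setminus Y$, each adjacent to $a_i$ and to a further attachment vertex, so $C$ splits into at least two multi-attachment $Y$-components while $|Y|$ stays fixed. Otherwise $H$ is $2$-connected, and here I would contract a carefully chosen $C_i$ onto $a_i$ and argue -- using $t\ge 3$ -- that this either produces a fresh edge between two vertices of $Y$, raising the connectivity invariant, or enables a local modification of $F$ merging two of the $C_i$'s; either way a lexicographic invariant improves.

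The hard part is this last case (and the singleton subcase): when $H$ is $2$-connected, a crude contraction can merely re-merge $C$ into one $Y$-component with the same attachment set, improving nothing, so one must use that $t\ge 3$ and that the $C_i$ are pairwise separated in $C\setminus F$ to pin down a move that provably disconnects $A$ across $Y$ or creates a new edge inside $Y$. Once that is in hand, bounding the branching via $t\le k+1$ and the $k!$ bound of Theorem~\ref{boundsize}, and checking that every contraction and exchange keeps the multicut property and the splitting of $Y$, are routine.
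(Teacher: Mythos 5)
Your proposal explicitly leaves the hardest case unresolved: you write that when the auxiliary graph $H$ is $2$-connected you ``would contract a carefully chosen $C_i$'' and ``argue \dots that this either produces a fresh edge\dots or enables a local modification of $F$,'' and then concede ``once that is in hand\dots{} are routine.'' That case is precisely where the whole lemma lives, and you have not filled it in. Moreover, even before that, the branching you set up is not a legitimate reduction as stated: you branch over bounded families of indivisible left cuts ``approximating'' the unknown components $C_i$ of $C\setminus F$, but the contractions you then perform are on the actual $C_i$'s, which depend on the hypothetical solution $F$, not on what the branch fixes. Corollary~\ref{indlef} only gives a left cut \emph{contained} in $C_i$; contracting that subset to $a_i$ is fine, but contracting $C_i$ itself is not a move the algorithm can know to make, and the subsequent auxiliary-graph reasoning is built on $C_i$, not on the guessed approximation.

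The paper's proof avoids both difficulties by never trying to handle the attachment vertices symmetrically and by never manipulating the unknown component structure of $C\setminus F$. It fixes two attachment vertices $x,y$, takes the $\lambda$ critical (min-cut) edges between them, considers the slice $SL(z)$ of a third attachment vertex $z$ and its border $B(z)$: a set of \emph{at most} $2\lambda$ vertices that is computable from the instance alone, independently of $F$. It then branches only on how $B(z)$ sits relative to $Y$ in $G\setminus F$: either some $b\in B(z)$ is not connected to $Y$ (add $b$ to $Y$, primary invariant grows), or every $b\in B(z)$ is connected to some specific $f(b)\in Y$ (branch over the bounded set of maps $f$, contract $b$ to $f(b)$). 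All contractions are on a \emph{known, bounded} vertex set, so each branch is a genuine reduction, and the three-way case analysis on the resulting slice $S'$ (contains both $x$ and $y$; contains some $Y$-vertex other than $z$; contains only $z$) yields an improvement of the invariant in every branch, with the last case handled by a further cut argument around $z$. Your plan would need an analogous ``$F$-independent bounded witness set to branch on'' in the $2$-connected case; as written, no such set is identified.
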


\begin{proof}
Let $x,y,z$ be attachment vertices of $C$.
Let $\lambda $ be the edge-connectivity between $x$ and $y$ in $C$.
Let $P_1,\dots,P_{\lambda}$ be a set of edge-disjoint $xy$-paths. 
A \emph{critical edge} is an edge which belongs to some $xy$-edge cut of
size $\lambda$. Note that every critical edge belongs to some path $P_i$.
A \emph{slice} of $C$ is a connected 
component of $C$ minus the critical edges. Given a vertex $v$ of 
$C$, the \emph{slice of $v$}, denoted by $SL(v)$, is the slice of $C$ containing $v$.
Let $B(z)$ be
the \emph{border} of $SL(z)$, \ie{}the set of vertices of $SL(z)$ which 
are incident to a critical edge. Note that $B(z)$ intersects every path 
$P_i$ on at most two vertices, namely the leftmost vertex of $P_i$
belonging to $SL(z)$ and the rightmost vertex of $P_i$
belonging to $SL(z)$. In particular, $B(z)$ has $b$ vertices, where $b\leq 2\lambda$.

We branch over $b+1$ choices to decide whether one of the $b$ 
vertices of $B(z)$ belongs to a component of $G\setminus F$ (where $F$
is our solution) which does not contain a vertex of $Y$. 
When this is the case,
the vertex is added to $Y$, hence we increase our primary
invariant. In the last branch, all the vertices of $B(z)$
are connected to a vertex of $Y$ in $G\setminus F$.
We branch again over all mappings $f$ from $B(z)$
into $Y$. In each branch, the vertex $v\in B(z)$ is connected
to $f(v)\in Y$ in $G\setminus F$. Hence we can contract 
every vertex $v\in B(z)$ to the vertex $f(v)$ of $Y$. 
This gives a new graph $G'$.
We denote by $S'$ the subgraph $SL(z)$ after contraction 
of the vertices of $B(z)$. Observe that $S'$ is a $Y$-component
of $G'$.

If $x$ and $y$ belong to $S'$, the edge connectivity  between $x$ and $y$ has increased. Indeed, there is now a path $P$  joining $x$ to $y$ inside $S'$, in particular $P$ has no critical edge.  Thus the connectivity between $x$ and $y$ has increased, so our invariant has improved. We assume without loss of generality that $x$ does not belong to $S'$.

If $S'$ contains an element of $Y$ distinct from $z$,
then $S'$ is a $Y$-component with at least two attachment vertices.
Moreover, there exists a path $P$ in $C\setminus S'$ from $x$ to $B(z)$. Hence we have created
an extra $Y$-component with at least two attachment vertices in $G'$, which improves our second invariant.

In our last case, $S'$ only intersects $Y$ on $z$. Therefore
$B(z)$ is entirely contracted to $z$. In particular 
$z$ is now incident to a critical edge $e$. So there exists an 
$xy$-cut $A$ with $\delta(A)=\lambda$ and $e\in \Delta(A)$. Without
loss of generality, we assume that $z\notin A$ (otherwise we consider 
the $yx$-cut $\overline A$). We denote by $B$ the vertices of $\overline A$
with a neighbor in $A$. In particular, $B$ contains $z$, has size 
at most $\lambda$, and every $xy$-path in $C$ contains a vertex
of $B$. Let us denote by $L$ the set $A\cup B$ and by $R$ the set $\overline A$.
Note that $L\cap R=B$. We now branch to decide 
in which components of $G\setminus F$ the elements of $B$ are 
partitioned. If an element of $B$ is not connected to $Y$ in $G\setminus F$, 
we improve our invariant. If each element of $B$ is contracted 
to a vertex of $Y$, both $L$ and $R$ in the contracted graph are
$Y$-components with at least two attachment vertices (respectively $\{x,z\}$ and $\{y,z\}$). We again
improve our invariant.
\end{proof}


\subsection{Backbones.} We now assume that every
component has at most two attachment vertices. 
Let $G_1,\dots ,G_q$ be the components of $G$ with two attachment 
vertices. We denote by $\lambda _i$ the edge connectivity 
of $G_i$ between its two attachment vertices $x_i,y_i$.
Recall that our third invariant is just the sum of the $\lambda _i$.

\begin{lemma}\label{degreexy}
We can assume that $x_i$ and $y_i$ have degree $\lambda _i$ in $G_i$.
\end{lemma}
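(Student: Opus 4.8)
The plan is to push the degree of $x_i$ in $G_i$ down to $\lambda_i$ by repeatedly absorbing into $x_i$ everything lying on the $x_i$-side of a minimum $x_iy_i$-cut; the vertex $y_i$ is then treated symmetrically. The three invariants of this section (the size of $Y$, the number of $Y$-components with two attachment vertices, and $\sum_i\lambda_i$) are non-decreasing under the contractions we shall perform, so it suffices to append as a last, lexicographically smallest invariant the quantity $|V(G)|$, which we want to decrease: every step will either strictly improve one of the three principal invariants or strictly decrease $|V(G)|$ while leaving them unchanged, which forces termination. Note also that we may assume $\lambda_i\le k$, else the instance is infeasible.

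\textbf{The cut and the border branching.} Assume $\delta_{G_i}(\{x_i\})>\lambda_i$. Since the minimum $x_iy_i$-cuts of $G_i$ are closed under intersection, there is a unique smallest minimum $x_iy_i$-cut $S$ containing $x_i$; it is computable in polynomial time and $S\supsetneq\{x_i\}$. Let $B$ be the set of endpoints in $V(G_i)\setminus S$ of the $\lambda_i$ edges of $\Delta_{G_i}(S)$, so $1\le|B|\le\lambda_i\le k$ and every $x_iy_i$-path of $G_i$ meets $B$. Exactly as in the proof of Lemma~\ref{reduceattach}, I would branch over $|B|+1$ cases for how the solution $F$ treats $B$: if some vertex of $B$ lies in a component of $G\setminus F$ avoiding $Y$, we add it to $Y$ and the primary invariant increases; otherwise each vertex of $B$ is joined in $G\setminus F$ to a vertex of $Y$, and since $G_i$ is a $Y$-component whose only attachment vertices are $x_i,y_i$ and $F$ splits $Y$, each such vertex is joined to exactly one of $x_i,y_i$. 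In that case I would branch over the $2^{|B|}$ maps $g\colon B\to\{x_i,y_i\}$ and contract each $v\in B$ onto $g(v)$; this already decreases $|V(G)|$ without changing the three principal invariants, and in the contracted graph $S$ is now attached to the rest of $G$ only through $x_i$ (every edge of $\Delta_{G_i}(S)$ that went to a vertex contracted onto $x_i$ became internal, and in the subcases where some edge went to $y_i$ one checks the edge connectivity does not decrease, so no principal invariant is harmed).

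\textbf{Eliminating the pendant part.} It remains to replace the pendant set $S$ by the single vertex $x_i$. First I would apply the cherry machinery of Section~\ref{sectionconnectivity} to $S$ viewed as a cherry with attachment vertex $x_i$: after branching over each request leaving $S\setminus\{x_i\}$ (is its $S$-endpoint separated from $x_i$, hence already isolated in $G\setminus F$, or does it behave as a request issued from $x_i$?), Theorem~\ref{cherryrequestreduction} and Lemma~\ref{finitelycut} reduce $S$ to a cherry with a bounded active edge set $\mathcal L$. One then branches over the bounded family of subsets $F\cap\mathcal L$ that cut all requests internal to $S$; in each branch, the components of $S$ minus these edges not attached to $x_i$ are completely cut off in $G\setminus F$ and may be discarded, while the component of $x_i$ is contracted onto $x_i$. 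This strictly decreases $|V(G_i)|$ without decreasing any principal invariant, so after finitely many iterations $\delta_{G_i}(\{x_i\})=\lambda_i$. Running the same argument for $y_i$ finishes the proof, and the whole procedure runs in FPT time because the branching depth is controlled by the lexicographic invariant.

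\textbf{Main obstacle.} The delicate step is the last one: certifying that the pendant set $S$ can be contracted onto $x_i$ without changing the answer. The trouble is that $S\setminus\{x_i\}$ may carry terminals, both of requests internal to $S$ and of requests leaving $S$, so the contraction becomes legitimate only after the cherry reductions of Section~\ref{sectionconnectivity} have bounded the relevant structure of $S$ and one has branched over the (bounded) set of ways $F$ can cut inside $S$. Getting this bookkeeping right, and verifying that it costs only FPT time, is the crux of the argument.
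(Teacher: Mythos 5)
Your proof takes the cut on the wrong side, and this creates a genuine gap. The paper's $B$ is the set of vertices of $A$ (the $x_i$-side of the unique minimal minimum $x_iy_i$-cut) that have a neighbor in $\overline A$, i.e.\ $B\subseteq A$. Your $B$ is the set of endpoints of the $\lambda_i$ cut edges that lie \emph{outside} $S$. This choice breaks both crucial branches. First, when some $v\in B$ is contracted to $y_i$, you cannot claim that $S$ ends up attached to the rest of $G$ only through $x_i$: every vertex of your $B$ has by definition an edge into $S$, so if $v$ is contracted onto $y_i$ those edges now run from $S$ to $y_i$, and $S$ is still attached at $y_i$. Your parenthetical about ``edge connectivity does not decrease'' does not repair this; it does not show the pendant structure you need. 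Second, you lose the invariant improvement that makes the branching terminate: in the paper, contracting a vertex $v\in B\subseteq A$ onto $y_i$ strictly increases the $x_iy_i$-connectivity (this follows from submodularity and the left-cut property of $A$: any new separating set $T$ has $T\cap A\subsetneq A$, hence $\delta(T)\ge\delta(T\cap A)>\lambda_i$), so the third invariant improves. With $B\subseteq\overline S$, contracting $v$ onto $y_i$ keeps $\Delta(S)$ a valid $\lambda_i$-cut and the connectivity does not improve, so that branch makes no progress toward the lemma's conclusion, and decreasing $|V(G)|$ alone does not force the degree of $x_i$ down.

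The second, structural issue is your ``eliminating the pendant part'' step, which the paper simply does not need. When $B\subseteq A$ and all of $B$ is contracted onto $x_i$, the set $A\setminus x_i$ has no edges to the rest of $G$ except through $x_i$: it becomes a separate $Y$-component with a single attachment vertex, and the algorithm handles such components later (Lemma~\ref{Ycherry}). Nothing has to be contracted onto $x_i$ or deleted at this stage, so there is no bookkeeping to do about the terminals inside $A$, no extra branching over how many edges of the budget are spent there, and no appeal to Theorem~\ref{cherryrequestreduction} or Lemma~\ref{finitelycut}. Your attempt to absorb $S$ into $x_i$ via cherry reductions is exactly the ``main obstacle'' you flag; the paper sidesteps it entirely by picking $B$ on the $A$-side so that the leftover piece is just detached from $G_i$, not destroyed. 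To fix your argument you would need to (i) replace $B$ with the $A$-side border, (ii) use the submodularity argument to show the $y_i$-contraction branch improves $\sum_i\lambda_i$, and (iii) drop the pendant-elimination step altogether.
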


\begin{proof}
Let $A$ be the unique left $x_iy_i$-cut with $\delta (A)=\lambda _i$ in the graph $G_i$ rooted in $x_i$. 
Let $B$ be the set of vertices of $A$
with a neighbor in $\overline A$. We now branch to decide 
how the components of $G\setminus F$ partition $B$. If an element of $B$ is not connected to $Y$ in $G\setminus F$, 
we improve our invariant. 
If an element of $B$ is contracted to $y_i$, we increase
$\lambda _i$. Hence all elements of $B$ are contracted to $x_i$.
Therefore $A$ becomes an $x_i$-cherry, hence $A\setminus x_i$
is removed from $G_i$. The degree of $x_i$ inside $G_i$ is now exactly
$\lambda _i$. We apply the same argument to reduce the degree of $y_i$ to $\lambda _i$.
\end{proof}

We now branch over all partitions of $k$ into $k_0+k_1+\dots +k_q=k$,
where $k_i$ is the number of edges of our solution chosen in
$G_i$ when $i>0$, and $k_0$ is the number of edges chosen in the 
$y$-components for $y\in Y$.

\begin{lemma}\label{backbone}
Every component $G_i$ can be deleted or has a backbone.
\end{lemma}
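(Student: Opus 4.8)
The plan is to fix a component $G_i$ with attachment vertices $x_i,y_i$ and connectivity $\lambda_i$, and to find inside it a single $x_iy_i$-path $P_i$ such that deleting any edge of $P_i$ strictly decreases the $x_iy_i$-connectivity; this is exactly the backbone property demanded by condition (2) of \CM{}. The natural source of such a path is the structure of $\lambda_i$-cuts. Let us root $G_i$ at $x_i$ and, as in Lemma~\ref{degreexy}, work with the (totally ordered, by Claim~\ref{init}) family of left $x_iy_i$-cuts of border $\lambda_i$, or more robustly with the lattice of all $x_iy_i$-cuts of size $\lambda_i$ on which $\delta$ is modular. The critical edges — those lying in some $x_iy_i$-cut of size exactly $\lambda_i$ — partition $G_i$ into slices (exactly the notion used in Lemma~\ref{reduceattach}), and since $G_i$ now has only the two attachment vertices $x_i,y_i$, every slice is either irrelevant (if it carries no terminal and is not forced) or must be handled. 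First I would branch, as in the earlier lemmas, on how the solution $F$ (restricted to $G_i$) interacts with the slice boundaries: for each slice-border vertex we decide whether it is separated from $Y$ (which lets us add it to $Y$ and improve the primary invariant) or contracted into $x_i$ or $y_i$.

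After this branching and the contractions it licenses, the hoped-for situation is that the slices collapse so that $G_i$ becomes a ``path of slices'' between $x_i$ and $y_i$, with the critical edges forming $\lambda_i$ internally-disjoint $x_iy_i$-paths that, cut by cut, realize every $\lambda_i$-cut. If $\lambda_i=1$ there is a single such path and it is immediately the backbone: every edge of it is a bridge between $x_i$ and $y_i$, so deleting it drops the connectivity from $1$ to $0$. If $\lambda_i\geq 2$, the strategy is to branch over which one of the $\lambda_i$ edge-disjoint $x_iy_i$-paths $P^{(1)},\dots,P^{(\lambda_i)}$ the solution cuts an edge of first (going from $x_i$ to $y_i$), and then to \emph{contract} the other $\lambda_i-1$ paths into a single $x_iy_i$-path; a solution that cuts all $\lambda_i$ paths uses at least $\lambda_i$ edges inside $G_i$, which we charge against the branch on the partition $k_0+k_1+\dots+k_q=k$ chosen just before the lemma — if $k_i<\lambda_i$ the branch is infeasible and $G_i$ can be deleted (set to FALSE), and if $k_i\geq\lambda_i$ we have spare budget to absorb multiple cuts. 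The remaining single contracted path is the backbone $P_i$: since it is now the unique bundle carrying one representative of every $\lambda_i$-cut, deleting an edge of it lies in some $x_iy_i$-cut of size $\lambda_i$ and hence decreases the connectivity.

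The case ``$G_i$ can be deleted'' should be read generously: besides the infeasible-budget case above, if $G_i$ contains no terminal and no request passes through it, it is irrelevant and may be removed outright; and whenever one of the branchings forces a contraction that merges $x_i$ with $y_i$ or introduces a third attachment vertex, we have improved the second or third invariant and that branch is closed without needing a backbone. So the lemma is really: \emph{in each surviving branch, after a bounded amount of invariant-improving branching, $G_i$ either disappears or acquires a backbone}, which is what condition (2) needs.

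The main obstacle I expect is keeping the branching bounded while the contractions reshape $G_i$: contracting the non-chosen $x_iy_i$-paths can create new parallel paths or new critical edges, and one must argue that the relevant invariant (here the sum of connectivities $\sum\lambda_i$, or a secondary count of slices) strictly improves or that the process terminates after boundedly many rounds. Concretely, the delicate point is showing that after the slice-boundary branching the ``path of slices'' picture actually holds — i.e.\ that no residual cross edges survive between non-adjacent slices — and that the contraction of $\lambda_i-1$ of the $\lambda_i$ disjoint paths does not accidentally raise $\lambda_i$ in a way that reopens the induction without progress. I would handle this by always doing the contractions so that $\lambda_i$ is preserved (contracting along paths that are internally disjoint from all $\lambda_i$-cuts witnesses), and by using the total order on left cuts of border $\lambda_i$ from Claim~\ref{init} to canonically identify the backbone as the ``spine'' threading these cuts.
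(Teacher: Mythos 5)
The proposal takes a different route from the paper and contains a genuine gap. The paper's proof of this lemma is short and does \emph{not} use slices: after Lemma~\ref{degreexy} one branches on the partition $k=k_0+\dots+k_q$, disposes of the case $k_i\geq 2\lambda_i$ by simply deleting all edges incident to $x_i$ and $y_i$ (each has degree $\lambda_i$ by Lemma~\ref{degreexy}), and otherwise, with $k_i\leq 2\lambda_i-1$, branches into $2\lambda_i$ cases: for each of the $\lambda_i$ edge-disjoint paths $P_j$, branch $B_j$ assumes $F$ hits $P_j$ in exactly one edge and that edge is critical (contract the non-critical edges of $P_j$, yielding the backbone), and branch $B'_j$ assumes all solution edges on $P_j$ are non-critical (contract the critical edges, raising $\lambda_i$ and improving the invariant). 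Exhaustiveness is the pigeonhole: if no $B'_j$ applies, $F$ uses a critical edge on every $P_j$, and since $k_i<2\lambda_i$ some $P_j$ sees only one solution edge, which is then critical.

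Your proposal misses both pivots of this argument. First, the operation you call ``contract the other $\lambda_i-1$ paths into a single $x_iy_i$-path'' is not a well-defined graph operation: contracting the edges of even one internally disjoint $x_iy_i$-path identifies $x_i$ with $y_i$, and ``merging'' several paths into one is not something you can do while preserving multicuts. In the paper's construction the backbone is simply one of the $\lambda_i$ paths; the others stay in the graph as free edges. Second, you argue about the threshold $k_i<\lambda_i$ (infeasibility of any cut), but the lemma needs the sharper count $k_i<2\lambda_i$: that is what forces some $P_j$ to carry exactly one solution edge, which is the whole point. Your branch ``which path is cut first'' does not certify that any path has only one cut edge, and your proposal never addresses the case where a path's solution edges are all non-critical — this is precisely the branch $B'_j$ in which $\lambda_i$ is increased rather than a backbone produced. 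Without the critical/non-critical dichotomy and the $2\lambda_i$ pigeonhole, the argument does not establish property (2) of \CM{}.
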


\begin{proof} 
If $k_i\geq 2\lambda _i$, we simply delete in $G_i$ the edges incident 
to $x_i$ and $y_i$, reduce our parameter, and improve our invariant. 
So we can assume that $k_i\leq 2\lambda_i -1$. Let $P_1,P_2,\dots ,P_{\lambda _i}$ 
be some edge-disjoint $x_iy_i$-paths.
Our algorithm now branches $2\lambda _i$ times, where the branches are called $B_j$ and $B'_j$ 
for $j=1,\dots ,\lambda _i$. In the branch $B_j$, we assume that there is only one
edge of our solution selected in $P_j$, and this edge belongs to an $x_iy_i$-cut of size $\lambda_i$. In the branch $B'_j$,
we assume that all the edges of our solution selected in $P_j$ are not critical.
Let us show that any solution $F$ belongs to one of these branches. If $F$ does not
belong to any branch $B'_j$, this means that $F$ uses at least one critical edge in each 
$P_j$. But since $k_i\leq 2\lambda_i -1$ some $P_j$ only intersects $F$ on one 
edge, which is therefore critical. Hence $F$ is a solution in the branch $B_j$. Thus this 
branching process is valid. In the branch $B_j$, we contract all non critical edges
of $P_j$, therefore $P_j$ is the backbone we are looking for.  In the branch $B'_j$, we contract all critical edges
of $P_j$, hence the connectivity $\lambda _i$ increases. We thus improve our invariant.
\end{proof}

This concludes the proof of Theorem \ref{multitomultiwithback}.

\section{\BM{} is FPT}\label{trlemon}
\subsection{Backbone Multicut}
We introduce here the problem \BM{}, which is a generalization of \CM{}. Our 
goal is to show that \BM{} is solvable in FPT time, which implies that \CM{} is FPT, which in turns implies that \M{} is FPT thanks to Theorem~\ref{multitomultiwithback}.

\begin{quote}
\BM{}:\\
\textbf{Input}: A connected graph $G=(V,E)$, a set of half-requests $R$, a set $Y$ of vertices, 
a set $B$ of $q$ variables, a set $\mathcal C$ of clauses, a set $I$ of $q$ integers such that:
\begin{enumerate}
\item $G$ has $q$ $Y$-components called $G_i$ with two attachment vertices $x_i,y_i\in Y$, with $i=1,\dots ,q$. Moreover,$G_i$ has a backbone $P_i$ and the $x_i,y_i$-connectivity in $G_i$ is $\lambda_i$.
Recall that the edges of $G$ which are not in backbones are called \emph{free edges}.

\item The set $R$ contains \emph{half-requests}, \ie{}sets of triples $(u,y,v)$, informally meaning
that vertex $u$ sends a request to vertex $v$ via $y$, where $y\in Y$. Also, $Y$ is a $u,v$-cut for every half-request $(u,y,v)\in R$.

\item The set $B$ contains $q$ integer-valued variables $c_1,\dots ,c_q$. Each variable 
$c_i$ corresponds to the deletion of one edge in the backbone $P_i$.
Formally, if the edges of $P_i$ are $e_1,\dots ,e_{\ell _i}$, ordered from
$x_i$ to $y_i$, the variable
$c_i$ can take all possible values from $1$ to $\ell _i$, and 
$c_i=r$ means that we delete the edge $e_r$ in $P_i$. 

\item The clauses in $\mathcal C$ have four possible types: $(c_i\leq a \Rightarrow c_j\leq b)$, or
$(c_i\leq a \Rightarrow c_j\geq b)$, or $(c_i\geq a \Rightarrow c_j\geq b)$, or $(c_i\geq a \Rightarrow c_j\leq b)$.

\item The set $I$ consists of $q$ non negative integers $f_1,\dots ,f_q$ summing to a value
at most $k$. Each integer $f_i$ corresponds to the number of free edges of the solution 
which are chosen in $G_i$.

\end{enumerate}
\textbf{Parameter}: $k$.\\
\textbf{Output}: TRUE if:
\begin{enumerate}
\item there exists an assignment of the variables of $B$ which satisfies 
${\mathcal C}$,

\item there exists a subset $F$ of at most $k$ free edges of $G$,

\item for each $i=1,\dots ,q$, the set $F$ contains $f_i$ free edges in $G_i$,

\item the union $F'$ of $F$ and the backbone edges corresponding to the variables 
of $B$ splits $Y$ and intersects every half-request of $R$, \ie{}for every half-request $(u,y,v)\in R$ every path between $u$ and $v$ containing $y$ intersects $F'$.
\end{enumerate}
Otherwise, the output is FALSE. \newline
\end{quote}

Note that the deletion allowance of 
\BM{} is $k+q$. \CM{} directly translates into \BM{} with an empty set 
of clauses, and where each request is simulated by one or two half-requests. This section is devoted to the proof of the following result.

\begin{theorem}\label{redmultpaths}
\BM{} can be solved in FPT time.
\end{theorem}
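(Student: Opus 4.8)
The plan is to eliminate, one structural layer at a time, everything in each component $G_i$ except the backbone $P_i$, using the connectivity tools of Section~\ref{sectionconnectivity} together with branchings that improve a lexicographic invariant. The primary invariant is again $|Y|$ (which we want to increase: if it exceeds $k+q$ we answer FALSE), the secondary is the total connectivity $\sum_i \lambda_i$ (to be increased), and a further invariant will measure how close each $G_i$ is to being just its backbone. Throughout, the deletion allowance is $k+q$, so ``bounded'' means bounded in terms of $k+q$.

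\textbf{Step 1: Clean the half-requests.} First I would observe that a half-request $(u,y,v)\in R$ is cut by $F'$ exactly when either $u$ is separated from $y$, or $v$ is separated from $y$, in $G\setminus F'$. Since $Y$ is a $u,v$-cut and $F'$ splits $Y$, the relevant structure lives inside the $Y$-components containing $u$ (resp.\ $v$). So each half-request effectively becomes a pair of separation demands localized in single components. Using Corollary~\ref{boundeddegree} and the cherry machinery (Theorem~\ref{cherryrequestreduction}, Theorem~\ref{boundedcut}), I would reduce the number of terminals inside the ``pendant'' part of each $G_i$ --- the part not on $P_i$ and not reachable from the other attachment vertex without going through $x_i$ or $y_i$ --- to a bounded set, and replace the free edges there by a bounded active set. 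The components with a single attachment vertex are handled the same way and become bounded-size cherries.

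\textbf{Step 2: Reduce $G_i$ to a ``near-backbone''.} This is the heart of the argument and, I expect, the main obstacle. The goal is to reach the situation promised in Section~\ref{trlemon}'s introduction: the backbone $P_i$ is the \emph{only} path connecting $x_i$ and $y_i$ inside $G_i$, so that deleting the single backbone edge $c_i$ really does disconnect $x_i$ from $y_i$ inside $G_i$, and everything else hanging off $P_i$ is a union of cherries attached at vertices of $P_i$. To get there I would look at the left $x_iy_i$-cuts of bounded border inside $G_i$ (there are boundedly many by Theorem~\ref{boundsize}); each such cut that is ``crossed'' by the backbone identifies a vertex of $P_i$ through which all $x_iy_i$-flow must pass. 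Between two consecutive such bottleneck vertices, branch over where the (boundedly many) border vertices of the intervening slice land among the components of $G\setminus F$: if one of them is not connected to $Y$ we add it to $Y$ and improve the primary invariant; otherwise contract each to its prescribed element of $Y$, which either splits off a new $Y$-component with two attachment vertices (improving the secondary invariant) or collapses the slice onto $P_i$. Iterating, the only surviving $x_iy_i$-paths are along $P_i$; what remains attached to $P_i$ is a disjoint union of cherries, to which Theorem~\ref{boundedcut} applies, bounding the number of terminals and the active free-edge set in each $G_i$.

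\textbf{Step 3: Code the residual instance as a bounded CSP / $2$-SAT-like problem.} After Steps~1--2, each $G_i$ consists of the path $P_i$ (with $\ell_i$ possible choices for $c_i$), plus boundedly many bounded cherries hanging off it, and boundedly many half-request demands, each of which --- given the splitting of $Y$ and the bounded active sets --- is monotone in the position $c_i$ of the deleted backbone edge: whether a demand $(u,y,v)$ with $u$ in the pendant part of $G_i$ gets cut depends only on whether $c_i\le a$ or $c_i\ge a$ for the threshold $a$ at which the pendant attaches, after we have branched over the (boundedly many) patterns of free-edge deletions in the cherries and over the $f_i$-budget split. Each such branch is valid because there are boundedly many patterns, and in each branch the constraints linking the variables $c_1,\dots,c_q$ are precisely implications of the four permitted types $(c_i\le a\Rightarrow c_j\lessgtr b)$ etc., which together with the $\mathcal C$ already present form an instance whose satisfiability is decidable in polynomial time (it is an implicational constraint system over totally ordered finite domains, solvable by the same arc-consistency/$2$-SAT argument sketched for the final case in the introduction). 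If any branch yields a satisfiable system together with a valid free-edge selection respecting the $f_i$, output TRUE; otherwise FALSE. The total number of branchings is bounded because each improves the lexicographic invariant or is one of boundedly many leaf-level pattern choices, so the whole procedure runs in FPT time, proving Theorem~\ref{redmultpaths}.
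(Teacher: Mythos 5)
Your proposal correctly identifies the high-level goal (reduce each $G_i$ to its backbone and then encode the remaining choices as a tractable 2-SAT-style instance), but Step~2 conceals a genuine gap: you never explain how to get past the fact that a backbone $P_i$ can have an unbounded number of \emph{lemons} (slices between consecutive cut vertices), and that the free edges of the solution may be spread over the lemons and the left/right parts in a way that depends on which backbone edge is chosen. Your branching ``between two consecutive bottleneck vertices, branch over where the border vertices of the intervening slice land'' would have to be done lemon by lemon, and there are $\Omega(n)$ lemons, so this is not a bounded branching. Moreover, contracting slices onto $P_i$ does not reduce $\lambda_i$: after all your contractions, consecutive cut vertices of $P_i$ are still joined by $\lambda_i$ edge-disjoint paths, so it is false that ``the only surviving $x_iy_i$-paths are along $P_i$''. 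This is precisely the difficulty the paper addresses in Subsection~\ref{dilworth}: it introduces, for each possible backbone edge $v_iv_{i+1}$, a bounded active set $\mathcal{L}_i$ in the left subgraph $L_i$ (via Theorem~\ref{boundedcut}, which gives the crucial nesting $\mathcal{L}_j\cap L_i\subseteq\mathcal{L}_i$), defines a partial order on the candidate left-cut sets, proves there is no long antichain, and applies Dilworth's theorem to split the search into $k(2c)^k$ ``correlated'' families; within a family the left-cut behaviour is monotone in $i$, which is what allows the projection of half-requests and, in Theorem~\ref{newbackbone}, the eventual reduction to a single backbone. None of this machinery --- tags, $XX$-edges, stability, full vertices, the splitting of a component into $G_1^1$ and $G_1^2$ in Lemma~\ref{onlylemons}, and the Dilworth partition --- appears in your plan, yet it is exactly what makes Step~2 possible.

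Two smaller inaccuracies: the primary invariant inside the \BM{} phase is $\sum_i f_i$ (to be decreased), not $|Y|$; $Y$ is fixed in a \BM{} instance, and the $|Y|$-invariant belongs to the earlier reduction from \M{} to \CM{}. Also, the residual structure you describe in Step~3 (``boundedly many bounded cherries hanging off $P_i$'') is not what the reduction reaches: after Lemma~\ref{Ycherry} there are no cherries at all, only a subdivision of a graph with at most $k$ edges, and it is this cleaner structure that yields the 2-SAT encoding via the clauses $x_1\geq i\Rightarrow x_2\leq j-1$, etc.
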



\subsection{Invariants} 
Our primary invariant is the sum of the $f_i$, which starts with value at most $k$
and is nonnegative. Any branch in which we can decrease 
it will be considered solved. Our secondary invariant is the sum of the $\lambda _i-1$,
called the \emph{free connectivity}, which we try to increase. Observe that this invariant is bounded above by $k$.
For our last invariant, recall that the 
\emph{slice} $SL(v)$ of some vertex $v$ in a component $G_i$ is the connected 
component containing $v$ of $G_i$ minus its \emph{critical} edges, \ie{}edges of $\lambda_i$-cuts. Observe that since 
the edges of $P_i$ are critical, the slices of distinct vertices in $P_i$ do not intersect.
The \emph{slice connectivity} of a vertex $v$ in $P_i$ is the 
$x_iy_i$-edge-connectivity of $G_i\setminus SL(v)$.
We denote it by $sc(v)$. For instance, if the set of neighbors of $v$ intersect 
every $x_iy_i$-path in $G_i\setminus P_i$ then we have $sc(v)=0$.  Conversely,
if $v\in P_i$ has only neighbors in $P_i$, we have $sc(v)=\lambda_i -1$.
The \emph{slice connectivity} $sc_i$ of $G_i$ is the maximum of $sc(v)$, where 
$v\in P_i$. Our third invariant is the sum $sc$ of the $sc_i$, for $i=1,\dots ,q$,
and we try to minimize this invariant. Observe that at any step,
$sc$ is at most $k$.

\begin{figure}
\center\includegraphics[scale=.75]{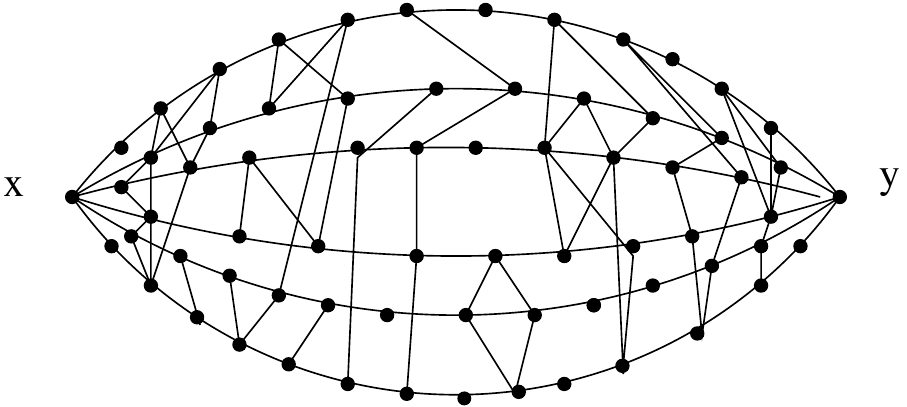}
\caption{A component in a \BM{} instance. The path at the bottom is the backbone. Note that each edge of the backbone does indeed belong to a $\lambda$-cut.}
\label{lemon}
\end{figure}

Our goal is to show that we can always improve our invariant, or conclude that 
$\lambda _i=1$ for all $i$. We consider a component $G_i$ with $\lambda _i>1$, say $G_1$.
To avoid cumbersome indices, we assume that the attachment vertices of $G_1$ 
are $x$ and $y$, and that their edge-connectivity is denoted by $\lambda $ instead of $\lambda _1$.
Moreover, we still denote by $P_1$ the backbone of $G_1$, but we assume (against our previous notations)
that  $P_1,P_2,\dots ,P_{\lambda}$ is a set of edge-disjoint $xy$-paths in $G_1$. 
We visualize $x$ to the left and $y$ to the right (see Figure~\ref{lemon}). Hence when we say that a vertex $u\in P_i$
is to the left of some $v\in P_i$, we mean that $u$ is between $x$ and $v$ on $P_i$.

\subsection{Contracting edges} 
In our proof, we contract edges of the backbone and free edges 
which are not critical. At any step, we always preserve the fact 
that the edges of the backbone are critical.

When contracting an edge of the backbone $P_1$, we need to 
modify several parameters.
Assume that the edges of $P_1$ are $e_1,\dots ,e_{\ell}$. The variable $c_1$
represents the edge of $P_1$ which is deleted
in our multicut. Now assume that the edge $e_i=v_iv_{i+1}$ is contracted.
All the indices of the edges which are at least $i+1$ are decreased by one.
All the constraints associated to the other backbones are not 
affected by the transformation. 
However, each time a clause contains a litteral $c_1 \geq j$, where $j >i$, this litteral must 
be replaced by $c_1 \geq j-1$. Similarly each occurence of $c_1 \leq j'$ for $j' \ge i$ 
must be replaced by $c_1 \leq j'-1$. 
If a set of edges is contracted, we perform the contractions one by one.

The collection of paths $P_2,\dots ,P_{\lambda}$ can be affected 
during our contractions since it can happen that a path 
$P_i$ with $i\geq 2$ contains both endpoints of a contracted 
edge $uv$. In such a case, we remove from $P_i$
the loop formed by the contraction, \ie{}the subpath of 
$P_i$ between $u$ and $v$. We thus preserve our path collection.

\subsection{Choosing a stable edge} 

Let $v$ be a vertex of $P_1$. The \emph{tag} of $v$ is the subset 
$t(v):=\{i~:~P_i\cap SL(v)\neq \emptyset\}$, \ie{}the set of indices of the paths 
intersecting the slice of $v$. Note that
$t(v)$ contains 1. Observe also that the slice connectivity of 
$G_1$ is the maximum  of $\lambda -|t(v)|$, where $v$ belongs to $P_1$.
By extension, the \emph{tag} of an edge $v_iv_{i+1}$ of the backbone $P_1$ 
is the ordered pair $(t(v_i),t(v_{i+1}))$. When speaking of an \emph{$XY$-edge}, 
we implicitely mean that its tag is $(X,Y)$. In particular, the edge 
of $P_1$ which is selected in our solution has a given tag.
We branch over the possible choices for the tag $XY$ of the deleted edge
of $P_1$. Let us assume that the chosen edge has tag $XY$.

\begin{lemma}\label{assosetdisj}
If $X\neq Y$, we improve our invariant.
\end{lemma}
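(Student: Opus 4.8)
The plan is to exploit the structure that a single edge $e = v_i v_{i+1}$ with tag $(X,Y)$ and $X \neq Y$ is deleted on the backbone $P_1$, and to show this forces an improvement in one of our three invariants. First I would observe that since $t(v_i) = X$ and $t(v_{i+1}) = Y$ are different, the edge $e$ separates two slices of $P_1$ with genuinely different tag sets; without loss of generality there is an index $j \in X \setminus Y$ (the case $j \in Y \setminus X$ is symmetric by swapping the roles of $x$ and $y$). The point of having such a $j$ is that the path $P_j$ meets $SL(v_i)$ but not $SL(v_{i+1})$, so $P_j$ must use a critical edge to ``leave'' the slice region containing $v_i$ before reaching $v_{i+1}$ — and this critical edge is a candidate for contraction, which (by the contracting-edges discussion) preserves that the backbone edges stay critical while potentially raising $\lambda_1$ or lowering the slice connectivity.

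The key steps, in order, would be: (1) fix $j \in X \setminus Y$ and consider the $\lambda$-cut realized ``at'' the slice boundary where $v_i$ sits versus where $v_{i+1}$ sits; since $j \notin Y$, the path $P_j$ does not intersect $SL(v_{i+1})$, so $G_1 \setminus SL(v_{i+1})$ still contains the full path $P_j$, giving a lower bound on $sc(v_{i+1})$, while the analogous statement for $v_i$ gives that $P_j$ is ``used up'' near $v_i$. (2) Branch over where in $G \setminus F$ the relevant border vertices of the slice $SL(v_i)$ (or $SL(v_{i+1})$) lie — either some border vertex is cut off from all of $Y$, in which case we add it to $Y$ and improve the primary invariant (size of $Y$ in the \CM{} reduction, or here more precisely we can decrease $\sum f_i$ or reach a contradiction), or all border vertices get contracted to vertices of $Y$. (3) In the latter case, after contraction the slice structure collapses: the critical edges that previously separated $SL(v_i)$ from the rest either disappear or a new $xy$-path appears inside a contracted region avoiding some critical edge, which forces $\lambda$ to increase (improving the secondary invariant, free connectivity $\sum (\lambda_i - 1)$) — or, failing that, the maximum of $\lambda - |t(v)|$ over $v \in P_1$ strictly decreases because the tag of some backbone vertex grows, improving the tertiary invariant $sc$.

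The main obstacle I expect is step (3): carefully arguing that contracting the border vertices of the offending slice to $Y$ genuinely changes the slice/tag structure in the desired direction, rather than just relocating the difficulty. The delicate point is that contracting a free non-critical edge or a backbone edge must be shown to either strictly increase some $\lambda_i$ or strictly decrease $sc$ while leaving $\sum f_i$ and $\sum(\lambda_i-1)$ at least not worse — this is exactly the lexicographic bookkeeping the paper set up, but making it airtight requires tracking how the collection $P_2, \dots, P_\lambda$ and the set of critical edges transform under contraction (as described in the ``Contracting edges'' subsection), and verifying that the clause literals on $c_1$ are reindexed consistently. I would also need to handle the boundary subtlety that $1 \in t(v)$ always, so $j \neq 1$, and that the backbone itself is never contracted away at the deleted edge. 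Once the contraction is set up so that it provably collapses at least one slice boundary while respecting ``backbone edges remain critical,'' the improvement of the invariant is immediate from the definitions of free connectivity and slice connectivity.
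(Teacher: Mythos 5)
Your approach does not match the paper's and contains genuine problems. The paper's proof of this lemma is a pure contraction argument on the backbone with no branching at all: since exactly one edge of $P_1$ is deleted and we have branched on its tag being $XY$, every backbone edge with a different tag is known not to be chosen and can be safely contracted. The payoff is the observation (stated in the ``Contracting edges'' subsection) that contracting a $UV$-edge yields a vertex whose tag contains $U\cup V$; after contracting everything between consecutive $XY$-edges, every surviving backbone vertex has tag containing $X\cup Y$. Since $X\neq Y$, the quantity $\max_v(\lambda-|t(v)|)$ strictly drops below its old value $\geq \lambda-\min(|X|,|Y|)$, so the slice connectivity $sc_1$ decreases while the free connectivity is unchanged.

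Your proposal instead branches over the placement of slice-border vertices and contracts to $Y$, and contemplates contracting a free critical edge. Both moves are unavailable at this point in the argument. First, in the \BM{} setting the primary invariant is $\sum f_i$, not $|Y|$, and enlarging the cutset $Y$ is a move from the reduction of \M{} to \CM{} (Section~\ref{CM}); it is not part of the machinery for \BM{}, and doing it here would destroy the component/backbone structure that \BM{} assumes. Second, the paper's contraction rules explicitly permit contracting only backbone edges and \emph{non-critical} free edges; contracting a critical free edge could invalidate the invariant that every backbone edge lies in a $\lambda$-cut, which the whole section relies on. Finally, you identify the right starting observation (some $j\in X\setminus Y$), but never leverage it to the paper's simple conclusion: the combinatorial fact that contraction grows tags by union, applied uniformly to the backbone, is the entire content of the lemma. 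As written, your step (3) remains a hope rather than an argument, as you yourself note.
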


\begin{proof}
Since only one edge is cut in the backbone, we can contract all the edges of $P_1$ with tags different from $XY$. 
Observe that when contracting some $UV$-edge of $P_1$, the tag of
the resulting vertex contains $U \cup V$ since the slice of the resulting 
vertex contains the union of both slices (it can actually be larger).
After our contraction, all the edges of $P_1$ between two consecutive occurences 
of $XY$-edges are contracted, hence the tag of every vertex of $P_1$ 
now contains $X\cup Y$. In particular, the slice connectivity of $G_1$ decreases while the free
connectivity is unchanged. Thus our invariant improved.
\end{proof}

Therefore we may assume that we choose an $XX$-edge in our solution. Let us 
contract all the edges of $P_1$ which are not $XX$-edges.
By doing so, we now have that the tag of every vertex of $P_1$ contains $X$.
After this contraction, our instance is modified, hence we have
to branch again over the choice of the tag of the edge
chosen in our solution. Any choice different from $XX$ 
increases the slice connectivity. Hence we can still assume that the 
tag of the chosen edge is $XX$. 

The slice connectivity of $G_1$ is $\lambda -|X|$. An $XX$-edge $uv$ of the backbone 
is \emph{unstable} if, when contracting $uv$, the tag of the vertex $u=v$ 
increases (\ie{}strictly contains $X$). Otherwise $uv$ is \emph{stable}.
We branch on the fact that the chosen $XX$-edge is stable or unstable. 

\begin{lemma}\label{cici}
If the chosen $XX$-edge is unstable, we improve the invariant.
\end{lemma}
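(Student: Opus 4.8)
The plan is to exploit the structural meaning of an unstable $XX$-edge, namely that contracting it enlarges the tag, and then argue that this forces the slice connectivity to drop after the contraction while the free connectivity is untouched. So first I would fix notation: let $uv$ be the chosen $XX$-edge, so $t(u)=t(v)=X$, and recall that the slice connectivity of $G_1$ is $\lambda-|X|$ precisely because we have already contracted all non-$XX$-edges of $P_1$, so every vertex of $P_1$ has a tag containing $X$, and at least one vertex of $P_1$ (one witnessing the maximum) has tag exactly $X$. The branch assumption is that the deleted backbone edge is unstable, which by definition means that after contracting $uv$ the new vertex $u=v$ has tag strictly containing $X$, say containing some index $j\notin X$.

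Next I would carry out the contraction of $uv$, which is a legal operation since $uv$ is a backbone edge and contracting it keeps the remaining backbone edges critical (as discussed in the ``Contracting edges'' subsection); I would update $c_1$ and the clauses according to the prescribed rule. Since only one edge of $P_1$ is deleted and we are in the branch where that edge is $uv$, contracting $uv$ does not change the set of solutions: a solution that deleted $uv$ now deletes the merged vertex's... more precisely, the constraint ``$c_1$ equals the position of $uv$'' is consistent with the contraction bookkeeping, and every other backbone and every free-edge count is unaffected. After the contraction, the merged vertex $w:=u=v$ lies on $P_1$ and satisfies $j\in t(w)$ with $j\notin X$. Then I would re-examine the slice connectivity: every vertex of $P_1$ still has a tag containing $X$ (contraction only grows tags, and all the other vertices already had $X\subseteq$ their tag), so the maximum of $\lambda-|t(\cdot)|$ over $P_1$ is at most $\lambda-|X|$; and I need it to be strictly smaller, i.e. I must rule out that some other vertex $v'$ of $P_1$ still has tag exactly $X$. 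This is the delicate point.

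The key observation resolving the delicate point is that we should re-branch: after the contraction the instance has changed, so we again branch over the tag of the chosen backbone edge of $P_1$. If the new maximal tag deficiency is still $\lambda-|X|$, there is a vertex $v'$ with $t(v')=X$; but then, as in Lemma~\ref{cici}'s setup, we again consider whether the chosen edge is stable or unstable, and if unstable we contract again and repeat. Each contraction strictly decreases the length $\ell_1$ of the backbone $P_1$, which is a nonnegative integer, so this inner loop terminates; when it does, either the chosen edge is a stable $XX$-edge (handled by the next lemma) or the slice connectivity has strictly decreased. More directly: I would argue that a single contraction of an unstable edge already forces the slice connectivity down, because all occurrences of $XX$-edges were between consecutive vertices both of tag exactly $X$, and ``unstable'' means the two slices $SL(u)$ and $SL(v)$ together span a path index outside $X$; after merging, the slice of $w$ contains $SL(u)\cup SL(v)$ plus possibly more, so $|t(w)|>|X|$, and by the same token any vertex $v'$ adjacent on $P_1$ to $w$ now has its slice meeting $SL(w)$'s new territory only if...

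The cleanest framing, and the one I would adopt, is the iterative one: repeatedly branch on the tag and on stable/unstable; each unstable branch triggers a contraction strictly shortening $P_1$; since $|P_1|$ is bounded and strictly decreasing, after finitely many steps we are in a branch where the chosen $XX$-edge is stable, or where re-branching on the tag yields $X'\neq X$ (improving the invariant by decreasing slice connectivity, as in Lemma~\ref{assosetdisj}), or where $|X|$ has grown (also decreasing slice connectivity $\lambda-|X|$). In every terminal case except ``stable $XX$-edge'' the third invariant $sc$ strictly decreases while the primary invariant (the $\sum f_i$) and secondary invariant (free connectivity $\sum(\lambda_i-1)$) are unchanged, so the lexicographic invariant improves; and the branching is finite because $\ell_1\le n$ decreases at each contraction. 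The main obstacle I anticipate is making precise that contracting an unstable edge really does strip the value $X$ from being the \emph{unique} maximizer — i.e. verifying that no fresh vertex of $P_1$ with tag exactly $X$ is created or survives — and handling this cleanly is exactly why the argument is phrased as an iterative contract-and-rebranch loop rather than a one-shot contraction.
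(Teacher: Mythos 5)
Your central move — contracting the chosen unstable edge $uv$ — is not a legal operation. The variable $c_1$ records which backbone edge is \emph{deleted} by the solution, and the branch you are in asserts precisely that this deleted edge is an unstable $XX$-edge. You cannot then contract that edge: contraction destroys the very edge the multicut is supposed to select, so the reduced instance no longer represents the same set of solutions. You half-notice this (``a solution that deleted $uv$ now deletes the merged vertex's\ldots'' trails off), and the bookkeeping rule in the paper's ``Contracting edges'' subsection reindexes clauses assuming the contracted edge is \emph{not} the chosen one, so it does not rescue you. This is the gap; the iterative contract-and-rebranch loop you propose does not close it, since every iteration repeats the same illegitimate contraction. (It also risks an unbounded branching depth, since $\ell_1$ may be as large as $n$, though this is a secondary concern next to the contraction issue.)

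The paper's proof sidesteps the problem with a parity trick: enumerate the unstable edges along $P_1$ from left to right, and branch (only twice) on whether the chosen unstable edge has odd or even position in this list. In the odd branch, contract every edge of $P_1$ \emph{except} the odd-position unstable edges. The chosen edge survives (it has odd position), and between any two consecutive surviving edges lies an even-position unstable edge, which gets absorbed into the merged vertex; since it was unstable, that merged vertex now has tag strictly containing $X$. Hence every interior backbone vertex has tag strictly containing $X$ (and $x,y$ have full tag), so the slice connectivity strictly drops while the free connectivity is unchanged — a one-shot improvement of the invariant with only two branches. That is the idea your proposal is missing.
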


\begin{proof} 
We enumerate the set of all unstable edges from left to right along $P_1$, and partition them according 
to their index into the odd indices and the even indices. We branch according to
the index of the chosen unstable edge. Assume for instance that the chosen unstable edge has
odd index. We contract all the edges of $P_1$ save the odd unstable edges. 
We claim that every vertex tag of backbone vertices now strictly contains 
$X$. Indeed, all edges of $P_1$ between two consecutive odd unstable edges are contracted,
in particular some even unstable edge. Thus, since this even edge
is unstable, the tag now strictly contains $X$. Hence the slice connectivity decreases.
\end{proof}


\subsection{Contracting slices}
In this part, we assume that the chosen edge of $P_1$ is a stable $XX$-edge. 
A vertex $v$ of $P_1$ is \emph{full} if $v$ belongs to every $P_i$, where $i\in X$ (see Figure~\ref{full}).
Our goal in this subsection is to show that we can reduce to the case 
where $X=\{1,\dots ,\lambda\}$.
By the previous section, any branching 
increasing the tag of the chosen edge would improve the 
invariant. So we assume that in all our branchings, the
chosen edge is still a stable $XX$-edge.


\begin{figure}\center\includegraphics[scale=.75]{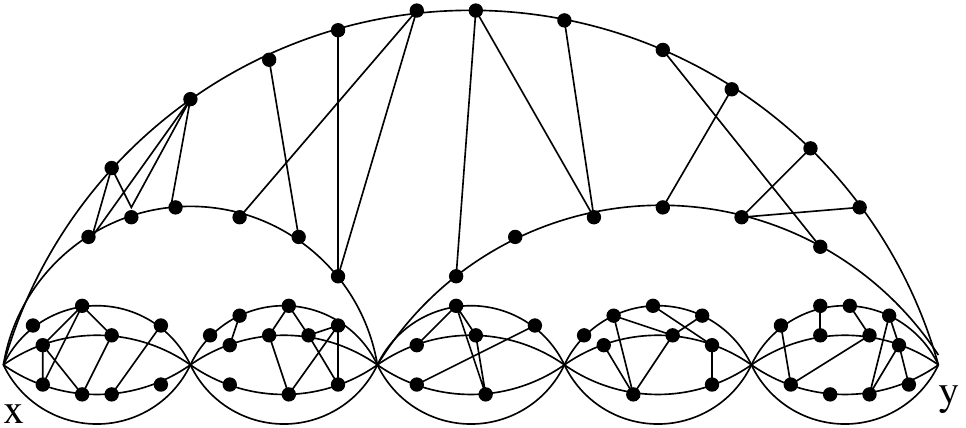}\caption{All backbone vertices of this component are full, where $X=\{1,2,3,4\}$, with the backbone $P_1$ at the bottom.}\label{full}\end{figure}


\begin{lemma}\label{seqlemons}
We can assume that all backbone vertices are full.
\end{lemma}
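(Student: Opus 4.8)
\textbf{Proof proposal for Lemma~\ref{seqlemons}.}

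The plan is to show that if some backbone vertex $v$ of $P_1$ fails to be full, then a suitable branching either lets us remove a slice (thereby reducing the instance) or improves one of our three invariants, so we may iterate until all backbone vertices are full. Recall that $v$ is full if $v\in P_i$ for every $i\in X$, and that the chosen edge of $P_1$ is a stable $XX$-edge, so in particular the tag of every backbone vertex contains $X$. If $v$ is not full, then although the slice $SL(v)$ meets every path $P_i$ with $i\in X$ (since $t(v)\supseteq X$), the vertex $v$ itself misses some $P_i$, $i\in X$; equivalently $SL(v)$ contains a nontrivial detour of some $P_i$ away from $v$.

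First I would analyze the structure of a non-full slice. Fix a non-full backbone vertex $v$ and let $B$ be the border of $SL(v)$, i.e.\ the vertices of $SL(v)$ incident to a critical edge; as in the proof of Lemma~\ref{reduceattach}, $B$ meets each $P_i$ in at most two vertices, so $|B|\leq 2\lambda$, a bounded number. The key observation is that since the chosen edge of $P_1$ is a \emph{stable} $XX$-edge, if it lies outside $SL(v)$ then $SL(v)$ is not touched by the backbone deletion, so inside $SL(v)$ our solution $F$ uses only free edges, and moreover the restriction of $F$ to $SL(v)$ cannot disconnect $x$ from $y$ (that is the job of the backbone edge and the critical structure). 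I would then branch, exactly in the style of Lemma~\ref{reduceattach}, over (a) which edge of $P_1$ is the chosen backbone edge — in particular whether it lies in $SL(v)$ or not — and (b) for each border vertex $u\in B$, whether $u$ is connected to $Y$ in $G\setminus F'$ and, if so, to which vertex of $Y$. In the branches where some border vertex of $SL(v)$ is disconnected from $Y$, we add it to $Y$, improving the primary invariant $|Y|$ (or, in the \BM{} setting, creating a new component with two attachment vertices / decreasing the $f_i$ — I would phrase this in terms of whichever invariant ordering the section uses). In the remaining branches, every border vertex is contracted to an element of $Y$; then $SL(v)$, with its border identified to $Y$, either becomes a cherry (attached to a single vertex of $Y$) that can be reduced and removed via the cherry machinery of Section~\ref{sectionconnectivity} (Theorems~\ref{cherryrequestreduction}, \ref{boundedcut}, Lemma~\ref{finitelycut}), or becomes a genuine two-attachment component, which must then carry a backbone of its own — but that backbone edge, together with the stable $XX$-edge of $P_1$, forces a change in the slice connectivity $sc$: contracting the non-$P_i$ detour inside $SL(v)$ makes $v$ belong to more paths, i.e.\ enlarges $t(v)$ and hence decreases $sc_1$, while leaving the free connectivity $\sum(\lambda_i-1)$ and the $f_i$ unchanged. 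So in every branch we either reduce the instance or improve the invariant; repeating the argument over all non-full backbone vertices terminates.

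The main obstacle I anticipate is the bookkeeping of \emph{which} invariant moves in the branch where a non-full slice gets absorbed without any border vertex being cut off from $Y$: one must argue carefully that contracting the detour portion of some $P_i$ inside $SL(v)$ strictly increases $|t(v)|$ for the relevant backbone vertex $v$ — this is where non-fullness is used — and that this strictly decreases $sc_1 = \lambda - \min_v |t(v)|$ (or, more precisely, decreases the maximum of $\lambda-|t(v)|$ over $v\in P_1$) without disturbing the higher-priority invariants. A secondary subtlety is that contractions inside a slice can merge two paths $P_i,P_j$ of the edge-disjoint family or create loops along them, exactly the phenomenon discussed in the ``Contracting edges'' subsection; I would handle this by the same loop-removal convention, and note that merging can only help, since it can only enlarge tags. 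Once the invariant-improvement is pinned down in each branch, the fact that all three invariants take a bounded number of values guarantees the iteration halts with every backbone vertex full, which is the claim.
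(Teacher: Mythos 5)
Your approach is not the paper's, and I don't think it works as written. Two concrete problems.

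First, the step you worry about in your last paragraph is in fact broken: you claim that contracting the non-$P_i$ detour inside $SL(v)$ ``enlarges $t(v)$ and hence decreases $sc_1$.'' But at this stage of the argument every backbone vertex already has tag containing $X$ (that was arranged in the two previous branchings, which contracted all non-$XX$-edges), so $|t(v)| \geq |X|$ for every $v\in P_1$ and $sc_1 = \lambda - |X|$. Making $v$ \emph{full} — i.e.\ putting $v$ itself, rather than just its slice, on every $P_i$ with $i\in X$ — does not grow the tag and does not touch any of the three invariants of Section~\ref{trlemon} ($\sum f_i$, free connectivity, slice connectivity). Fullness and tag size are genuinely different properties; the paper needs fullness for the \emph{subsequent} rank argument, not to move an invariant. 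So your branches in which ``$SL(v)$ is absorbed with all border vertices on the $Y$ side'' have no invariant to charge, and the iteration over non-full vertices has no a priori bound on its depth.

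Second, you miss the actual engine of the paper's proof, which is a disjointness-plus-pigeonhole observation that lets you contract \emph{many} slices simultaneously in a single bounded branching rather than handling one non-full vertex at a time. The slices $SL(w_1), SL(w_2), \dots$ of the backbone vertices with tag $X$ are pairwise disjoint, and a solution $F$ contains at most $k$ free edges, so $F$ meets at most $k$ of these slices. Partitioning the slices into $k+1$ classes by index modulo $k+1$, some class is completely untouched by $F$; branch on which one, and contract every slice in that class onto its backbone vertex. After this single $(k+1)$-way branch, full vertices occur at least once in every window of $k+1$ consecutive tag-$X$ vertices. The second half of the paper's proof then exploits this density: between two consecutive full vertices there are at most $k+1$ stable $XX$-edges, so you can assign each such edge a \emph{rank} in $\{1,\dots,k+1\}$, branch on the rank of the chosen edge, and contract all edges of every other rank. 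By construction a full vertex separates any two edges of the same rank, so after the contraction only full vertices remain on $P_1$. Neither the disjoint-slices pigeonhole nor the rank argument appears in your proposal, and without them the reduction has no bounded branching tree.
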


\begin{proof} 
We can first assume that there are at most $k$ vertices with tag $X$
between two full vertices.
Indeed, let us enumerate $w_1,w_2,\dots $ the vertices with tag $X$ from left to right along
the backbone $P_1$. Since a solution $F$ contains at most $k$ free edges and the slices of 
the vertices of the backbone are disjoint, at most $k$ slices of vertices $w_i$ contain an edge of $F$. 
Hence, if we partition the set of all slices $SL(w_i)$ into 
$k+1$ classes according to their index $i$ modulo $k+1$, the solution $F$ will not
intersect one of these classes. We branch on these $k+1$ choices. Assume 
for instance that $F$ does not contain an edge in all $SL(w_i)$ where $i$ divides $k+1$. 
Therefore, we can safely contract each of such slices $SL(w_i)$ onto $w_i$. This makes $w_i$
a full vertex.

Let us now enumerate the full vertices $z_1,z_2,\dots $ from left to right. Let $uv$ be some
stable $XX$-edge. There exists a full vertex $z_i$ to the left of $u$ (with 
possibly $z_i=u$) and a full vertex $z_{i+1}$ to the right of $v$. Since 
the number of vertices with tag $X$ between 
$z_i$ and $z_{i+1}$ is at most $k$, the number of $XX$-edges between 
$z_i$ and $z_{i+1}$ is at most $k+1$. The \emph{rank} of $uv$ is the index of $uv$ in 
the enumeration of the edges between $z_i$ and $z_{i+1}$ from left to right. Every edge 
of $P_1$ has some rank between 1 and $k+1$. In particular, we can branch over 
the rank of the selected stable $XX$-edge. Assume for instance that the rank of our 
chosen edge is 1. We then contract all edges which are not stable $XX$-edges with rank 1. 
This leaves only full vertices on $P_1$ since by construction there is a 
full vertex between two edges of the same rank.
\end{proof}

Note that after performing the reduction of Lemma~\ref{seqlemons}, if $v_iv_{i+1}$ is a stable $XX$-edge, then for every vertex $w\in P_j$, with $j\in X$, which lies between $v_i$ and $v_{i+1}$ in $P_j$, every $wY$-path contains $v_i$ or $v_{i+1}$. In particular, if $X=\{1,\dots,\lambda\}$ then $v_i$ and $v_{i+1}$ are $xy$-cut vertices in $G_1$.

\begin{lemma}\label{onlylemons}
We can assume that $X=\{1,\dots ,\lambda\}$.
\end{lemma}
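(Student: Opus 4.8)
The plan is to show that if $X\neq\{1,\dots,\lambda\}$ then we can improve one of our three invariants; since the only way no improvement is available is to already have $X=\{1,\dots,\lambda\}$, this gives the lemma. So assume $X\subsetneq\{1,\dots,\lambda\}$ and fix an index $j_0\notin X$. By Lemma~\ref{seqlemons} every backbone vertex is full, so $P_{j_0}$ is disjoint from $SL(v)$ for every $v\in P_1$; in particular $P_{j_0}$ avoids every backbone vertex. The structural fact I would rely on is that $P_{j_0}$, being an $xy$-path, crosses every $xy$-min-cut $\Delta(A)$ of $G_1$, and the edge $g_A\in E(P_{j_0})\cap\Delta(A)$ is then critical (it lies in the $\lambda$-cut $\Delta(A)$). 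Hence every min cut of $G_1$ contains a critical edge of $P_{j_0}$.

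I would then split into two regimes according to how the solution $F'$ meets $P_{j_0}$. If $F'$ contains no critical edge of $P_{j_0}$, then $F'$ separates the endpoints of no critical edge of $P_{j_0}$ (consecutive vertices of $P_{j_0}$ joined by an edge outside $F'$ stay together), so we may contract all critical edges of $P_{j_0}$ at once; this destroys every min cut of $G_1$ and hence strictly raises $\lambda=\lambda_1$, i.e.\ raises the free connectivity, which improves our secondary invariant. In the complementary regime $F'$ does use some critical edges of $P_{j_0}$, but at most $f_1\le k$ of them (they lie on $P_{j_0}\cap F'$). Enumerating the critical edges of $P_{j_0}$ along the path and partitioning them into $k+1$ classes by index modulo $k+1$, one class $C$ is untouched by $F'$; we branch over the $k+1$ choices and, in the chosen branch, contract all critical edges of $P_{j_0}$ in $C$. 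This shrinks $P_{j_0}$ along the min-cut chain near the chosen (critical) backbone edge $e=v_mv_{m+1}$, and — running the rank/modulo bookkeeping of Lemmas~\ref{cici} and~\ref{seqlemons} in parallel, so as to keep the selected edge a stable $XX$-edge while synchronising the two contraction schemes — forces the slices $SL(v_m)$ and $SL(v_{m+1})$ (in fact the slices of all remaining backbone vertices) to meet $P_{j_0}$, adding $j_0$ to every backbone tag. Then $\min_{v\in P_1}|t(v)|$ increases, the slice connectivity $sc_1=\lambda-\min_{v\in P_1}|t(v)|$ of $G_1$ strictly decreases, and so does our third invariant $sc$.

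The main obstacle is exactly this last coordination step. We cannot afford to branch on which backbone edge of $P_1$ is deleted (the backbone can be long), so the argument must be uniform: the modulo branching on the critical edges of $P_{j_0}$ has to be meshed with the earlier "rank $1$" normalisation of Lemma~\ref{seqlemons} and with the branching on the tag of the selected edge, in such a way that after the contractions the index $j_0$ has been added to the tag of \emph{every} backbone vertex, not just some of them — only then does $\min_v|t(v)|$, and hence $sc_1$, actually drop. As in the preceding lemmas this is a matter of choosing the right synchronising parameter so that between two consecutive synchronisation points $P_{j_0}$ contributes $j_0$ to each backbone vertex's tag; the flavour is the same as Lemmas~\ref{cici}, \ref{backbone} and~\ref{seqlemons}, but the simultaneous control of two nested modulo-partitions is where the care is needed.
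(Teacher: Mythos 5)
Your proposal takes a genuinely different route from the paper, but it has two real gaps.

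First, the structural premise is wrong. You write that since every backbone vertex is full (Lemma~\ref{seqlemons}), $P_{j_0}$ is disjoint from $SL(v)$ for every $v\in P_1$. But ``full'' only means $v\in P_i$ for every $i\in X$; it gives $X\subseteq t(v)$, not $t(v)=X$. After Lemma~\ref{seqlemons} the backbone may still contain vertices whose tag strictly contains $X$ (for instance $x$ and $y$ themselves have tag $\{1,\dots,\lambda\}$), and for any such vertex $SL(v)$ does meet $P_{j_0}$, so $P_{j_0}$ does not avoid all backbone vertices. Your observation that ``every min cut of $G_1$ contains a critical edge of $P_{j_0}$'' is fine on its own, but the clean picture you base the rest on — $P_{j_0}$ floating entirely outside all slices — is not what the normalization delivers.

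Second, and more seriously, the coordination step you flag yourself is not a cosmetic difficulty; it is the heart of the lemma and your scheme does not close it. Contracting one modulo-$(k+1)$ class of critical edges of $P_{j_0}$ will enlarge the tags of \emph{some} backbone vertices, namely those adjacent to the contracted portion, but not of all of them; since $sc_1=\lambda-\min_{v\in P_1}|t(v)|$ is governed by the \emph{minimum} tag size, nothing forces $sc_1$ to drop. To actually shrink the minimum you would have to add $j_0$ simultaneously to every backbone tag, and there is no reason a single modulo class of contractions on $P_{j_0}$ achieves that. The paper avoids this issue by a different mechanism altogether: it does not try to increase all the tags of $G_1$; instead it splits $G_1$ into two new components $G_1^1$ (the ``between'' pieces, where all backbone edges are stable $XX$, so slice connectivity becomes $0$) and $G_1^2$ (the leftover, where every backbone vertex acquires a tag strictly containing $X$ after contraction, so its slice connectivity is $<\lambda-|X|$), and it correlates the two new backbone variables by clauses in $\mathcal C$ so that picking an edge in one dictates the lemon in the other. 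The free connectivity stays fixed and the total slice connectivity strictly decreases, with no need to touch $P_{j_0}$ or synchronize modulo-partitions along the backbone. Your approach, even if the coordination could be made precise, would still need a separate argument that the contractions preserve the invariant that every backbone edge is critical — another place where the paper's split-and-correlate construction is cleaner.
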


\begin{proof}
In other words, we can reduce to the case where every vertex of $P_1$ is a 
cutvertex of $G_1$. Assume now that $X$ is not equal to $\{1,\dots ,\lambda\}$. 
We show that we can partition our component $G_1$ into
two components $G_1^1$ and $G_1^2$. This partition leaves the free-connectivity
unchanged, but decreases the slice connectivity. 
A vertex $v_i$ of the backbone $P_1$ is \emph{left clean} if the edge $v_{i-1}v_i$
of $P_1$ is a stable $XX$-edge, but the edge $v_{i}v_{i+1}$ of $P_1$ is not. It is 
\emph{right clean} if the edge $v_{i}v_{i+1}$
is a stable $XX$-edge, but the edge $v_{i-1}v_i$ is not. Finally, $v_i$ is \emph{clean} 
if both $v_{i-1}v_i$ and $v_{i}v_{i+1}$ are stable $XX$-edges. 
When enumerating all left clean and right clean vertices from left to right,
we obtain the sequence of distinct vertices $r_1,l_1,r_2,l_2,\dots ,r_p,l_p$
where the $r_i$ are right clean and the $l_i$ are left clean vertices. 
Observe that $x$ and $y$ do not appear in the sequence since their tag 
is $\{1,\dots ,\lambda\}$.
Let us consider a pair $r_i,l_i$. We say that a vertex $v$ of 
$G_1$ is \emph{between} $r_i$ and $l_i$ if every path from $v$ to $x$
or $y$ intersects $\{r_i,l_i\}$. Let $B_i$ be the set of vertices which
are between $r_i$ and $l_i$. Let $B$ be the union of $B_i$
for $i=1,\dots ,p$. Let $G_1^1$ be a copy of the graph induced 
on $B$ by $G_1$. Observe that $G_1^1$ has $p$ connected components, since $l_i\neq r_{i+1}$.
We contract in $G_1^1$ the vertices $l_i$ and $r_{i+1}$, for all $i=1,\dots ,p-1$,
hence making $G_1^1$ connected. We finally identify in $G_1^1$ the vertex $x$ with $r_1$ and $y$ with $l_p$.
The backbone $P_1^1$ of $G_1^1$ simply consists of the edges of the original backbone.
To construct $G_1^2$, we remove from $G_1$ all the vertices of $B$
which are not left clean or right clean vertices. Hence no stable $XX$-edge is left
in $G_1^2$. We contract all the backbone edges of $G_1^2$. Formally, all the vertices between
$x$ and $r_1$ are contracted to a vertex $w_1:=x$, more generally 
all the vertices between $l_i$ and $r_{i+1}$ are contracted to a new vertex called $w_{i+1}$,
and finally all the vertices between $l_p$ and $y$ are contracted to $w_{p+1}:=y$.
We now add the path $w_1,w_2,\dots ,w_{p+1}$
which is the backbone $P_1^2$ of $G_1^2$.
We correlate the edges of the backbone of $G_1^1$ and $G_1^2$ by adding 
clauses implying that the chosen edge of $P_1^2$ is $w_iw_{i+1}$ if and only if  
the chosen edge of $P_1^1$ is between $r_i$ and $l_i$.
We finally branch to split the number of free edges $f_1$
chosen in $G_1$ into $f_1^1+f_1^2=f_1$, the respective free edges 
deleted in $G_1^1$ and $G_1^2$. Let us call $G'$ the graph $G$ in which 
$G_1$ is replaced by $G_1^1$ and $G_1^2$. Note that the free edges 
of $G_1$ are partitioned into the free edges of $G_1^1$ and of $G_1^2$.
Observe that the free-connectivity of $G$ and $G'$ are equal.
However, the slice connectivity has decreased in $G'$, since its value 
is $0$ in $G_1^1$ and strictly less than $\lambda -|X|$ in $G_1^2$. Indeed, for $i=1,\dots ,p-1$, the edge $l_il_{i+1}$ is either unstable or the tag of one of its endpoints strictly contains $X$. Hence contracting all vertices between $l_i$ and $r_{i+1}$ strictly increases the tag of the resulting vertex in $G_1^2$.
Hence we improve our invariant. Figure~\ref{L22} gives an example of this transformation.

\begin{figure}
\center\includegraphics[scale=.5,angle=-90]{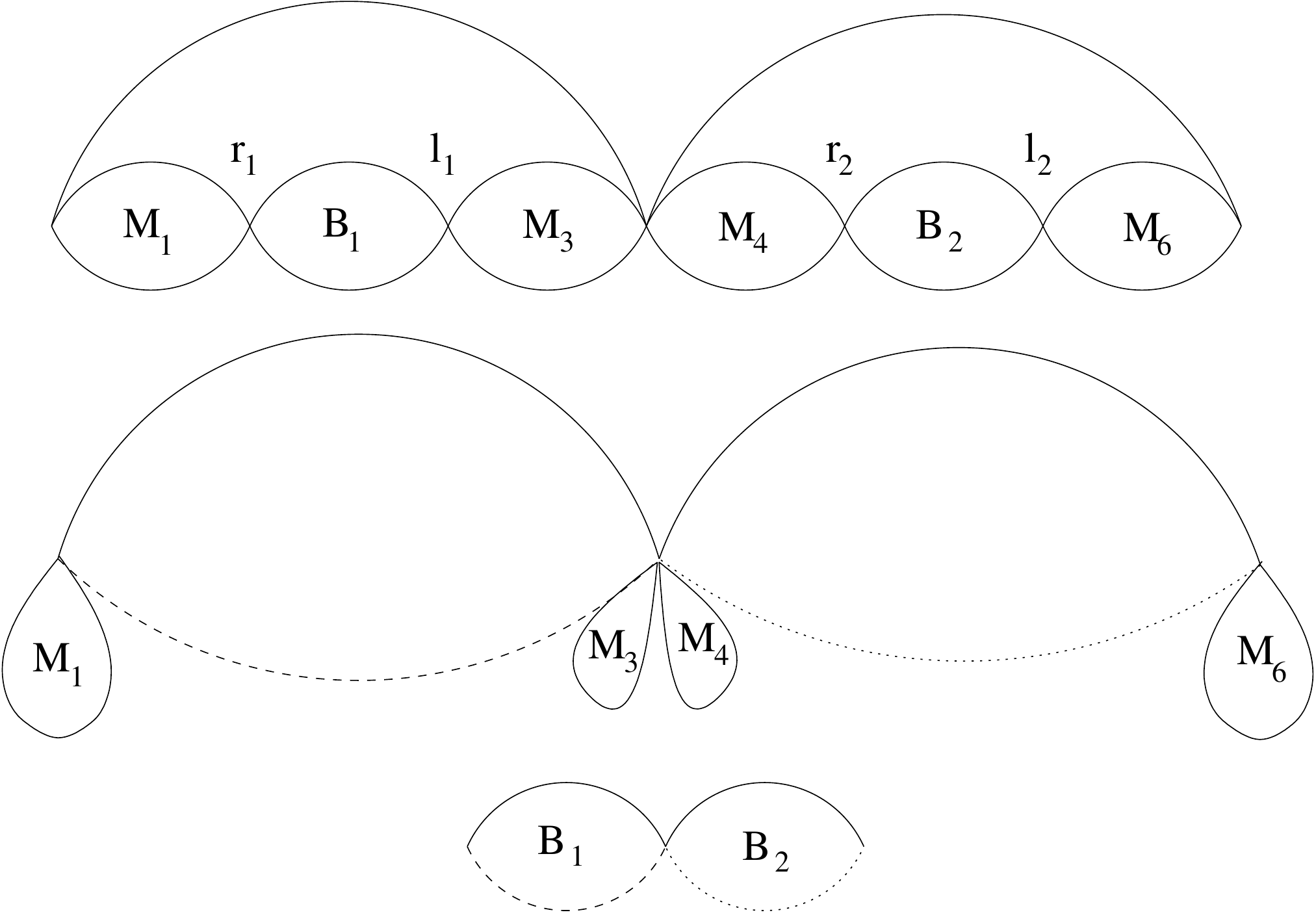}
\caption{At the top, a component $G_1$ before transformation in the proof of Lemma~\ref{onlylemons}. At the bottom, $G_1^2$ consists of $B_1$ and $B_2$, and $G_1^1$ is depicted in the center. The dashed edge in $G_1^1$ is correlated to the dashed edge in $G_1^2$, and similarly the dotted edges in $G_1^1$ and  $G_1^2$ are correlated.}
\label{L22}
\end{figure}

We now have to prove that there exists a multicut in $G'$ if and only if there exists
one in $G$ which uses a stable $XX$-edge. This comes from the following observation:
Let $e=v_jv_{j+1}$ be a stable $XX$-edge of $P_1$ between $r_i$ and $l_i$.
Let $G_e$ be obtained from $G$ by deleting $e$,
contracting $x$ to all vertices of $P_1$ to the left of $v_j$, and contracting $y$ to
all vertices of $P_1$ to the right of $v_{j+1}$. 
Let $G'_e$ be obtained from $G'$ by deleting $e$ in $P_1^1$, deleting the
edge $w_iw_{i+1}$ correlated to $e$ in $P_1^2$,
contracting $x$ to all vertices of $P_1^1$ to the left of $v_j$ and all 
vertices of $P_1^2$ to the left of $w_i$, and contracting $y$ to
all vertices of $P_1^1$ to the right of $v_{j+1}$ and all 
vertices of $P_1^2$ to the right of $w_{i+1}$.
The key fact is that $G_e$ is equal to $G'_e$. Hence the multicuts in $G$
and $G'$ selecting the edge $e$ are in one to one correspondence.
\end{proof}

The proof of Lemma~\ref{onlylemons} produces a new component, hence a new
edge to be chosen in a backbone. This increases the deletion allowance
by $1$, but the number of free edges has not increased. Since our invariant improves, we can perform this operation a bounded number of times, and this is indeed enough for 
our proof. More accurately, this operation can be performed at most $2k$ times. 
Indeed, when $X\neq \{1\}$, we have $f_1^1>0$ and $f_1^2>0$, hence 
the integer partition $f_1,\dots ,f_q$ has been refined, and this can 
happen at most $k$ times. When $X =\{1\}$, we can have $f_1^1=0$,
but in this case, the slice connectivity in 
$G_1^2$ is stricly less than $\lambda-1$, hence applying Lemma~\ref{onlylemons}
on $G_1^2$ results in a refinement of $f_1^2,f_2,\dots ,f_q$. This 
can be done at most $k$ times.


\subsection{Reducing the lemons}\label{dilworth}
We now assume that each vertex of the backbone $P_1$ of $G_1$
intersects all other paths $P_i$. 
Let $v_iv_{i+1}$ be an edge of the backbone $P_1$. The \emph{$v_i$-cherry} $C_i$
is the set of all vertices $u$ of $G_1$ such that every $uY$-path 
contains $v_i$.

\begin{figure}
\center\includegraphics[scale=.75,angle=-90]{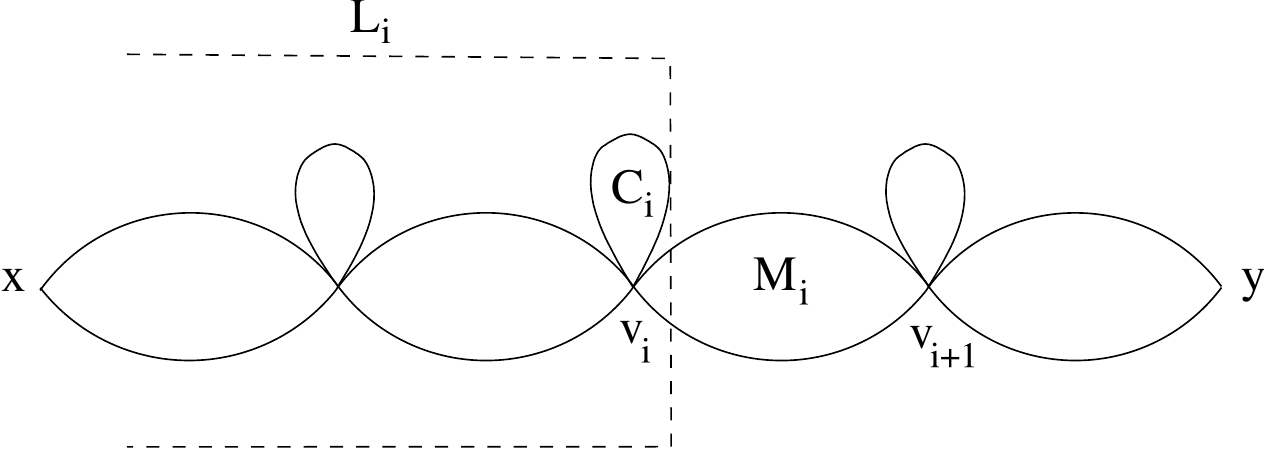}
\caption{The left subgraph $L_i$ of $v_i$. The set $C_i$ is the $v_i$-cherry, and $M_i$ is the lemon of the backbone edge $v_iv_{i+1}$.}
\label{kiki}
\end{figure}

The \emph{lemon} $M_i$ of $v_iv_{i+1}$ is the set consisting of $v_i$, $v_{i+1}$ and of all vertices $u$ of $G_1$ which do not belong to a cherry and such that every $ux$-path in $G_1$ contains $v_i$ and every $uy$-path in $G_1$ contains $v_{i+1}$.
Observe that when contracting $v_iv_{i+1}$, 
the lemon $M_i$ becomes part of the $v_i$-cherry, where $v_i$ denotes the 
resulting vertex. We denote by $L_i$ the union of all $C_j$ with
$j\leq i$ and all $M_j$ with $j<i$. We call $L_i$ the \emph{left subgraph}
of $v_i$. Similarly, the \emph{right subgraph} $R_i$
of $v_i$ is the union of all $C_j$ with
$j\geq i$ and all $M_j$ with $j>i$.  See Figure~\ref{kiki}.

If a multicut $F$ selects the edge $v_iv_{i+1}$ in the backbone, then the vertices $x,v_1,\dots,v_i$ all lie in the same connected component of $G\setminus F$. When these vertices $x,v_1,\dots,v_i$ are contracted to $x$, the set $L_i$ becomes an $x$-cherry. Half-requests through $y$ with an endpoint in $L_i$ are automatically cut since $F$ splits $Y$. Consider the terminals $T_i$ of half-requests of $L_i$ which are routed via $x$. By Theorem~\ref{cherryrequestreduction}
we can reduce $T_i$ to a bounded set of terminals $K_i$ (note that these half-requests become equivalent to usual requests, since $L_i$ is now an $x$-cherry). This motivates the following key definition.

By Lemma~\ref{finitelycut}, we define ${\mathcal L}_i$ to be a bounded active set of edges in the $x$-cherry obtained from $L_i$ by contracting vertices $x,v_1,\dots,v_i$. By Theorem~\ref{boundedcut}, we can compute such sets ${\mathcal L}_i$ so that ${\mathcal L}_j\cap L_i\subseteq  {\mathcal L}_i$ when $i\leq j$.

Let us say that a multicut $F$ selecting $v_iv_{i+1}$ in $P_1$ is \emph{proper} if
$F\cap L_i$ is included in  ${\mathcal L}_i$.

\begin{lemma}\label{proper}
If there exists a multicut $F$ of size at most $k$ containing the backbone edge $v_iv_{i+1}$, then there is a proper multicut $F'$ of size at most $k$ containing $v_iv_{i+1}$.
\end{lemma}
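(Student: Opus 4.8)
The plan is to show that any size-$k$ multicut $F$ containing $v_iv_{i+1}$ can be modified, without changing anything outside $L_i$ and without increasing the size, into one whose restriction to $L_i$ lies in $\mathcal{L}_i$. First I would use the fact that when $F$ selects $v_iv_{i+1}$, the vertices $x,v_1,\dots,v_i$ lie in a common component of $G\setminus F$ (because the backbone edges $e_1,\dots,e_{i-1}$ are not deleted by $F$ — only $e_i$ is, since $F$ selects exactly $v_iv_{i+1}$ in $P_1$); hence after contracting $x,v_1,\dots,v_i$ to $x$, the set $L_i$ becomes an $x$-cherry $\widehat{L}_i$, and $F\setminus(G\setminus L_i)$ plays the role of a cut inside this cherry. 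The point of the \emph{active} property (Lemma~\ref{finitelycut}, Theorem~\ref{boundedcut}) is exactly that it is phrased relative to this contracted cherry: $\mathcal{L}_i$ is an active set of edges in $\widehat{L}_i$, so there is a multicut $F'$ with $|F'|\le|F|$, $F'\setminus \widehat{L}_i = F\setminus \widehat{L}_i$, and $F'\cap \widehat{L}_i\subseteq \mathcal{L}_i$.

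The next step is to translate this back to the original graph $G$. The contraction of $x,v_1,\dots,v_i$ does not destroy or create edges inside $L_i$ other than the backbone edges $e_1,\dots,e_{i-1}$, which $F$ does not use; so the edge set $F\cap L_i$ is unchanged by the contraction and equals $F\cap\widehat{L}_i$, and similarly for $F'$. Therefore setting $F':=(F\setminus L_i)\cup(F'\cap\widehat{L}_i)$ in $G$ gives a set of at most $k$ edges that agrees with $F$ outside $L_i$, still contains $v_iv_{i+1}$, and has $F'\cap L_i\subseteq\mathcal{L}_i$. It remains to check that $F'$ is still a multicut of the original \BM{} instance: it splits $Y$ and cuts every half-request. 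Since $F'$ and $F$ differ only inside the $x$-cherry $L_i$, and since $L_i\setminus\{x,v_1,\dots,v_i\}$ touches the rest of $G$ only through $v_i$ (which is contracted to $x$), any $Y$–$Y$ path or half-request path that was cut by $F$ and is no longer cut by $F'$ would have to use $L_i$ as a ``shortcut'', i.e. enter and leave through $x$; but all half-requests of $L_i$ routed via $y$ are automatically cut because $F'$ splits $Y$ (so $x$ and $y$ are separated), and those routed via $x$ are precisely the requests handled by the cherry-reduced terminal set $K_i$, which $F'\cap\widehat{L}_i$ still cuts because it is a valid cut of the cherry obtained from the active-set replacement. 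So connectivity outside is untouched and the half-request/splitting conditions are preserved.

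The main obstacle I expect is the bookkeeping around the definition of ``active'' and making sure the active set is the one attached to the contracted cherry, not to $L_i$ as a subgraph of $G$: $L_i$ is not itself a cherry in $G$ (it has the $v_j$'s on its boundary and $v_i$ is an attachment-like vertex, plus the whole backbone runs through it), so one must be careful that Theorem~\ref{boundedcut} was invoked on the correct family of cherries — namely the $x$-cherries $\widehat{L}_1\subseteq\widehat{L}_2\subseteq\cdots$ obtained after the respective contractions — and that the nesting $\mathcal{L}_j\cap L_i\subseteq\mathcal{L}_i$ for $i\le j$ really does follow from the compatible-cherry statement. A secondary subtlety is verifying that replacing $F\cap\widehat{L}_i$ by a left-cut border (implicit in the active-set machinery) does not reconnect $v_i$ to $y$ through $L_i$; this is exactly why the lemon $M_i$ is excluded from $L_i$ and included only on the right — a multicut selecting $v_iv_{i+1}$ may legitimately keep $M_i$ connected to $y$, so the active-set replacement must only be applied to the genuinely ``left'' part, which is what $L_i$ was defined to capture. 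Once these definitional matches are in place, the rest is the routine ``exchange argument'' above.
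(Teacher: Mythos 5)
Your proposal is correct and takes essentially the same approach as the paper's proof: the paper's entire argument is the one-line observation that $\mathcal{L}_i$ is active in the $x$-cherry obtained by contracting $x,v_1,\dots,v_i$ inside $L_i$, so the ``active'' property directly hands you a multicut $F'$ with $F'\setminus L_i=F\setminus L_i$ and $F'\cap L_i\subseteq\mathcal{L}_i$, which is by definition proper and still contains $v_iv_{i+1}$. You rederive this explicitly (contracting, applying Lemma~\ref{finitelycut}/Theorem~\ref{boundedcut}, translating back, and re-verifying the multicut conditions), but the core exchange argument and the objects used are the same as in the paper; the extra verifications in your last two paragraphs are essentially already absorbed into the definition of ``active'' and into the way $L_i$ and $\mathcal{L}_i$ were set up before the lemma statement.
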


\begin{proof} 
Consider a multicut $F$ containing $v_iv_{i+1}$. As the set $\mathcal{L}_i$ is active in the cherry obtained by contracting the path $x,v_1,\dots, v_i$ in $L_i$, there exists a multicut $F'$ of size $k$ such that $F'\setminus L_i = F\setminus L_i$ and $F'\cap L_i \subseteq {\mathcal L}_i$. Hence $F'$ is proper and contains $v_iv_{i+1}$.
\end{proof}

We denote by $\mathcal L$ the set of all subsets $F$ of size at most $k$
contained in some $ {\mathcal L}_i$. We denote by $c$ the maximum size of a set ${\mathcal L}_i$. Note that $c$ is bounded in terms of $k$.

For two sets $F_i\subseteq {\mathcal L}_i$ and $F_j\subseteq {\mathcal L}_j$ with $j\ge i$, let us write $F_i\preceq F_j$ when $F_j\cap  L_{i+1}\subseteq F_i$. Observe that $\preceq$ is a partial order. A subset ${\mathcal F}$ of $\mathcal L$ is \emph{correlated}
if:
\begin{itemize}
\item elements of ${\mathcal F}$ have the same size, and
\item ${\mathcal F}$ is a chain for $\preceq$, \ie{}if for every $F_i$ and $F_j$ in ${\mathcal F}$, with $F_i\subseteq {\mathcal L}_i$, $F_j\subseteq {\mathcal L}_j$ and $j\ge i$, we have $F_j\cap  L_{i+1}\subseteq F_i$. 
\end{itemize}

\begin{lemma}\label{Lpartition}
There is a partition ${\mathcal F}_1, {\mathcal F}_2,\dots ,{\mathcal F}_{k(2c)^k}$ of  $\mathcal L$ into $k(2c)^k$
correlated sets.
\end{lemma}

\begin{proof}
Let us prove by induction on $\ell=0,\dots,k$ that there exists no antichain for $\preceq$ in $\mathcal L$ consisting of $(2c)^{\ell}+1$ sets of size at most $l$. This clearly holds for $\ell=0$. Assume that this holds for $\ell-1$. By contradiction, let $A=\{F_1,F_2,\dots ,F_{(2c)^{\ell} +1}\}$ be an antichain of sets of size at most $\ell$. Let $t_i$ be an integer such that $F_i\subseteq{\mathcal L}_{t_i}$ for $i=1,\dots,(2c)^{\ell} +1$. We assume that the sets $F_i$ are enumerated in such a way that $t_i\leq t_j$ whenever $i\leq j$. The set $F_1$ is incomparable to all sets $F_i$ with $i>1$, hence
$F_i\cap L_{t_1+1} \nsubseteq F_1$ for all $i>1$. In particular it 
is nonempty, hence all sets $F_i$, for $i=1,\dots,(2c)^{\ell} +1$, have an edge in $L_{t_1+1}$. The sets $F_i$ such that $t_i=t_1$ have an edge in ${\mathcal L}_{t_1}$ by definition. The sets $F_i$ such that $t_i>t_1$ have an edge in  ${\mathcal L}_{t_1+1}$ as ${\mathcal L}_{t_i}\cap L_{t_1+1}\subseteq {\mathcal L}_{t_1+1}$, by definition of the sets ${\mathcal L}_{i}$. Since the size of ${\mathcal L}_{t_1}\cup {\mathcal L}_{t_1+1}$ is at most $2c$, there exists a subset $B$ of $A$ of size at least $(2c)^{\ell-1}+1$ of sets $F_i$ sharing a same edge $e\in {\mathcal L}_{t_1}\cup {\mathcal L}_{t_1+1}$. The set $\{F\setminus e | F\in B\}$ has size $|B|\ge (2c)^{\ell-1}+1$ and is an antichain of sets of size at most $\ell-1$ by definition of $\preceq$. This contradicts the induction hypothesis.

By Dilworth's Theorem, there exists a partition of $\mathcal L$ into $(2c)^k$ sets totally ordered by $\preceq$, which can be be refined according to the cardinality to obtain a partition into $k(2c)^k$ correlated sets. Such a partition can be found in FPT time.
\end{proof}

Let us now consider such a partition ${\mathcal F}_1, {\mathcal F}_2,\dots ,{\mathcal F}_{k(2c)^k}$ of $\mathcal L$ into correlated sets. Observe that by Lemma~\ref{proper} we can restrict our search to multicuts of the following type in $G_1$:

\begin{itemize}
\item A backbone edge $v_iv_{i+1}$.
\item Other edges in the lemon $M_i$, which separate $v_i$ from $v_{i+1}$ in $M_i$.
\item Edges in ${\mathcal L}_i$.
\item Edges in ${\mathcal R}_i$, which is defined analogously to ${\mathcal L}_i$, with the roles of vertices $x$ and $y$ reversed.
\end{itemize}


\begin{lemma}\label{onlycherries}
We can assume that there are no cherries $C_i$. Moreover, if a multicut of size at most $k$ exists, there exists one which contains only edges in one lemon $M_i$.
\end{lemma}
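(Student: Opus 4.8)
\textbf{Proof plan for Lemma~\ref{onlycherries}.}
The plan is to eliminate cherries first, and then to show that a minimum proper multicut can be confined to a single lemon. For the first part I would iterate over the edges $v_iv_{i+1}$ of the backbone from left to right and handle the cherry $C_i$ hanging at $v_i$. Since a proper multicut that selects $v_iv_{i+1}$ contracts $x,v_1,\dots,v_i$ to $x$, the left subgraph $L_i$ (which contains $C_i$) becomes an $x$-cherry, and by Theorem~\ref{cherryrequestreduction} together with Lemma~\ref{finitelycut} its relevant terminals and its relevant edges are already captured by the bounded active set ${\mathcal L}_i$; analogously $C_i$ viewed from the right is handled by ${\mathcal R}_{i+1}$. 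So the cherry $C_i$ only interacts with the rest of $G$ through $v_i$, and every half-request with an endpoint in $C_i$ routed via $x$ (resp.\ via $y$) is cut exactly when the corresponding edge is taken in ${\mathcal L}_i$ (resp.\ ${\mathcal R}_{i+1}$). This lets me absorb $C_i$ into the data already recorded by the ${\mathcal L}$'s and ${\mathcal R}$'s and then simply delete $C_i\setminus v_i$ from $G_1$, so that no cherries remain. Because each such removal strictly decreases an invariant polynomial in $n$ (the number of vertices), this is a legitimate reduction and terminates.

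For the second part, suppose a multicut $F$ of size at most $k$ exists. By Lemma~\ref{proper} we may take $F$ proper, so it selects some backbone edge $v_iv_{i+1}$ and satisfies $F\cap L_i\subseteq{\mathcal L}_i$; symmetrically, applying the right-hand analogue of Lemma~\ref{proper}, we may also assume $F\cap R_{i+1}\subseteq {\mathcal R}_{i+1}$. Now, after the cherry elimination, $G_1$ decomposes (relative to the chosen backbone edge) into the left part $L_i$, the lemon $M_i$, and the right part $R_{i+1}$, glued only at $v_i$ and $v_{i+1}$, and with all half-requests of $G_1$ routed through $x$ or through $y$ and cut by the splitting of $Y$ once $v_iv_{i+1}$ is deleted. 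I would then argue that the portion of $F$ lying in $L_i$ and in $R_{i+1}$ is superfluous: since $v_iv_{i+1}\in F$ already separates $x$ from $y$ inside $G_1$, and every half-request with a terminal in $L_i\cup R_{i+1}$ is of the form "via $x$" or "via $y$'' and is therefore cut by the $Y$-splitting, no edge of $F$ strictly inside $L_i$ or strictly inside $R_{i+1}$ is needed to cut a half-request; the only role such edges could play is to cut paths used by half-requests of \emph{other} components, but $Y$ is a $u,v$-cut for every half-request and the interface of $G_1$ with the rest of $G$ is entirely within $Y\cup\{v_i,v_{i+1}\}$, so removing them cannot uncut anything. Hence $F':=(F\setminus G_1)\cup\{v_iv_{i+1}\}\cup (F\cap M_i)$ is still a multicut, has size at most $|F|$, and its intersection with $G_1$ lies entirely in the single lemon $M_i$ (plus the backbone edge $v_iv_{i+1}$, which belongs to $M_i$ by definition).

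The main obstacle I anticipate is making rigorous the claim that the edges of $F$ inside $L_i$ and $R_{i+1}$ are genuinely removable. One has to be careful that $L_i$ and $R_{i+1}$ can themselves contain pieces with nontrivial structure (the sub-cherries $C_j$ for $j\le i$ and the lemons $M_j$ for $j<i$), and that half-requests through $x$ with both a terminal in $L_i$ and whatever routing constraints the half-request encodes are indeed all killed simply by contracting $x,v_1,\dots,v_i$ — this is exactly where the properness of $F$ and the definition of the active sets ${\mathcal L}_i$ are used, so the key is to phrase the removal as "replace $F\cap L_i$ by $\emptyset$'' and check, using that $L_i$ becomes an $x$-cherry and $Y$ is split, that every half-request previously cut by $F$ inside $L_i$ is now cut either by the backbone edge or by the $Y$-splitting. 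Once that is established, deleting $C_i\setminus v_i$ and retaining only ${\mathcal L}_i$, ${\mathcal R}_{i+1}$ and the lemon edges of $M_i$ is routine, and the lemma follows.
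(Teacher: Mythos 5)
The step that would fail is the claim that, once $v_iv_{i+1}$ is deleted, ``every half-request with a terminal in $L_i\cup R_{i+1}$ is of the form `via $x$' or `via $y$' and is therefore cut by the $Y$-splitting.'' That is false for half-requests routed via $x$. Splitting $Y$ only guarantees that distinct vertices of $Y$ lie in distinct components of $G\setminus F$; it says nothing about a terminal $u\in L_i$ and a terminal $v$ (in some other component attached to $x$) both remaining connected to $x$. Concretely, take a cherry $C_j\subseteq L_i$ containing a terminal $u$ with a half-request $(u,x,v)$, where $v$ lives in a $Y$-component $G_2$ attached at $x$ and $z$. After deleting $v_iv_{i+1}$ and splitting $Y$ (separating $x$ from $z$), both $u$ and $v$ may still lie in the component of $x$, so the half-request $(u,x,v)$ is not cut. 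Hence replacing $F\cap L_i$ by $\emptyset$ in your $F':=(F\setminus G_1)\cup\{v_iv_{i+1}\}\cup(F\cap M_i)$ genuinely breaks the multicut, and ``absorbing $C_i$ into the data recorded by ${\mathcal L}_i$'' does not cut any request by itself — ${\mathcal L}_i$ is just a candidate pool, not a chosen cut.

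The missing idea, which is the heart of the paper's proof, is the Dilworth partition of $\mathcal L$ into correlated chains ${\mathcal F}_1,\dots,{\mathcal F}_{k(2c)^k}$ (Lemma~\ref{Lpartition}). The algorithm does not discard $F\cap L_i$; it branches on which chain ${\mathcal F}_j$ this cut belongs to, exploits the chain structure to define, for each half-request $(u,x,v)$ with $u$ in a cherry or non-support lemon, the maximal support edge $v_av_{a+1}$ that \emph{affects} it, and then \emph{projects} $(u,x,v)$ to $(v_{a+1},x,v)$ (or to $(x,x,v)$ if no support edge affects it). This projection is exactly what encodes the pre-committed cut $F_a\subseteq{\mathcal L}_a$ without having to actually keep the cherry; only after that can $f_1$ be reduced by $\ell$ and the non-support backbone edges contracted. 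Your proposal does mention that properness and ${\mathcal L}_i$ ``are used somewhere,'' but without the chain decomposition and the projection step there is no mechanism that makes the cherry requests provably satisfied, so the reduction is not sound as written.
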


\begin{proof}
By Lemma~\ref{proper}, if there exists a multicut $F$ containing the backbone edge $v_tv_{t+1}$, then there is a proper multicut $F'$ containing $v_tv_{t+1}$. By definition $F'\cap L_t \subseteq {\mathcal L}$. 

We branch over the existence of a proper solution $F'$ such that $F'\cap L_t\in {\mathcal F}_j$ for $j=1,\dots,k(2c)^k$, where $t$ is the integer such that $v_tv_{t+1}\in F'$. Let us assume 
that we are in the branch where $F'\cap L_t\in {\mathcal F}_j$. A backbone edge $v_iv _{i+1}$ is in the \emph{support} of ${\mathcal F}_j$ if there is some $F_i\in {\mathcal F}_j$
such that $F_i\subseteq  {\mathcal L}_i$. When $v_iv _{i+1}$ is in the support we say that lemon $M_i$ is a \emph{support lemon}. In this case, there actually exists a unique set in ${\mathcal F}_j$, which we denote by $F_i$,
such that $F_i\subseteq  {\mathcal L}_i$, as ${\mathcal F}_j$ is totally ordered under $\preceq$. Let $\ell$ be the number of edges of elements of ${\mathcal F}_j$.

\begin{claim}\label{support}
For all $F_a \in \mathcal{F}_j$, if $M_i$ is a support lemon then $F_a\cap M_i=\emptyset$.
\end{claim}
\begin{proof}
As ${\mathcal L}$ contains no backbone edge by definition, it is enough to show that $u$ is not disconnected from
$v_i$ in $G_1\setminus F_a$. As $M_i$ is a support lemon, there exists a set ${F_i\in \mathcal F}_j$ such that $F_i\subseteq {\mathcal L}_i$. Consider a set $F_a\in {\mathcal F}_j$ with 
$F_a\subseteq {\mathcal L}_a$. If $a\leq i$, then $F_a\subseteq L_a\subseteq L_i$,
hence $F_a\cap M_i=\emptyset$. If $a \geq i$, then $F_a \cap L_{i+1} \subseteq F_i \subseteq L_i$ as ${\mathcal F}_j$ is correlated, hence $F_a\cap M_i=\emptyset$ holds as well. This completes the proof of Claim~\ref{support}.
\end{proof}

Consider now a vertex $u$ such that either $u$ belongs to some cherry $C_i$ or $u$ belongs to a lemon $M_i$ which is not a support lemon. An edge $v_av_{a+1}$  
in the support \emph{affects} a half-request $(u,x,v)$ if $a<i$ or if $i\le a$ and the unique set $F_a\in {\mathcal F}_j$ such that $F_a\subseteq {\mathcal L}_a$ separates $u$ from $x$ in $G_1$. 
If $v_av_{a+1}$ does not affect $(u,x,v)$, then neither does $v_bv_{b+1}$ when $b\ge a$. Indeed when $b\ge a$, $F_b\subseteq {\mathcal L}_b$ and $F_b\in {\mathcal F}_j$, we have that $F_b\cap L_a\subseteq F_a$.

Let us now modify the instance. If no edge of the support affects a half requests $(u,x,v)$, where $u$ belongs to some cherry $C_i$ or $u$ belongs to a lemon $M_i$ which is not a support lemon, we remove $(u,x,v)$ from $R$ and add the half-request $(x,x,v)$. Otherwise we let $v_av_{a+1}$ be the support edge with $a$ maximal which
affects $(u,x,v)$. We replace $(u,x,v)$ in $R$ by $(v_{a+1},x,v)$. We call this process \emph{projecting} the half-request $(u,x,v)$. After projecting all half-requests via $x$ with an endpoint in a cherry or in a lemon $M_i$ which is not a support lemon, we decrease $f_i$ by $\ell$
and contract every edge of $P_1$ which is not in the support of ${\mathcal F}_j$.
Note that if $v_iv_{i+1}$ is not in the support then there remains no half-request via $x$ in $M_i$.

Assume that $F'$ is a solution in this reduced instance which uses an edge
$v_av_{a+1}$ in the support. Let $F_a$ be the element of ${\mathcal F}_j$ such that $F_a\subseteq {\mathcal L}_a$. We have that $F'\cup F_a$ is a solution in the original instance. Indeed the requests in the support lemons are cut in $F' \cup F_a$ if and only if they are cut by $F'$ since $F_a$ does not intersect these lemons by Claim~\ref{support}. Also, the requests in the lemons which are not support lemons and in the cherries are cut in the reduced instance if and only if they are cut by $F_a$ in the initial instance by construction. 

Conversely, assume that $F$ is a proper solution in the original instance which
uses the edge
$v_av_{a+1}$ and such that  $F\cap L_a\in {\mathcal F}_j$. In particular $F_a= F\cap L_a$, hence $F\setminus F_a$ is a solution of the reduced instance. Indeed, all half-requests $(u,x,v)$ cut by $F_a$ in the original instance are affected by $v_av_{a+1}$, hence they have been projected to $(v_{i},x,v)$ with $i\ge a+1$, hence they are cut by $F\setminus F_a$ in the reduced instance.

The reduction, consisting in projecting all half-requests with an endpoint in a cherry or in a lemon which is not a support lemon, improves our invariant unless $\ell=0$, \ie{}unless the proper solution of the original instance
with backbone edge $v_iv_{i+1}$ does not use any edge in $L_i$. In this 
case, all the requests via $x$ of cherry $C_j$ are projected to $v_j$,
for all $j$. By the same argument, we can assume that no edge in a proper solution
is selected to the right of $M_i$ and that the half-requests via $y$ of $C_j$ 
are projected to $v_j$. 
Hence the cherries do not send any request, so we simply contract them.
We are only left with lemons, and we moreover know that if a solution
exists, then there exists one which uses only edges in a single lemon. This concludes the proof of Lemma~\ref{onlycherries}.
\end{proof} 

\begin{theorem}\label{newbackbone}
We can assume that $G_1$ only consists of the backbone $P_1$.
\end{theorem}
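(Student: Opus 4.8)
Recall where the argument stands. $G_1$ now consists of a backbone $P_1=(x=v_1,\dots,v_\ell=y)$ together with, for $1\le i\le \ell-1$, a lemon $M_i$ joining $v_i$ to $v_{i+1}$; there are no cherries; every $v_i$ is a cutvertex of $G_1$; and if a multicut of size at most $k$ exists, then one exists whose trace on $G_1$ is $\{v_tv_{t+1}\}\cup F_t$ for some $t$, where $F_t$ is a set of $f_1$ free edges lying inside $M_t$ and $\{v_tv_{t+1}\}\cup F_t$ is a $v_tv_{t+1}$-cut of $M_t$. As the $\lambda$ edge-disjoint $x$--$y$ paths all pass through the full vertices $v_t,v_{t+1}$, their subpaths between $v_t$ and $v_{t+1}$ lie inside $M_t$, so the $v_tv_{t+1}$-edge-connectivity of $M_t$ is at least $\lambda$; with the criticality of $v_tv_{t+1}$ it equals $\lambda$, whence $1+f_1\ge\lambda$ (otherwise we return FALSE). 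The plan is to reduce $G_1$ to $P_1$: the interior of a lemon $M_i$ only serves (i) to witness a $v_iv_{i+1}$-cut of size $\lambda$ and (ii) to host half-requests routed through $x$ or through $y$. I will discharge (ii) by re-routing such half-requests onto backbone vertices, and then (i) by deleting the now-bare lemon interiors while lowering $f_1$ to $0$.

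For the re-routing, consider a half-request $(u,x,v)$ with $u\in M_i\setminus\{v_i,v_{i+1}\}$. Every $u$--$x$ path leaves $M_i$ through $v_i$, so $(u,x,v)$ is cut iff the solution separates $u$ from $v_i$ inside $M_i$, or the half-request $(v_i,x,v)$ is cut; symmetrically for half-requests through $y$, with $v_{i+1}$ in place of $v_i$. If $i\ne t$ the solution deletes no edge of $M_i$, so $u$ stays joined to $v_i$ and to $v_{i+1}$: when $i<t$, $u$ is joined to $x$ and, since the solution splits $Y$, separated from $y$; when $i>t$ the reverse. In either case $(u,x,v)$ and $(u,y,v)$ are each equivalent to $(v_i,x,v)$ or $(v_{i+1},y,v)$, or are cut automatically, so both can be re-routed. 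For the used lemon $M_t$ one must know on which side of the cut each interior terminal lands; here I would appeal once more to Theorem~\ref{cherryrequestreduction} to bound the number of terminals of $M_t$ that send half-requests --- hence, by the bounded request degree, the number of such half-requests --- and then branch over the bounded family of side-assignments, re-routing accordingly in each branch. After this every lemon is free of interior terminals.

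A bare lemon $M_i$ then has to supply, when it is the used one, only a $v_iv_{i+1}$-cut of size $\lambda=1+f_1$, so I replace every $M_i$ by the single backbone edge $v_iv_{i+1}$ and set $f_1:=0$. This is an equivalence: a multicut of the reduced instance, augmented by a suitable set of $f_1$ free edges inside the restored $M_t$ completing $v_tv_{t+1}$ to a $v_tv_{t+1}$-cut (which exists since $1+f_1\ge\lambda$), is a multicut of the present instance; conversely, deleting from a multicut of the present instance its $f_1$ free edges of $G_1$ --- all of which lie inside the bare $M_t$ and serve only to separate $v_t$ from $v_{t+1}$ --- yields a multicut of the reduced instance, since deleting the backbone edge $v_tv_{t+1}$ of the path $P_1$ still separates $x$ from $y$ and the re-routed half-requests, living on the backbone, are cut exactly as before. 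Since $\lambda>1$ forces $f_1\ge\lambda-1\ge 1$, this reduction strictly decreases $\sum_i f_i$ and improves the primary invariant; and if $\lambda=1$ the lemons are already trivial, so $G_1=P_1$ without any work. Either way the statement follows.

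The delicate step is the re-routing inside the used lemon $M_t$. Since $M_t$ lies in none of the cherries $L_t,R_t$ obtained by conditioning on the selection of $v_tv_{t+1}$, bounding the number of relevant half-requests hosted by $M_t$ --- and checking that the side-branching, together with the contractions implementing a chosen side-assignment, neither destroys an existing multicut nor creates a spurious one --- is where the connectivity and irrelevant-request machinery of Section~\ref{sectionconnectivity} must be brought to bear once more.
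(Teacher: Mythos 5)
Your approach is structurally different from the paper's, and unfortunately the difference lands exactly on the hard case of the theorem, which you flag but do not resolve. The idea of ``re-route all half-requests to backbone vertices, then collapse every lemon to a single backbone edge'' is morally the right thing to aim for: the paper itself does a projection of exactly this kind in the final branch of its own proof (the case $f_1 \geq d + \lambda_S + \lambda - 1$, where all edges of $M'_i$ incident to $v_i$ and $v_{i+1}$ are deleted and every $M'_i$-terminal is projected to $v_{i+1}$ or $v_i$). The re-routing you describe for $u \in M_i$ with $i \neq t$ is sound for the reasons you give. But the whole difficulty of Theorem~\ref{newbackbone} is precisely the half-requests with endpoints in the single lemon $M_t$ where the $f_1$ free edges are deleted, and there your argument breaks down.

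Concretely: you propose to ``appeal once more to Theorem~\ref{cherryrequestreduction} to bound the number of terminals of $M_t$'' and then ``branch over the bounded family of side-assignments.'' But Theorem~\ref{cherryrequestreduction} applies to \emph{cherries}, which have a single attachment vertex; a lemon $M_t$ has two attachment vertices $v_t$ and $v_{t+1}$, and the theorem does not apply to it. Moreover, $t$ is not known in advance, and the number of lemons is unbounded in $k$, so one would need to bound terminals simultaneously in every lemon --- a reduction that the paper explicitly does \emph{not} have and instead avoids through a quite different mechanism. The paper's proof branches on a bounded partition $(WZ_x, WZ_y, WZ_u)$ of the $2(\lambda-1)$ path-endpoints $W \cup Z$ (applied uniformly across all lemons, which is legitimate because only one lemon carries deleted free edges), iterates a slice-contraction inside $S'_i$ until $v_i$ reaches bounded degree, re-applies Lemma~\ref{onlycherries}, and only then splits into two cases: when $f_1$ is large enough it projects and clears the lemon (the step you want to take unconditionally), and when $f_1$ is small it branches $2(\lambda-1)$ ways to find a new backbone $P_j$ and improve the invariant. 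Your proof has no analogue of this second case, and without it the claim ``after this every lemon is free of interior terminals'' is unjustified. The final equivalence between the instance and the bare-backbone instance with $f_1 := 0$ then also fails, since the $f_1$ free edges in $M_t$ must not only separate $v_t$ from $v_{t+1}$ but also cut whichever of the un-re-routed half-requests remain in $M_t$. In short, the part you label ``the delicate step'' is the content of the theorem, and it is missing.
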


\begin{proof}
We assume that $\lambda>1$ and show that we can improve our invariant.
Let us consider a backbone edge $v_iv_{i+1}$. We denote by $W$ the multiset of vertices $\{w_2,\dots ,w_{\lambda}\}$
where $w_j$ is the vertex of the slice $S_i$ of $v_i$ in $M_i$ which belongs to the
path $P_j$ and has a neighbor in $M_i\setminus S_i$.
In other words, $w_j$ is the rightmost vertex of each path $P_j$ in the slice of
$v_i$. These vertices $w_j$ are not necessarily distinct, for instance if $v_i$
has degree $\lambda$ in $M_i$, the slice $S_i$ is exactly $\{v_i\}$ hence
all $w_j$ are equal to $v_i$ for $j=2,\dots,\lambda$. We also denote by $Z=\{z_2,\dots ,z_{\lambda}\}$
the multiset of vertices of the slice $T_i$ of $v_{i+1}$ in $M_i$ which belong respectively to the
paths $P_2,\dots ,P_{\lambda}$ and have a neighbor in $M_i\setminus T_i$.

A multicut $F$ induces a partition of $W\cup Z$ according to the components of 
$G\setminus F$. A vertex of $W\cup Z$ has three possible \emph{types}: 
it can be in the same component as $x$ after the removal of $F$, in the same component
as $y$, or in another component. Observe that if two vertices $a,b$ of $W\cup Z$
belong to components distinct from the components of $x$ and $y$ in $G\setminus F$, then 
$F$ is still a multicut after contracting $a$ and $b$. Hence $F$ induces
a partition of $W$ into three parts which can be contracted, still
leaving $F$ as a multicut. We now branch over all partitions of 
$W\cup Z$ into three parts $WZ_x,WZ_y,WZ_u$, where $WZ_x$ are vertices which
are in the same component as $x$, $WZ_y$ are vertices which
are in the same component as $y$, and $WZ_u$ are vertices of the same type, possibly disconnected from $x$ and $y$ (but not necessarily so). We branch over 
all possible partitions of $W$ into $WZ_x,WZ_y,WZ_u$, and 
contract in each branch $WZ_x$ to $v_i$, $WZ_y$ to $v_{i+1}$, and $WZ_u$ (if not empty) is 
contracted to a single vertex called $u_i$. These contractions are performed
simultaneously in all $M_i$. We denote by $G_1'$ the resulting component, by $M'_i$ the 
contracted lemon $M_i$, and by $S'_i$ the contracted $S_i$. 

If some vertex of $W$ belongs to $WZ_y$, or if some vertex of $Z$ 
belongs to $WZ_x$, or if $WZ_u$ intersects both $W$ and $Z$, then the $xy$ 
edge-connectivity increases in $G'_1$ since there exists an $xy$-path in $G'_1$ without edges of $\lambda(x,y)$-cut in $G_1$. Hence, we improve our invariant, but we cannot directly
conclude since the edges of the backbone may not be critical
anymore. Indeed, it can happen that $M'_i$ has connectivity 
(between $v_i$ and $v_{i+1}$) less than the connectivity of another lemon $M'_j$ in which case 
the backbone edge $v_jv_{j+1}$ is not critical. To get a correct instance of \BM, we simply branch on the connectivity 
of the lemon $M_i'$ corresponding to the chosen edge $v_iv_{i+1}$.
In the branch corresponding to connectivity $l$, we contract the 
backbone edges $v_iv_{i+1}$ where $M_i'$ has connectivity distinct from $l$.

Hence we can assume without loss of generality that $W$ is partitioned into $WZ_u$ and $WZ_x$, and that $Z=WZ_y$.
Since we contract $WZ_y$ to $v_{i+1}$, we obtain that $v_{i+1}$
has now degree $\lambda$ in $M'_i$, and $T_i$ is a $v_{i+1}$-cherry. Let us assume that $WZ_u\neq \emptyset$.
Since we have contracted the vertices of $W$ to $v_i$ and $u_i$,
the set $S'_i$ has exactly two
vertices with a neighbour in $M'_i\setminus S'_i$, namely $v_i$ and $u_i$. 
Note that the degree of $v_i$ in $M'_i\setminus S'_i$ is exactly 
the number of vertices $w_j$ chosen in $WZ_x$ (with multiplicity since $WZ_x$ is a multiset). We denote it by $d$. It does
not depend on $i$ since we have chosen in every $M_i$ the same 
subset $WZ_x$ inside $\{w_2,\dots ,w_{\lambda}\}$.

Let $\lambda _S$ be the $v_iu_i$ edge-connectivity
in $S'_i$. If $\lambda _S> f_1$, we simply contract
$v_i$ and $u_i$ since one cannot separate $v_i$ from $u_i$. 
We branch in order to assume that $\lambda _S$ is some fixed
value. In the branch corresponding to connectivity $\lambda_S$, we contract backbone edges $v_iv_{i+1}$ where $S_i$ has connectivity distinct from $\lambda_S$. Let $P'_1,\dots ,P'_{\lambda_S}$ be a
collection of edge disjoint paths from $v_i$ to $u_i$ in $S'_i$. 
We denote by $S'$ the slice of $v_i$ in $S'_i$, and again
consider the rightmost vertices $W'=\{w'_1,\dots ,w'_{\lambda_S}\}$
of $S'$ in the paths $P'_j$. We branch again over all possible 
partitions of $W'$ into $W'_x,W'_y,W'_u$. Again if $W'_y$ is not empty,
we increase the connectivity between $x$ and $y$. Observe that $W'_u$ can be contracted to $WZ_u$,
hence to $u_i$. In particular if $W'_u$ is not empty,
we increase the connectivity between $v_i$ and $u_i$ in $S_i'$. We iterate this 
process in $S'_i$ until either $W'_u$ is empty in which case $v_i$ has degree
$\lambda_S$ in $S'_i$, or $\lambda _S$ exceeds $f_1$ 
in which case we contract $v_i$ and $u_i$.

We apply Lemma~\ref{onlycherries} on $G'_1$. Therefore, we
can assume that no cherries are left and that 
if a solution exists, one multicut is contained in some $M'_i$.
Two cases can happen:

If $f_1\geq d+\lambda_S+\lambda-1$, and $v_iv_{i+1}$ is chosen, then we can assume that the restriction of the 
multicut to $M'_i$ simply consists of all the edges incident to $v_i$ and $v_{i+1}$ 
in $M'_i$. Indeed $v_i$ is incident to $d+\lambda_S$ free edges, and $v_{i+1}$ is incident to $\lambda -1$ free edges. This is clearly the best solution since it separates all vertices
of $M'_i\setminus\{v_i,v_{i+1}\}$ from $v_i$ and $v_{i+1}$. Therefore, we project every request 
$(u,x,v)$ where $u\in M'_i$ to $(v_{i+1},x,v)$ and project every request 
$(u,y,v)$ where $u\in M'_i$ to $(v_{i},y,v)$. Finally we reduce $f_1$ to 
$0$ and we delete all vertices of $G'_1$ which are not in $P_1$. 

Assume now that $f_1< d+\lambda_S+\lambda-1$. We branch over $2(\lambda-1)$ choices, 
where the branches are named $B_j$ and $B'_j$ for all $j=2,\dots ,\lambda$.
In the branch $B_j$, we assume that only one
edge of our solution is selected in $P_j$, and this edge is critical. In the branch $B'_j$,
we assume that all the edges of our solution selected in $P_j$ are not critical. Observe that
in $B'_j$, we contract non critical edges of $P_j$ and improve our invariant.
In the branch $B_j$, we find a new backbone $P_j$. In this last case, we delete the edges 
of $P_1$ and reduce the number
of free edges to $f_1-1$. We also translate the clauses in terms of edges of the new backbone $P_j$. Indeed the number of edges in the backbone of $G_1$ has changed. Clauses of the form $c_1\le i$ become $c_1\le \epsilon(i)$ where $\epsilon(i)$ denotes the index of the rightmost edge of $P_j$ in the lemon $M'_i$. 

This branching process covers all the cases 
where $v_i=u_i$ since in this case $f_1<2\lambda-2$ and therefore one path $P_j$
contains only one edge of the multicut. In the case $v_i\neq u_i$, assume that a multicut $F$ is not of a type treated in one of our branches. In other words, $F$ contains at least two edges in each path $P_j$ for $j=2,\dots,\lambda$, and at least one of them is critical. Then $F$ contains two edges in each of 
the $d$ paths $P_j$ not containing $u_i$ since $F$ does not respect the branches $B_j$ for $j=2,\dots,\lambda$. Also, $F$ contains one edge outside $S'_i$ in each path $P_j$ containing $u_i$ since edges in $S'_i$ are not critical and $F$ is not treated in the branches $B'_j$. Thus $F$ contains
at least $2d+(\lambda-d-1)$ free edges outside
$S'_i$. Hence less than $\lambda _S$ edges of $F$ lie in $S'_i$, thus $v_i$ and $u_i$ belong to the same component in $G-F$. This case is covered in another branch in which $v_i$ and $u_i$ are contracted. Hence this branching process is exhaustive, and this completes the proof of Theorem~\ref{newbackbone}.
\end{proof}

\subsection{Reducing to 2-SAT}\label{subdividedmulticut}
We are left with instances in which the $Y$-components with two attachment vertices 
consist of backbones. We now reduce the last components.

\begin{lemma} \label{Ycherry}
We can assume that there is no component with one attachment vertex.
\end{lemma}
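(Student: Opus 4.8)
The plan is to exploit the fact that at this point every $Y$-component $D$ with a single attachment vertex $z$ is a $z$-cherry in which the multicut may delete only free edges --- indeed all edges of $D$ are free --- and whose number of terminals has already been brought under control. By Corollary~\ref{boundeddegree} every terminal of $D$ has bounded request degree, and by Theorem~\ref{cherryrequestreduction} we may assume that $D\setminus z$ contains only a bounded set $K(D)$ of terminals, all sending their requests ``through'' $z$ (since $Y$ is a cutset that must be split, and $z$ is the unique attachment vertex, every $xy$-path leaving $D$ passes through $z$, so each such request is really a half-request via $z$). First I would invoke Lemma~\ref{finitelycut} to replace the relevant edges of $D$ by a bounded active set ${\mathcal L}(D)$: any multicut can be modified, without increasing its size and without touching edges outside $D$, so that its restriction to $D$ is the border of a left cut lying inside ${\mathcal L}(D)$, i.e. it is one of a bounded number of candidate edge sets inside $D$.

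Next I would branch. For each component $D$ with one attachment vertex, the solution deletes some number $f_D$ of free edges in $D$, and $\sum_D f_D$ is bounded by $k$ minus the edges already committed elsewhere; moreover by Lemma~\ref{finitelycut} the restriction of the solution to $D$ is one of at most $|{\mathcal L}(D)|$-choose-$f_D$ possible sets, a bounded number. The branching is therefore over the finitely many ``local solutions'' inside each one-attachment component: we guess, for every such $D$, exactly which subset $F_D\subseteq {\mathcal L}(D)$ the multicut uses there. Since the total number of free edges available in one-attachment components is at most $k$, in fact only at most $k$ of these components can receive a nonempty $F_D$; so the branching is over the choice of which (at most $k$) components are ``active'', together with the bounded local solution in each active one. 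In the branch we have fixed, we check that each guessed $F_D$ does cut all requests with an endpoint in $D\setminus z$ that are routed within $D$ (this is a purely local feasibility check inside the cherry), we delete $F_D$ from $G$, we decrease $k$ and the budget $f$ by $|F_D|$, and then we contract each one-attachment component $D$ onto its attachment vertex $z$: since every request of $D$ either is cut locally by $F_D$ or is a half-request via $z$, after contraction the surviving requests become half-requests (or ordinary requests) based at $z$, which fits the \BM{} format, and no new component with one attachment vertex is created.

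The main obstacle I anticipate is bookkeeping rather than a genuine mathematical difficulty: one has to be careful that the branching is genuinely bounded --- this is where the ``at most $k$ active components'' observation is essential, since $G$ may contain arbitrarily many one-attachment components --- and that the reduction is an honest reduction, i.e. every multicut of the original instance projects (via Lemma~\ref{finitelycut}) to a multicut of some branch, and conversely a solution of a branch lifts back by re-adding the guessed $F_D$'s. One also has to record how requests get rerouted through the attachment vertex $z$ so that the resulting instance is a valid \BM{} instance, in particular that $Y$ remains a cut for every resulting half-request, which holds because $z\in Y$ separates $D\setminus z$ from the rest of $G$. Finally I would note that this reduction strictly decreases the number of $Y$-components with one attachment vertex (each is either removed or merged into $Y$), so it terminates, and all computations performed are FPT by Theorem~\ref{cherryrequestreduction}, Lemma~\ref{finitelycut} and Theorem~\ref{boundsize}.
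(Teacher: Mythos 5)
Your plan treats each one-attachment $Y$-component $D$ individually as a cherry and argues that after contracting it onto its attachment vertex $z$, every surviving request becomes a half-request based at $z$. The genuine gap is that you never address \emph{requests (half-requests) whose two endpoints lie in two different one-attachment components attached to the same vertex $z\in Y$}. Such a request $(u,z,v)$ with $u\in D_1\setminus z$ and $v\in D_2\setminus z$, where $D_1$ and $D_2$ are both attached only to $z$, cannot be decided by any ``purely local feasibility check inside the cherry'' $D_1$: a multicut may legitimately cut it by isolating $v$ inside $D_2$ while leaving $u$ connected to $z$. Worse, if neither $u$ nor $v$ is separated from $z$ then, after your contraction of $D_1$ and $D_2$ onto $z$, both endpoints collapse to $z$ and the request silently degenerates, so a branch in which the guessed $F_{D_1},F_{D_2}$ actually fail to cut $(u,z,v)$ would be incorrectly reported as feasible.

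The paper avoids this by working at a coarser granularity and by a preliminary branching step you are missing. It calls $y_i\in Y$ \emph{safe} when no pair of one-attachment components hanging from $y_i$ carries an internal request between them; if $y_i$ is unsafe, it branches on which of the two offending endpoints is added to $Y$, thereby converting one component into a two-attachment component (new backbone, free-edge budget drops, invariant improves). Only once every $y_i$ is safe is the \emph{union} of all one-attachment components at $y_i$ a genuine $y_i$-cherry with no internal requests, and only then does Lemma~\ref{finitelycut} apply to give a bounded active set for the whole $y_i$-cherry, over which one can branch and then delete/contract. To repair your proof you would need to add this safety branching (or an equivalent mechanism) before the per-$z$ reduction, and you should form the cherry at each $y_i\in Y$ from the union of the components attached to it, not from each component separately, so that the request-reduction and active-set machinery sees all of $z$'s attached material at once. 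Your ``at most $k$ active components'' bookkeeping and the use of Lemma~\ref{finitelycut} are otherwise in line with the paper's second phase.
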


\begin{proof}
Let $Y=\{y_1,\dots ,y_p\}$ and $k$ be the number of free edges in our multicut. A vertex $y_i \in Y$ is \emph{safe} if for each pair of components attached only to $y_i$ there are no internal requests. If $y_i$ is not safe then there is a request $(u,y_i,v)$ in the union of the two components attached to $y_i$, hence $y_i$ must be disconnected from $u$ or from $v$ by the solution. We explore one branch where $u$ is added to $Y$, and one branch where $v$ is added to $Y$. This creates a component with two attachment vertices. This component has a backbone, and then the number of free edges decreases.

Hence we can assume that all the vertices of $Y$ are safe. The $y_i$-cherry is the union of all the components attached to $y_i$. We branch over all possible integer partitions of $k$ into a sum $k_1+k_2+\dots +k_p=k$. In each branch, we require that $k_i$ edges are deleted in the $y_i$-cherry for $i=1,\dots ,p$. By Lemma~\ref{finitelycut}, the $y_i$-cherry has a bounded active set $\mathcal{L}_i$, hence in the $y_i$-cherry we can consider only a bounded number of cuts of size $k_i$: all subsets of $\mathcal{L}_i$ of size $k_i$.
We then branch over these different choices. In a given branch, we delete a particular set of edges $F_i$ in the $y_i$-cherry. Thus, we delete the vertices of the $y_i$-cherry isolated from $y_i$ by $F_i$, and contract the other vertices of the $y_i$-cherry to $y_i$. Finally, no $Y$-cherry remains.
\end{proof}

\begin{theorem} \label{subdividedFPT}
Multicut is FPT.
\end{theorem}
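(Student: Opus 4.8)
The plan is to deduce this from Theorem~\ref{multitomultiwithback} and the structural reductions of the present section. By Theorem~\ref{multitomultiwithback} it suffices to solve \CM{} in FPT time, and since \CM{} embeds into \BM{} with an empty clause set (each request simulated by one or two half-requests), it is enough to prove Theorem~\ref{redmultpaths}, \ie{} that \BM{} is FPT. So I would take an arbitrary \BM{} instance and run the reductions of the present section: Theorem~\ref{newbackbone} lets us assume each $Y$-component with two attachment vertices is a bare backbone path $P_i$ carrying no free edge, and Lemma~\ref{Ycherry} removes every $Y$-component with one attachment vertex. Each step either strictly improves one of our bounded invariants (sum of the $f_i$, free connectivity, slice connectivity) or branches into a bounded number of cases, so in FPT time we are reduced to a bounded family of \emph{fully reduced} instances, and it suffices to decide each one in polynomial time.

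In a fully reduced instance, $G$ is the subdivision of a graph on the vertex set $Y$ with a bounded number of edges, one subdivided edge per backbone; a candidate solution is nothing more than a choice, for each backbone $P_i$, of which of its $\ell_i$ edges to delete, \ie{} a value of the variable $c_i$. Splitting $Y$ is then automatic: deleting one edge from each backbone disconnects its two attachment vertices, and since every edge of $G$ lies on a backbone, this shatters $Y$ into singletons. Thus the question becomes whether there is an assignment of $c_1,\dots ,c_q$ satisfying the clause set $\mathcal C$ and cutting every half-request, and I claim this is a 2-SAT instance in disguise.

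The encoding: for each backbone $P_i$ and each threshold $t\in\{1,\dots ,\ell_i\}$ introduce a Boolean variable standing for ``$c_i\le t$'', together with the monotonicity clauses ``$c_i\le t$''$\Rightarrow$``$c_i\le t+1$'' and the unit clause ``$c_i\le \ell_i$''; any satisfying assignment then encodes a well-defined value of $c_i$. Each clause of $\mathcal C$, being of one of the four forms $(c_i\le a\Rightarrow c_j\le b)$ etc., is literally a 2-clause in these variables (reading ``$c_i\ge b$'' as the negation of ``$c_i\le b-1$''). Finally, for a half-request $(u,y,v)$ with $y\in Y$, the endpoints $u$ and $v$ lie on backbones or in $Y$, and all the paths at issue pass through $y$; so ``$F'$ cuts $(u,y,v)$'' unwinds to a disjunction of at most two threshold conditions --- ``the deleted edge of the backbone carrying $u$ separates $u$ from $y$ along that backbone'', or ``$v$ is separated from $y$'' (vacuous when $v\in Y\setminus\{y\}$ since $Y$ is split, and otherwise a single threshold on the backbone carrying $v$) --- hence again a 2-clause. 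The resulting 2-SAT instance has polynomial size and is solvable in linear time, and its satisfying assignments correspond exactly to the multicuts of the required form. This completes the proof of Theorem~\ref{redmultpaths}, and together with Theorem~\ref{multitomultiwithback} shows that \M{} is FPT.

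The step I expect to be the main obstacle is justifying that last translation: one must verify that, after the reductions, the cutting condition of every half-request really collapses to a disjunction of only two threshold conditions --- in particular that ``via $y$'', combined with $Y$ being split and with the pure-backbone structure, leaves no $u$--$v$ path able to dodge both thresholds --- and that the endpoints of the half-requests produced by the projections in Lemmas~\ref{onlylemons} and~\ref{onlycherries} and in Theorem~\ref{newbackbone} indeed end up on backbone vertices or in $Y$, which is precisely what makes these threshold descriptions available.
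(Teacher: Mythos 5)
Your proposal is correct and follows essentially the same route as the paper: reduce to \BM{} via Theorem~\ref{multitomultiwithback}, use Theorem~\ref{newbackbone} and Lemma~\ref{Ycherry} to strip the instance down to a subdivision of a bounded graph with only backbone edges remaining, and encode the choice of one edge per backbone as a 2-SAT instance. The only cosmetic difference is the 2-SAT encoding: you use the standard order encoding with a single family of Booleans ``$c_i\le t$'' plus monotonicity and the unit clause ``$c_i\le\ell_i$'', whereas the paper keeps two families ``$x_i\ge a$'' and ``$x_i\le a$'' related by cross-implications; both are equivalent and the concern you flag at the end is in fact resolved exactly as you anticipate, since every projection in Lemmas~\ref{onlylemons} and~\ref{onlycherries} and Theorem~\ref{newbackbone} places half-request endpoints on backbone vertices or in $Y$, and the invariant that $Y$ separates the two endpoints of each half-request rules out the degenerate same-backbone case.
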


\begin{proof}

By Lemma \ref{Ycherry}, we are only left with a graph $G$ which is a subdivision of 
a graph with at most $k$ edges, and a multicut must consist of exactly
one edge in each subdivided edge. 
Let us consider a half-request $(v_i,x,v'_j)$. Assume without loss of generality 
that $v_i\in G_1$, $v'_j\in G_2$, and $x$ belongs to $G_1$ and $G_2$ (if $x$ does not belong 
to $G_1$ or $G_2$, then splitting $Y$ automatically results in cutting the half-request $(v_i,x,v'_j)$).
For simplicity, we assume that the edges
of both $P_1$ and $P_2$ are enumerated in increasing 
order from $x$. We add to $\mathcal C$ the clauses $x_1\geq  i\Rightarrow x_2\leq j-1$
and $x_2\geq  j\Rightarrow x_1\leq i-1$. We transform all the half requests in this way.
Hence we are only left with a set of clauses which we have to satisfy.

We add all the relations $x_i\geq a\Rightarrow x_i\geq a-1$ and $x_i\leq a\Rightarrow x_i\leq a+1$ and $x_i\geq a\Rightarrow \lnot (x_i\leq a-1)$ and $x_i\leq a\Rightarrow \lnot (x_i\geq a+1)$.
We now have a 2-SAT instance which is equivalent to the original multicut instance. As 2-SAT is solvable in polynomial time, this shows that \BM{} is FPT. Hence the simpler \CM{} problem is FPT. Together with Theorem~\ref{multitomultiwithback} which reduces \M{} to \CM, this concludes the proof of Theorem~\ref{subdividedFPT}.
\end{proof}


\section{Improving the running time to single exponential}\label{time}

The main problem to get a single exponential bound for our FPT
algorithm is Theorem~\ref{extractleft} which uses $\Delta$-systems.
Let us improve this bound by the following result.

\begin{theorem}\label{ERnew}
Every set $K$ with at least $k'^{{k+2\choose 2}-1}$ vertices of $G$ contains a subset 
$K'$ of size $k'$ such that every left cut $S$ with
$\delta(S)\leq k$ satisfies either $S\cap K'=\emptyset$ or $|K'\setminus S|\leq k$.
The set $K'$ can be computed in FPT single exponential time.
\end{theorem}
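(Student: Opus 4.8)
The plan is to avoid $\Delta$-systems entirely and instead exploit the min-cut lattice already packaged in Claim~\ref{init}. The key observation is this: if $y$ is a vertex with $\lambda(x,y)\le k$, let $s(y)$ be the intersection of all minimum $xy$-cuts; by (the proof of) Claim~\ref{init}, $s(y)$ is an indivisible left cut with $\delta(s(y))=\lambda(x,y)\le k$, and it contains \emph{every} left cut $S$ with $y\notin S$. Thus a left cut can isolate $y$ from $x$ only if it is contained in $s(y)$. Say $y$ and $y'$ are twins if $s(y)=s(y')$. Then twins are isolated from $x$ by exactly the same left cuts (of arbitrary border, divisible or not): if $y\notin S$ and $S$ is a left cut then $S\subseteq s(y)=s(y')$, and $y'\notin s(y')$, so $y'\notin S$; symmetrically. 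So if $K$ already contains $k'$ pairwise twins we are done for a very strong reason (using Lemma~\ref{divind} for the divisible case), and from now on I would assume every twin class has fewer than $k'$ vertices.

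First I would reduce, exactly as in the proof of Theorem~\ref{extractleft}, to the indivisible case with the stronger conclusion: it suffices to find $K'\subseteq K$ of size $k'$ such that every indivisible left cut $S$ with $\delta(S)\le k$ isolates none, one, or all of $K'$ from $x$; the stated conclusion then follows by applying Lemma~\ref{divind} to the at most $k$ components of $G\setminus S$, which is precisely what turns ``one'' into ``at most $k$''. By the observation above this in turn reduces to a purely combinatorial statement: find $y_1,\dots,y_{k'}\in K$ with $s(y_i)\cap s(y_j)\cap\{y_1,\dots,y_{k'}\}=\emptyset$ for all $i\ne j$. Indeed an indivisible left cut $S$ isolating two of the $y_i$ is contained in $s(y_i)\cap s(y_j)$, hence is disjoint from $\{y_1,\dots,y_{k'}\}$, hence isolates all of $K'$. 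Equivalently, I want all chosen representatives to lie in the common intersection $\bigcap_i\overline{s(y_i)}$.

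I would handle this combinatorial part by induction on $k$. Vertices with $\lambda(x,\cdot)>k$ are ``free'': every cut of border $\le k$ keeps them on the $x$-side, so if $k'$ of them occur they already form a valid $K'$. Otherwise almost all of $K$ has $\lambda(x,\cdot)\le k$, and pigeonholing on the at most $k$ possible values restricts to a large set $K_0$ with a common value $\lambda_0\le k$; on $K_0$ the innermost separators of distinct non-twin vertices form an antichain for inclusion, since $s(y)\subsetneq s(y')$ would make $s(y)$ a minimum $xy'$-cut and hence force $s(y')\subseteq s(y)$. Within $K_0$ I would then, via a sunflower-type selection on the $\le k$-edge \emph{boundaries} $\Delta(s(y))$ combined with the antichain/submodularity structure, pass to a large subfamily whose innermost separators share enough structure to ``peel off'' a common minimum cut $S_0$ of border $\lambda_0$, reducing the effective parameter by at least one and recursing inside $\overline{S_0}$; iterating over the $\le k$ layers, each at cost polynomial in $k'$, yields the bound $k'^{\binom{k+2}{2}-1}$. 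All ingredients — minimum-cut computations, the selections (bounded by Theorem~\ref{boundsize}), and a recursion of bounded depth — run in single-exponential FPT time.

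The hard part, and where care is needed, is the recursion just sketched. The obstacle is that an indivisible left cut $S$ of border $k$ need not respect the peeled-off cut $S_0$: since $\overline{S}$ is connected, $S$ can weave through both sides of $S_0$ and so does not split into an ``inside-$S_0$'' cut plus an ``outside-$S_0$'' cut. What one must show is that, after the selection, the trace on the already-processed region of every relevant $S$ coincides with one of a bounded family of cuts — the analogue of a sunflower core — so that the recursive call on $\overline{S_0}$ genuinely controls the whole configuration; and one must account for the per-layer losses tightly enough to land on $\binom{k+2}{2}-1$ rather than a larger exponent. A secondary point is the single-exponential FPT time for computing the $s(y)$ and performing the selections, but this follows from the minimum-cut machinery already developed and from Theorem~\ref{boundsize}.
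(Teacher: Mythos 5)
Your reductions are correct as far as they go: the ``twins'' observation (from Claim~\ref{init}, every left $xy$-cut is contained in the innermost minimum cut $s(y)$, so vertices with $s(y)=s(y')$ are excluded by exactly the same left cuts), the passage to the indivisible case via Lemma~\ref{divind}, and the restatement as a combinatorial condition on the sets $s(y_i)$ are all sound. But you flag the gap yourself in your last paragraph, and it is genuine: the proposal never supplies a working induction step. The plan to ``peel off'' a minimum cut $S_0$ and recurse inside $\overline{S_0}$ does not go through, for exactly the reason you note — an indivisible left cut $S$ of border at most $k$ need not restrict to a cut on either side of $S_0$, and the sunflower-core statement needed to control the traces of such $S$ on the already-processed region is asserted but never proved.

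The paper sidesteps the weaving problem entirely by never passing to a subgraph. It inducts on $k$ while staying in $G$ rooted at $x$, and splits on whether there exists an indivisible left cut $S$ with $\delta(S)\le k$ and $|K\setminus S|\ge k'^{{k+1\choose 2}-1}$. If yes, it applies the induction hypothesis (with parameter $k-1$) to the set $K\setminus S$ in the \emph{same} graph; the step that handles what you call weaving is Lemma~\ref{smallsizecut}: for any left cut $S'\ne S$ of border $k$, the set $S\cup S'$ is a left cut of border at most $k-1$ (Lemma~\ref{unioncut} plus Lemma~\ref{smallsizecut}), and since $K'\subseteq K\setminus S$ one has $K'\setminus S'=K'\setminus(S\cup S')$, so the conclusion for $S'$ follows from the induction hypothesis applied to $S\cup S'$. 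If no such $S$ exists, then for each $v\in K$ each of the at most $k!$ cuts in $C^v_{\le k}$ (Theorem~\ref{boundsize}) excludes fewer than $k'^{{k+1\choose 2}-1}$ members of $K$, so the auxiliary graph on $K$ joining $v,v'$ whenever some indivisible left cut of border at most $k$ excludes both has degree less than $k!\cdot k'^{{k+1\choose 2}-1}\le k'^{{k+1\choose 2}+k-1}$, and a greedy independent set of size $k'$ does the job using ${k+2\choose 2}-{k+1\choose 2}-k=1$. Your twin and antichain observations are compatible with this, but they do not substitute for it: the essential tool is the union operation on left cuts, which strictly reduces the border and thereby lowers the parameter without ever leaving $G$. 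Without invoking that (or an equivalent device) the recursion you sketch cannot be made rigorous, and the exponent $\binom{k+2}{2}-1$ remains unjustified.
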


\begin{proof}
Observe that the result trivially holds when $k'\leq k$. So we
can assume $k'>k$.
Recall that $G$ is rooted at $x$. We prove the result by induction 
on $k$. This is clear for $k=1$ since the complement of left cuts forms a collection of disjoint sets, hence inducing a partition
of $K$. We have either a class $K'$ of this partition containing at least 
$\sqrt{|K|}$ elements, or one can find a set $K'$ of size at least 
$\sqrt{|K|}$ which elements are chosen in different classes. In both cases 
$K'$ satisfies the induction hypothesis. Assuming $k>1$, two cases can happen:

\begin{itemize}
\item There exists a left indivisible cut $S$ with $\delta (S)\leq k$ and 
$|K\setminus S|\geq k'^{{k+1\choose 2}-1}$. By induction, we extract from $K\setminus S$ 
a subset $K'$ with size $k'$ such that every left cut $S$ with
$\delta(S)\leq k-1$ satisfies either $S\cap K'=\emptyset$ or $|K'\setminus S|\leq k-1$.
To conclude, let us consider a left cut $S'$ with $\delta (S')=k$. If $S'=S$,
$S'$ isolates $K'$, hence we assume that $S'$ is distinct from $S$. Observe
that $K'\setminus S'$ is equal to $K'\setminus (S\cup S')$. Since $S\cup S'$
is a left cut with border at most $k-1$, we then have $K'\setminus S'$ is either
$K'$ or has size at most $k-1$. Hence our conclusion holds.

\item Assume that all left indivisible cuts $S$ with $\delta (S)\leq k$ satisfy 
$|K\setminus S|<k'^{{k+1\choose 2}-1}$. Let us form a graph $H$ with vertex set $K$ and
where $vv'$ is an edge when there exists a left indivisible cut $S$ 
with $\delta (S)\leq k$ such that $\{v,v'\}\cap S=\emptyset$. The degree
of a vertex $v$ of $H$ is less than $d:=k'^{{k+1\choose 2}-1}.k!$ since the number of 
left indivisible $xv$-cuts with border at most $k$ is at most $k!$. Note
that $d$ is less than $k'^{{k+1\choose 2}-1}.k'^k$ since $k'\geq k$. So there
is a stable set $K'$ in $H$ of size at least $|K|/d$, i.e. at least $k'$ since ${k+2\choose 2}-{k+1\choose 2}-k=1$.
Observe that every indivisible left 
cut $S$ with $\delta (S)\leq k$ isolates at most one vertex of $K'$. Hence 
every left  
cut $S$ with $\delta (S)\leq k$ isolates at most $k$ vertices of $K'$.
\end{itemize}
\end{proof}

The function $h$ of Theorem~\ref{extractright} is at most $\ell ^{O(\ell)}$, so the 
function $b$ in Theorem~\ref{cherryrequestreduction} is at most $k^{O(k)}.(k^{O(k)})^{k^2}=k^{O(k^3)}$.
So each node of our branching algorithm is computed in FPT time
with a single exponent. The depth being polynomial, the overall 
complexity has a single exponent.

\section{Vertex Multicut is FPT}\label{vertex}

We propose here a sketch of a translation of our proof for edge-multicut
in terms of vertex-multicut.
The proof has the same outline, hence we just explain how the notions
introduced for
edge-multicut can be transferred to the vertex-multicut setting. We
prove that the
following version of Multicut is FPT.

\begin{quote}
\noindent
\VM{}:\\
\textbf{Input}: A graph $G$, a set of requests $R$, a subset of vertices
$S$, an integer $k$.\\
\textbf{Parameter}: $k$.\\
\textbf{Output}: TRUE if there is a vertex-multicut of size at most $k$
which does not intersect $S$, otherwise FALSE.
\end{quote}

Observe that this is equivalent to the standard version of \VM{} when the set $S$
is empty. Let us now explain how we can translate the results of the
previous sections
for \VM{}.

For Section \ref{sectionconnectivity}, the results are based on the
submodularity of edge cuts. The vertex cuts being also submodular, we can
transfer the
results for vertices. Here an indivisible $xy$-cut is a
set of
vertices $K$ which deletion separates $x$ from $y$ such that no strict
subset of $K$
separates $x$ from $y$. For the reduction from \VM{} to \CM{}, the proof
is essentially the
same. One particularity of \VM{} is the following. When we contract
vertices, we
have to add the resulting vertex to $S$, the set of non deletable
vertices. Let $Y$
be the vertex-multicut of size $k+1$ given by iterative compression. We can
branch to decide which vertices of $Y$ belong to the solution and
then branch over the possible contractions of the set $Y$. Hence we can
assume that $Y \subseteq S$. Notice that we have to replace ``we add a
vertex to $Y$'' by ``we branch to know if the vertex is added to $Y$ or if
it belongs to the solution''. The connectivity between $x$ and $y$ is
the maximum number of paths between $x$ and $y$ which are disjoint on the set of deletable
vertices. The connectivity can be calculated by flows with weight $1$ for deletable
vertices and $\infty$ for non-deletable vertices. A vertex of
$\lambda$-cut is a deletable vertex which deletion
decreases the connectivity. In the vertex-multicut context,
a \emph{backbone} is a path in which only one vertex is deleted and
where every odd vertex belongs to the set $S$. In addition all the
vertices of the backbone are vertices of $\lambda$-cut.

To prove the existence of a backbone, we  have to generalize Lemma
\ref{degreexy}. The border of the slice of $x_i$ has size at most $k$ but the number of
vertices which touch
this border can be arbitrarily large. We can branch to know if a vertex is
deleted
in the slice. If this is not the case then the slice can be contracted to
$x_i$, hence
$x_i$ has only $\lambda$ neighbours. Otherwise we can branch to know if
each vertex in
the border is in the component of $y_i$ or a new component. In each of
these case
the invariant improves. Hence the only case which is uncovered is the case
when all
the vertices are with $x$. We can contract $x$ with the border of its
slice and we
have a cherry in which we have to delete vertices. By Lemma
\ref{finitelycut}, we can
bound the number of possible cuts. We can branch over these cuts and
decrease the
deletion allowance.

\begin{figure}\center\includegraphics[scale=.75]{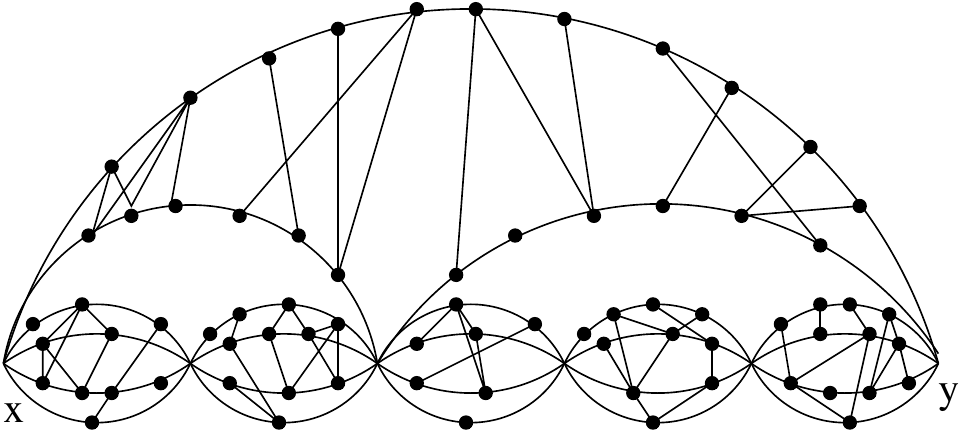}\caption{The
transformation into lemons for \VM. The backbone is the path at the
bottom.}\label{vertexlemons}\end{figure}

Let us now explain how we can prove that the \BM{} problem is FPT. A key notion
of Section~\ref{trlemon} is the notion of full vertex. We have to modify this
notion since contracted vertices are not deletable. Hence all the vertices
of the backbone cannot be full as for Edge-Multicut. Actually, we
transform the instance into an instance where the non-deletable vertices
are full (see Figure~\ref{vertexlemons}). The slice $S(v)$ of a
non-deletable vertex $v$ is the connected component of $v$ in $G$ minus the
vertices of $\lambda$-cut. We define the tag as for
edge-multicut. A vertex $v$ of the backbone is \emph{$X$-stable} if it can
be deleted, and the tags of its two neighbours are $X$ and the tag after
the contraction of $v$ with its two neighbors is still $X$. As for
edge-multicut, we can assume that we delete an $X$-stable vertex in the
backbone. We can similarly define classes for Lemma
\ref{seqlemons}, and remark that one class does not intersect the solution.
All the vertices of each slice in this class can be contracted. This ensures that we
can assume that all the vertices which are non deletable are full. We can write as in Lemma
\ref{onlylemons} that a non-deletable vertex is \emph{left} (resp. \emph{right})
\emph{clean} if the vertex to its left (resp. right) is $X$-stable and
then we can assume that $X=\{1,...,\lambda\}$ as for edge-multicut.

In the reduction
of the lemons for Vertex-Multicut, we cannot contract $x$ with the border of its slice since it does not ensure that the degree of $x$ is $\lambda$. Hence we have to contract $x$ with vertices of its slice which touch the vertices of the border. The set of such vertices can be restricted to a bounded size with Lemma~\ref{finitelycut}. Hence the same inductive method used for edge-multicut also holds.

\bibliographystyle{plain}
\bibliography{thebibliography}

\end{document}